\documentclass[journal,twoside,web]{ieeecolor}

\usepackage{generic}
\usepackage{cite}
\usepackage{amsmath,amssymb,amsfonts}
\usepackage{graphicx}
\usepackage{algorithm}
\usepackage{hyperref}
\hypersetup{hidelinks=true}
\usepackage{amsmath,amssymb}

\usepackage{amsmath}

\usepackage{bm}
\usepackage{psfrag}
\usepackage{epsfig}
\usepackage{graphicx,subfigure}
\usepackage{mathrsfs}
\usepackage{url}
\usepackage{mathrsfs}
\usepackage{amsthm}
\usepackage{bbm,color}
\usepackage{pifont}
\usepackage{multirow}
\usepackage{supertabular}
\usepackage{booktabs}
\usepackage{overpic}
\usepackage{algorithm}
\usepackage{algpseudocode}
\usepackage{amsmath}
\usepackage{graphics}
\usepackage{graphicx}
\usepackage{epsfig}
  \usepackage{stfloats}
 
\usepackage{stmaryrd}
\usepackage{cuted}
\usepackage{textcomp}
\def\RR{{\mathbb R}}    

\newcommand{\overbar}[1]{\mkern 1.5mu\overline{\mkern-1.5mu#1\mkern-1.5mu}\mkern 1.5mu}

\def\bx{\textbf{x}}

\def\cK{\mathcal{K}}
\def\cL{\mathcal{L}}
\def\cKL{\mathcal{KL}}

\newtheorem{definition}{\textbf{Definition}}
 \newtheorem{assumption}{\textbf{Assumption}}
 \newtheorem{proposition}{\textbf{Proposition}}
  \newtheorem{corollary}{\textbf{Corollary}}
 \newtheorem{lemma}{\textbf{Lemma}}
 \newtheorem{theorem}{\textbf{Theorem}}
 \newtheorem{remark}{\textbf{Remark}}
  \newtheorem{example}{\textbf{Example}}

\def\BibTeX{{\rm B\kern-.05em{\sc i\kern-.025em b}\kern-.08em
    T\kern-.1667em\lower.7ex\hbox{E}\kern-.125emX}}
\begin{document}

\title{A Nonlinear Incremental Approach for Replay Attack Detection}
\author{Tao Chen, Andreu Cecilia, Lei Wang*, Daniele Astolfi, Zhitao Liu
\thanks{Tao Chen, Lei Wang and Zhitao Liu are with the College of Control Science and Engineering, Zhejiang University, P.R. China (e-mail: tao\_chen; lei.wangzju; ztliu@zju.edu.cn). Andreu Cecilia is with the Universitat Polit\'ecnica de Catalunya, Avinguda Diagonal, 647, 08028 Barcelona, Spain. (e-mail: andreu.cecilia@upc.edu).Daniele Astolfi is with the Univ. Lyon, Université Claude Bernard
Lyon 1, CNRS, LAGEPP UMR 5007, F-69100 Villeurbanne, France (e-mail: daniele.astolfi@univ-lyon1.fr). Corresponding author: Lei Wang}
}

\maketitle

\begin{abstract}
Replay attacks comprise  replaying previously recorded sensor measurements and injecting malicious signals into a physical plant, causing great damage to cyber-physical systems. Replay attack detection has been widely studied for linear systems, whereas  limited research has been reported for nonlinear cases. In this paper, the replay attack is studied in the context of a nonlinear plant controlled by an observer-based output feedback controller. We first analyze replay attack detection using an innovation-based detector and reveal that this detector alone may fail to detect such attacks. Consequently, we turn to a watermark-based design framework to improve the detection. In the proposed framework, the effects of the watermark on  attack detection and closed-loop system performance loss are  quantified by two indices, which exploit the incremental gains of nonlinear systems. To balance the detection performance and control system performance loss, an explicit  optimization problem is formulated. 
Moreover, to achieve a better balance, we generalize the proposed watermark design framework   to  co-design the watermark, controller and observer.  Numerical simulations are presented to validate the proposed frameworks. 
\end{abstract}

\begin{IEEEkeywords}
Cyber-physical systems, nonlinear systems, replay attack detection, incremental gains. 
\end{IEEEkeywords}

\section{Introduction}\label{sec.Introduction}
 
Cyber-physical systems (CPSs) seamlessly integrate computational algorithms with physical components \cite{kim2022survey}.
Such integrations rely on (wireless) communication networks, raising   security risks.
To analyze and improve the resilience of CPSs to security risks, multiple types of attacks are studied in the literature, e.g., false data injection attack \cite{habib2023false,zhang2020false}, replay attack \cite{mo2009secure,langner2011stuxnet}, deny-of-service attack \cite{lu2017input,li2020active}, eavesdropping attack \cite{wang2018jamming}, etc. Among them, the replay attack has received significant attention due to its simplicity, stealthiness and model-free nature.

For attack detection, one commonly used detector is the $\chi^2$ detector \cite{MEHRA1971637,1997A}, which is innovation-based, that is, it relies on the discrepancy between the predicted and received sensor outputs.  However, as demonstrated in  \cite{mo2009secure,mo2013detecting},  an innovation-based detector may fail to detect replay attacks in the sense that, under some stability conditions, its detection rate converges to the false alarm rate. To address this issue, lots of  mechanisms have been proposed in the literature, which can be broadly classified  into two categories. The first involves ``encoding-decoding'' the sensor measurement to be replayed. Specifically, these methods use a random pair (e.g., two identical random numbers) to encode the sensor measurements to be transmitted and then decode it at the receiver's side \cite{du2022attack,li2023dynamic,joo2020resilient}. Under a replay attack, the detection is facilitated by the mismatch of the random pair. However, to achieve synchronization of the random pair,  prior information available to both sides \cite{du2022attack,li2023dynamic} or an extra  communication channel \cite{joo2020resilient}  is required, which  reduces the applicability
and the inherent robustness against adversaries \cite{ferrari2020switching}.

 An alternative to detect a replay attack is to add a ``watermark'' \cite{mo2009secure} to the control input and stimulating the physical plant.   Under a replay attack, the incoherence between the expected  stimulation and the actual sensor measurements facilitates the replay attack detection. Nonetheless, since the watermark is essentially a disturbance directly acting on the physical plant, it inevitably affects the control system performance. Therefore, both detection performance and  control system performance loss are considered when designing a watermark.
 
In \cite{mo2009secure}, the detection performance is evaluated by the difference between the healthy innovation covariance and the attacked one, while the system performance loss is evaluated by the watermark-induced extra linear-quadratic-Gaussian (LQG) cost. Then, based on these evaluations,  an optimization problem is constructed in \cite{mo2013detecting} to balance these two factors. Motivated by these results, various strategies have been reported  to further reduce the control system performance loss or, equivalently, to enhance detection performance with the same amount of watermarking. For example,  \cite{FANG2020108698} proposes a periodic watermarking scheduling approach. In \cite{naha2023quickest}, a parsimonious policy is proposed to limit the average number of watermarking events. In \cite{zhao2023stochastic}, an event-based physical watermark is designed, where the probability of adding the watermark is determined by the innovation.  In \cite{chen2023replay}, sensitive states are considered, and  additional constraints are developed to strictly limit the effect of the watermark on these states. 
Moreover, numerous watermark design methods have been developed to achieve specific objectives in replay attack detection, e.g.,   minimizing the average detection delay  \cite{naha2023quickest},    making the watermark unpredictable \cite{zhai2021switching}, extending the watermark for  industrial process operation optimization of cyber-physical systems \cite{yang2022joint}, and  automatically learning and recognizing replay attacks \cite{yu2023reinforcement}. 

To sum up, watermark approaches have been widely used to enhance $\chi^2$ detectors. Moreover, the detection performance and the control system performance loss have been well quantified and balanced. 
However, all  of these studies have been conducted in the context of linear systems, and it is not straightforward to extend these methods to nonlinear cases. For example, in \cite{mo2009secure,zhai2021switching,FANG2020108698,zhao2023stochastic,chen2023replay}, the detection performance is evaluated by the difference between the attacked and the healthy innovation covariance, which can be explicitly computed for linear cases, but is significantly more complex in nonlinear scenarios.

In view of the above discussion, in this work, we  study the replay attack detection problem for nonlinear systems. Specifically, we first analyze the detection performance with an innovation-based detector and show that, by itself,  it may fail to detect such attacks. We therefore  adopt a watermark-based design framework. {To evaluate the detection performance and control system performance loss,   we employ incremental gain-based methods, which are   broadly applicable in nonlinear scenarios. Furthermore,   incremental gains   allow us to bound the output (or state) of a system  without requiring explicit calculation of variable distributions and independently of the system's  equilibrium point.}   With these   evaluations, sufficient conditions for specific detection performance and control system performance loss can be rigorously established and exploited, which support the   watermark design. The main contributions are summarized as follows.
\begin{itemize}
    \item  A new watermark design framework for  nonlinear systems replay attack detection   is proposed, where the detection performance and the control system performance loss caused by the watermark are evaluated by some indices that are induced by incremental gains.  
    \item Sufficient conditions for specific detection performance and control system performance loss are established, thereby enabling the construction of a  solvable optimization problem for the watermark design. 
    \item { A systematic approach { based on LMIs} for co-designing the watermark, the controller and the observer is further developed to achieve a better tradeoff between the detection performance and the control performance loss.}

    \item For detection performance evaluation, a lower bound between the input difference and the output   difference for   nonlinear systems is required. However, this index, referred to as the incremental $\mathcal{L}_2^-$ gain and denoted as $\mathcal{L}_{\delta 2}^-$, has not been well developed in the literature.  Extending the $\mathcal{L}_2^-$ gain to $\mathcal{L}_{\delta 2}^-$ and  developing its corresponding  Lyapunov  (Proposition 2)   and LMI (Lemma 2)  characterizations are two  contributions of the work.
    \item     The connection between detection performance and $\mathcal{L}^-_{\delta 2}$ gain is not straightforward. Establishing a clear link between the   detector and the $\mathcal{L}^-_{\delta 2}$ gain for effective performance evaluation is another contribution. 
\end{itemize}

\vspace{1em} 
 \noindent{\bf Notation}. We denote by $\mathbb{N}$ the set of natural numbers, $\mathbb{R}^n$  the set of real numbers of dimension $n\in\mathbb{N}$, $\mathbb{R}_{\geq 0}$  the set of non-negative reals, $\mathbb{R}_{> 0}$ the set of positive reals.
    For a vector $x\in\mathbb{R}^n$, $\|x\|$ denotes the { Euclidean} norm and 
   $\|x\|_\Lambda:=\sqrt{x^\top \Lambda x}$ for $\Lambda \in\mathbb{R}^{n\times  n} $. For random vector $x$,
$x\sim \mathcal{N}(\mu,\Sigma)$ denotes that $x$ follows a Gaussian distribution with mean $\mu$ and covariance $\Sigma$.
    For column vectors $x\in \mathbb{R}^m$ and $y\in \mathbb{R}^n$, $\text{col}(x,y): = [x^\top, y^\top]^\top$. 
  For matrix $A\in\mathbb{R}^{n\times n}$, $\lambda_\text{max}(A)$ denotes the largest eigenvalue of $A$ ($\lambda_\text{min}(\cdot)$ for the minimum). For matrices $A\in\mathbb{R}^{m\times m}$ and $B\in\mathbb{R}^{n\times n}$, $\text{diag}(A,B)$ denotes a matrix with  $A$ and $B$ in the main diagonal and $0$ everywhere else.  $I_{n}$ denotes identity matrix of dimension $n \times n$. For real-valued Lebesgue
integrable functions  $f:\mathbb{R}_{\geq t_0}\rightarrow \mathbb{R}^n$, ${\|f\|_{\mathcal{L}_2}:=\int_{t_0}^\infty \|f(t)\|^2 \text{d}t }$ for some $t_0\in\RR_{\geq0}$.  
Given a function $u:\RR_{\geq0}\to \RR^m$, we define
$\|u(\cdot)\|_\infty = \text{sup}_{t\in[0,\infty)}\|u(t)\|$.
Given a  signal $u:\RR_{[0,\infty)}\to\RR^m$
and a scalar $\tau\in [0,\infty)$,
we denote by 
$(u(t))_\tau$ the $\tau$-truncation of $u(t)$ defined as  $ (u(t))_\tau=u(t)$
for all $t\in [0,\tau]$ and $(u(t))_\tau=0$,
for all $t> \tau$. \color{black}
{A mapping $f : \mathbb{R}^p\rightarrow \mathbb{R}^q $ is  $\mathcal{C}^n$ if it is $n$-times continuously
differentiable.}
In addition, a continuous function $ \alpha : \mathbb{R}_{\geq 0}\rightarrow \mathbb{R}_{\geq 0}  $ is of class $\mathcal{K} $, if it is   strictly increasing and $ \alpha(0)=0$. A continuous function $\beta : \mathbb{R}_{\geq 0} \times \mathbb{R}_{\geq 0} \rightarrow \mathbb{R}_{\geq 0} $ is of class $\mathcal{K}\mathcal{L}$ if, for each fixed $s\geq 0$, the function $\beta(\cdot,s)$ is of class $\mathcal{K}$ and, for each fixed $r>0$,  $\beta(r,s)$ is strictly decreasing  and $\lim_{r\to \infty}\beta(r,s)\rightarrow 0$.


\section{Preliminaries on incremental gain properties}\label{sec.incremental_gain}

Consider a system of the form 
\begin{equation}
    \label{eq.general_sys}
    \dot x = f(x)+Bu, \qquad y = Cx+Du
\end{equation}

where $f:\RR^n\to \RR^n$
is a { (locally)} Lipschitz function satisfying $f(0) = 0$, 
$x\in \RR^n$ is the state, 
 $u:[0,\infty)\to \RR^m$ is a measurable and locally essentially
bounded function taking values on 
a set of $\mathcal{U}\subset\RR^m$
containing the origin, and $y\in \RR^p$ the output.
\color{black}
Furthermore, 
we denote by 
$X(t,x_0,u)$ the unique solution of system 
\eqref{eq.general_sys} at time $t$, with initial state $x_0\in \RR^n$ and input ${u(t)}\in\mathcal{U}$.
Similarly, we denote by $Y(t,x_0,u)= CX(t,x_0,u)+D u{(t)}$ its output trajectory.

For system \eqref{eq.general_sys}, we provide the    definitions of input-to-state stable {\cite{sontag1989smooth}}, denoted as ISS, and 
incremental input-to-state stable {\cite{angeli2002lyapunov}}, denoted as 
$\delta$ISS.

\begin{definition}[ISS]\label{def.iss}
System \eqref{eq.general_sys} is said to be input-to-state stable (ISS) if there exist  $\alpha\in \cK$ and $\beta\in \cKL$    such that the following holds {for all $t\geq0$}
\begin{equation}\label{eq.ISS}
    \|X(t,{x_0},u)\|
    \leq \beta(\|{x_0}\|,t) + \alpha({\|u(\cdot)  \|_\infty})
\end{equation}
for any 
${x_0} \in \RR^n$ and   $u(t)\in \mathcal U$.
\end{definition}

\begin{definition}[$\delta$ISS]\label{def.delta_iss}
System \eqref{eq.general_sys} is said to be incrementally input-to-state stable if there exist  $\alpha\in \cK$ and $\beta\in \cKL$    such that the following holds { for all $t\geq0$}
\begin{multline*}
    \|X(t,{x_1} ,u_1)- X(t,{x_2},u_2)\|
    \leq \beta(\|{x_1-x_2}|,t)
    \\ + \alpha({\|u_1(\cdot)-u_2(\cdot)\|_\infty})
\end{multline*}
for any 
${x_{1} , x_{2}} \in \RR^n$ and 

 $u_1(t), u_2(t)\in \mathcal U$.
\end{definition}


Next, { based on \cite{van2000l2,verhoek2023convex}}, we define the  incremental 
$\cL_2^+$ gain, denoted as 
 $ \cL_{\delta 2}^+$.

\begin{definition}[$\mathcal{L}^+_{\delta 2}$ gain]\label{def.L_ifty}
The  $\mathcal{L}^+_{\delta 2}$ gain of system \eqref{eq.general_sys} is defined as $\mathcal{L}^+_{\delta 2}:=\inf{\gamma^+}$,  
if   there exist    $\gamma^+<\infty$ and $\alpha^+ \in \mathcal{K} $ such 
that 
for all $t\geq0$ and   $\tau\geq0$,
\begin{multline}\label{eq.def_L_infty_gain}
    \|((Y(t,x_1,u_1)- Y(t,x_2,u_2))_{\tau}\|_{\cL_2}
    \leq 
    \\
    \gamma^+\|(u_1(t)-u_2(t))_\tau\|_{\mathcal{L}_2}
    +\alpha^+(\|x_1-x_2\|) 
\end{multline}
holds for any $x_1,x_2\in \RR^n$ and $u_1(t),u_2(t)\in \mathcal U$. 
\end{definition}

In the following, we also use $[\mathcal{L}^+_{\delta 2}]_y^u$ to indicate the input and output with respect to which the  $\mathcal{L}^+_{\delta 2}$ gain is defined.

 With the definition of $\mathcal{L}^+_{\delta 2}$, an upper bound on the $\cL_2^+$ norm of the error between two different output trajectories of system \eqref{eq.general_sys} can be effectively estimated.
Correspondingly, to get a lower bound, we turn to the $\mathcal{L}^-_{\delta 2}$ gain, which is an incremental version of the $\mathcal{L}_2^-$ gain \cite{liu2005lmi}. The  formal definition of the $\mathcal{L}^-_{\delta 2}$ gain is provided below.

\begin{definition}[$\mathcal{L}^-_{\delta 2}$ gain]\label{def.L_minus}
The $\mathcal{L}^-_{\delta 2}$ gain of
system \eqref{eq.general_sys}
is defined as $\mathcal{L}^-_{\delta 2}:=\sup \gamma^-$,   if   there exist    $\gamma^->0$ and $\alpha^- \in \mathcal{K} $ such 
that 
for all $t\geq0$ and  $\tau\geq0$,
\begin{multline}\label{eq.def_L_minus_gain}
    \|((Y(t,x_1,u_1)- Y(t,x_2,u_2))_{\tau}\|_{\cL_2}
    \geq 
    \\
    \gamma^-\|(u_1(t)-u_2(t))_\tau\|_{\mathcal{L}_2}
    -\alpha^-(\|x_1-x_2\|) 
\end{multline}
holds for any  $x_1,x_2\in \RR^n$ and  $u_1(t),u_2(t)\in \mathcal U$. 
\end{definition}

\begin{remark}

Note that for system \eqref{eq.general_sys}
$\delta$ISS implies an $\mathcal{L}_{\delta 2}^+$ gain because the output is linear (this may be not hold for generic nonlinear functions).
However, the converse is true only under
additional (differential) detectability conditions.
%
\end{remark}

\begin{remark}

Incremental properties, such as those considered in this section, relate  the state and output differences between any pair of the system's trajectories with distinct  initial states and inputs. 

As extensively studied in the literature
(e.g. \cite{angeli2002lyapunov,romanchuk1996characterization}) 
one has the implications 
$\delta$ISS $\Rightarrow$ ISS
and $\mathcal{L}_{\delta 2}^+$ gain $\Rightarrow$ $\mathcal{L}_{2}$ gain, 
while the converse generically does not hold true.
\end{remark}
 
In view of the previous definitions, we  now provide  two Lyapunov characterizations of the $\cL_{\delta 2}^+$ and $\cL_{\delta 2}^-$ gains. { The proof of 
the former can be found in \cite[Appendix B.1]{verhoek2023convex}, while the proof of the latter is postponed to Appendix \ref{prof.def.Lyapunov_L_minus}.}
Concerning the Lyapunov characterization of $\delta$ISS we refer to \cite{angeli2002lyapunov}.

\begin{proposition}[Lyapunov $\mathcal{L}^+_{\delta 2}$  characterization]\label{def.Lyapunov_L_ifty}
    Suppose there exist a  { $\mathcal{C}^1$} function 
    $V^+:\RR^n\times \RR^n\times\RR\to \RR_{\geq0}$ ,
     functions 
    $\underline\alpha^+, \overbar \alpha^+\in \cK_\infty$
    and $\gamma^+>0$ such that 
    \begin{equation}\label{eq.sandwich_Vplus}
\underline\alpha^+(\|x_1-x_2\|) \leq V^+( x_1,x_2,t)\leq \overbar \alpha^+(\|x_1-x_2\|)
    \end{equation}
      \begin{multline}\label{eq.ineq_Vplus}
\dfrac{\partial V^+}{\partial t}
(x_1,x_2,t)
+
\dfrac{\partial V^+}{\partial x_1}(f(x_1 )+Bu_1)
+
\dfrac{\partial V^+}{\partial x_2}(f(x_2)+B  u_2)
\\
\leq \gamma^+\|u_1-u_2\|^2 - \|y_1-y_2\|^2
    \end{multline}
        for all 
    $t\geq0$,
    $x_1,x_2\in \RR^n$, 
    and $u_1,u_2\in \mathcal{U}$.
     Then the $\mathcal{L}^+_{\delta 2}$ gain of  system \eqref{eq.general_sys} satisfies 
  $[\mathcal{L}^+_{\delta 2}]^u_y\leq  \gamma^+$.
\end{proposition}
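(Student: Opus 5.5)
The plan is a direct dissipation argument: integrate the inequality \eqref{eq.ineq_Vplus} along a pair of trajectories and read off the bound required in Definition~\ref{def.L_ifty}. One point of notation matters here. In this paper $\|f\|_{\cL_2}$ denotes $\int_{t_0}^\infty \|f(t)\|^2\,\text{d}t$, with no square root. The target inequality \eqref{eq.def_L_infty_gain} is therefore exactly an inequality between integrals of squared norms, and $\gamma^+$ enters linearly, matching the form of \eqref{eq.ineq_Vplus}.

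First I fix $x_1,x_2\in\RR^n$ and inputs $u_1,u_2$ with values in $\mathcal U$. I set $X_i(t)=X(t,x_i,u_i)$ and $Y_i(t)=Y(t,x_i,u_i)$, and take initial time $t_0=0$. I then consider the scalar function $W(t):=V^+(X_1(t),X_2(t),t)$. The trajectories $X_i$ are absolutely continuous and $V^+$ is $\mathcal C^1$, so $W$ is absolutely continuous. By the chain rule, for almost every $t$,
\begin{equation*}
\dot W(t)=\frac{\partial V^+}{\partial t}+\frac{\partial V^+}{\partial x_1}\big(f(X_1)+Bu_1(t)\big)+\frac{\partial V^+}{\partial x_2}\big(f(X_2)+Bu_2(t)\big).
\end{equation*}
Since \eqref{eq.ineq_Vplus} holds pointwise for all states and all input values in $\mathcal U$, this gives $\dot W(t)\le \gamma^+\|u_1(t)-u_2(t)\|^2-\|Y_1(t)-Y_2(t)\|^2$ for almost every $t$.

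Next, I integrate this bound over $[0,\tau]$ and use $W(\tau)\ge 0$ from the lower bound in \eqref{eq.sandwich_Vplus}. This yields
\begin{equation*}
\int_0^\tau\|Y_1-Y_2\|^2\,\text{d}t\le \gamma^+\int_0^\tau\|u_1-u_2\|^2\,\text{d}t+V^+(x_1,x_2,0).
\end{equation*}
The upper bound in \eqref{eq.sandwich_Vplus} then gives $V^+(x_1,x_2,0)\le\overbar\alpha^+(\|x_1-x_2\|)$. The $\tau$-truncations vanish for $t>\tau$, so both integrals coincide with $\|(Y_1-Y_2)_\tau\|_{\cL_2}$ and $\|(u_1-u_2)_\tau\|_{\cL_2}$, respectively. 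Hence \eqref{eq.def_L_infty_gain} holds with the given $\gamma^+$ and $\alpha^+:=\overbar\alpha^+\in\cK$, uniformly in $\tau\ge0$ and in the pair of trajectories. Since $[\mathcal L^+_{\delta2}]^u_y$ is defined as the infimum over all admissible $\gamma^+$, it follows that $[\mathcal L^+_{\delta2}]^u_y\le\gamma^+$.

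The only delicate step is well-posedness. Because $f$ is only locally Lipschitz, solutions could in principle escape in finite time. The notation $X(t,x_0,u)$ in the setup presupposes that solutions exist and are unique for all $t\ge0$, and I would invoke that standing assumption. Otherwise, the argument applies verbatim on any interval $[0,\tau]$ contained in the common existence interval of $X_1$ and $X_2$. A related point is that the inputs are merely measurable, so $W$ is differentiable only almost everywhere. I handle this by working with absolute continuity and the fundamental theorem of calculus for absolutely continuous functions rather than with classical derivatives.
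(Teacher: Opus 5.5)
Your proof is correct and takes essentially the route the paper relies on. The paper defers this proof to \cite[Appendix B.1]{verhoek2023convex}, and its own proof of the mirror statement (Proposition~\ref{def.Lyapunov_L_minus}, Appendix~\ref{prof.def.Lyapunov_L_minus}) uses the same integrate-the-dissipation-inequality argument: nonnegativity of the storage function at the final time plus the upper sandwich bound at the initial time. Your use of absolute continuity and the a.e.\ chain rule is more careful than the paper's ``taking integral'' step, and your note that $\|\cdot\|_{\mathcal{L}_2}$ carries no square root is what makes the linear dependence on $\gamma^+$ consistent.
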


\begin{proposition}[Lyapunov $\mathcal{L}^-_{\delta 2}$  characterization]\label{def.Lyapunov_L_minus}
    Suppose there exist a { $\mathcal{C}^1$}  function 
    $V^-:\RR^n\times \RR^n\times\RR\to \RR_{\geq0}$,
     functions 
    $\underline\alpha^-, \overbar \alpha^-\in \cK_\infty$ and 
    $\gamma^->0$ such that 
    \begin{equation}\label{eq.sandwich_Vminus}
\underline\alpha^-(\|x_1-x_2\|) \leq V^-(  x_1,x_2, t)\leq \overbar \alpha^-(\|x_1-x_2\|)
    \end{equation}
      \begin{multline}\label{eq.ineq_Vminus}
\dfrac{\partial V^-}{\partial t}
(x_1,x_2,t)
+
\dfrac{\partial V^-}{\partial x_1}(f(x_1)+B u_1)
+
\dfrac{\partial V^-}{\partial x_2}(f(x_2 )+Bu_2)
\\
\leq -\gamma^-\|u_1-u_2\|^2 + \|y_1-y_2\|^2
    \end{multline}
    for all 
    $t\geq0$,
    $x_1,x_2\in \RR^n$, 
    and $u_1,u_2\in \mathcal{U}$.
      Then the  $\mathcal{L}^-_{\delta 2}$ gain of system \eqref{eq.general_sys} 
   satisfies
    $ [\mathcal{L}^-_{\delta 2}]^u_y\geq  \gamma^-$.
\end{proposition}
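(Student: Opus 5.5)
The plan is to integrate the dissipation inequality \eqref{eq.ineq_Vminus} along a pair of trajectories and then convert the resulting squared-norm estimate into the non-squared form of Definition~\ref{def.L_minus}. Fix $x_1,x_2\in\RR^n$, inputs $u_1,u_2$ with values in $\mathcal U$, and $\tau\geq 0$. Let $x_i(t)=X(t,x_i,u_i)$ and $y_i(t)=Y(t,x_i,u_i)$, and define $W(t):=V^-(x_1(t),x_2(t),t)$. Since $V^-$ is $\mathcal C^1$ and the trajectories are absolutely continuous (by local Lipschitzness of $f$ and local essential boundedness of the inputs), $W$ is absolutely continuous. Its derivative exists for almost every $t$ and equals the left-hand side of \eqref{eq.ineq_Vminus} evaluated at $(x_1(t),x_2(t),u_1(t),u_2(t))$. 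Therefore, for a.e.\ $t$,
\[
\dot W(t)\leq -\gamma^-\|u_1(t)-u_2(t)\|^2+\|y_1(t)-y_2(t)\|^2 .
\]

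Integrating on $[0,\tau]$ and using $W(\tau)\geq 0$ together with the sandwich bound \eqref{eq.sandwich_Vminus} at $t=0$ gives
\[
\int_0^\tau\|y_1-y_2\|^2\,dt\;\geq\;\gamma^-\int_0^\tau\|u_1-u_2\|^2\,dt-\overbar\alpha^-(\|x_1-x_2\|).
\]
Because the norm $\|\cdot\|_{\mathcal L_2}$ in the notation is the integral of the squared norm, this reads $\|(y_1-y_2)_\tau\|_{\mathcal L_2}\geq \gamma^-\|(u_1-u_2)_\tau\|_{\mathcal L_2}-\overbar\alpha^-(\|x_1-x_2\|)$. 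That is exactly \eqref{eq.def_L_minus_gain} with $\alpha^-:=\overbar\alpha^-\in\mathcal K$. If instead one reads $\|\cdot\|_{\mathcal L_2}$ as the square root of the integral, I would use the elementary bound $\sqrt{a-b}\geq \sqrt a-\sqrt b$ for $a\geq b\geq 0$ (the case $a<b$ is trivial, since the right-hand side is then negative). This gives
\[
\|(y_1-y_2)_\tau\|\geq\sqrt{\gamma^-}\,\|(u_1-u_2)_\tau\|-\sqrt{\overbar\alpha^-(\|x_1-x_2\|)},
\]
so the gain is $\sqrt{\gamma^-}$ and $\alpha^-=\sqrt{\overbar\alpha^-}\in\mathcal K$.

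The conclusion then follows from the definition of the $\mathcal L^-_{\delta 2}$ gain as a supremum: since $\gamma^-$ is admissible, $[\mathcal L^-_{\delta2}]^u_y\geq\gamma^-$. The main point to handle carefully is the normalization convention for $\|\cdot\|_{\mathcal L_2}$. With the paper's stated convention (no square root), the constants match the statement exactly. Under the square-root convention one obtains only $\sqrt{\gamma^-}$, so I would follow the paper's notation.

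A secondary technical point is justifying the chain rule almost everywhere for merely measurable inputs. This holds because $W$ is the composition of a $\mathcal C^1$ function with an absolutely continuous curve. One should also ensure existence of solutions on $[0,\tau]$ (no finite escape); I would take this as implicit in the paper's use of $X(t,x_0,u)$ for all $t\geq 0$.
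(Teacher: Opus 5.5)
Your proposal is correct and follows essentially the same route as the paper. You integrate the dissipation inequality along the pair of trajectories, discard $V^-$ at the final time by nonnegativity, bound its initial value by $\overline{\alpha}^-(\|x_1-x_2\|)$, and read off the definition using the paper's convention that $\|\cdot\|_{\mathcal{L}_2}$ is the integral of the squared norm with no square root; your remarks on absolute continuity of $t\mapsto V^-(x_1(t),x_2(t),t)$ and on the square-root convention yielding only $\sqrt{\gamma^-}$ are valid refinements the paper does not spell out.
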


Finally, for systems of form \eqref{eq.general_sys}, {if we assume that {$f: \RR^n\rightarrow \RR^n$ is  $\mathcal{C}^1$}},
we can provide matrix inequality 
characterizations of the previous incremental gains, which will be useful in the subsequent development of a computationally viable design methodology.

{A matrix inequality 
characterization for the $[\mathcal{L}^+_{\delta 2}]^u_y$ gain follows from \cite[Corollary 14]{verhoek2023convex}, and is recalled in the next Lemma.} For convenience, define 
      $A_x := \frac{\partial f}{\partial x}(x)$.
  
\begin{lemma}\label{lem.L_inf_ori}
Suppose  there exist a symmetric positive definite matrix $P\in\RR^{n\times n}$ 
    and $\gamma^+\geq0$
    satisfying
\begin{equation}\label{eq.LMI_L_inf_ori}
        \begin{bmatrix}
       A_x^\top P\!+\!PA_x \!+\!C^\top C&PB   \!+\!C^\top  D   \\
        \star &D^\top D\!-\!\gamma^+\! I_m
    \end{bmatrix}\! \preceq \! 0,
    \end{equation}
   for all  $x\in \RR^n $.
   Then,
   $V^+(x_1,x_2) = (x_1-x_2)^\top P (x_1-x_2)$
   satisfies \eqref{eq.sandwich_Vplus}, \eqref{eq.ineq_Vplus}
   and 
   system \eqref{eq.general_sys} 
    has an incremental 
    $\mathcal{L}^+_{\delta 2}$ gain $ [\mathcal{L}^+_{\delta 2}]^u_y\leq  \gamma^+$.
  \end{lemma}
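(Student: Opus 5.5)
The plan is to verify the two hypotheses of Proposition~\ref{def.Lyapunov_L_ifty} for the quadratic candidate $V^+(x_1,x_2)=(x_1-x_2)^\top P(x_1-x_2)$ and then invoke that proposition. This $V^+$ is $\mathcal{C}^1$ and does not depend on $t$, so $\partial V^+/\partial t=0$. The sandwich bound \eqref{eq.sandwich_Vplus} holds with $\underline\alpha^+(s)=\lambda_{\min}(P)s^2$ and $\overbar\alpha^+(s)=\lambda_{\max}(P)s^2$. Both are of class $\mathcal{K}_\infty$ because $P\succ 0$.

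The main step is the dissipation inequality \eqref{eq.ineq_Vplus}. Set $e=x_1-x_2$ and $\delta u=u_1-u_2$. Then
\begin{equation*}
\dot V^+=2e^\top P\big(f(x_1)-f(x_2)+B\delta u\big).
\end{equation*}
Since $f$ is $\mathcal{C}^1$, the mean value theorem in integral form gives
\begin{equation*}
f(x_1)-f(x_2)=\Big(\int_0^1 A_{x_2+s e}\,\mathrm{d}s\Big)e .
\end{equation*}
Also $y_1-y_2=Ce+D\delta u$. Hence
\begin{equation*}
\dot V^+ + \|y_1-y_2\|^2-\gamma^+\|\delta u\|^2=\int_0^1 \mathrm{col}(e,\delta u)^\top M(x_2+se)\,\mathrm{col}(e,\delta u)\,\mathrm{d}s ,
\end{equation*}
where $M(x)$ is the matrix on the left of \eqref{eq.LMI_L_inf_ori}. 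To check this, expand the quadratic form. The $(1,1)$ block gives $2e^\top PA_x e+\|Ce\|^2$. The off-diagonal blocks give $2e^\top PB\delta u+2e^\top C^\top D\delta u$. The $(2,2)$ block gives $\|D\delta u\|^2-\gamma^+\|\delta u\|^2$. Together these reproduce the left-hand side. Note that the terms not involving $A_x$ do not depend on $s$, so integrating them over $[0,1]$ leaves them unchanged.

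By \eqref{eq.LMI_L_inf_ori}, $M(x)\preceq 0$ for every $x$, so the integrand is nonpositive and the right-hand side is $\le 0$. This is exactly \eqref{eq.ineq_Vplus}. Proposition~\ref{def.Lyapunov_L_ifty} then yields $[\mathcal{L}^+_{\delta 2}]^u_y\le\gamma^+$.

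The only delicate point is the integral mean-value representation. It needs $f\in\mathcal{C}^1$, which is assumed, and it needs the LMI to hold at every point of the segment between $x_2$ and $x_1$, which holds because the LMI is imposed for all $x\in\RR^n$. A pointwise mean-value point cannot be used here, since $f$ is vector-valued. The integral form avoids this problem, because averaging negative semidefinite matrices over $s$ preserves semidefiniteness.
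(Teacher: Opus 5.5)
Your proof is correct and follows the same route the paper takes. The paper defers Lemma~\ref{lem.L_inf_ori} to \cite[Corollary 14]{verhoek2023convex}, but its own proof of the mirror result, Lemma~\ref{lem.L_minus_ori} in Appendix~\ref{prof.Hi-}, uses exactly your steps: a quadratic incremental storage function, the integral mean-value representation of $f(x_1)-f(x_2)$ along $x_2+s(x_1-x_2)$, the matrix inequality applied pointwise inside the integral, and then the Lyapunov characterization. The only cosmetic point is that Proposition~\ref{def.Lyapunov_L_ifty} is stated for $\gamma^+>0$ while the lemma allows $\gamma^+=0$. Since the matrix inequality is monotone in $\gamma^+$, you can apply the proposition with $\gamma^++\varepsilon$ for every $\varepsilon>0$ and take the infimum.
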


{Motivated by the previous result, we propose a similar characterization for the  $ [\mathcal{L}^-_{\delta 2}]^u_y$ gain in the next Lemma.}

\color{black}
\begin{lemma}\label{lem.L_minus_ori}
 Suppose there exist a symmetric negative definite matrix $Q\in\RR^{n\times n}$
      and $\gamma^-\geq0$
      satisfying   
\begin{equation}\label{eq.LMI_L_minus_ori}
        \begin{bmatrix}
      A_x^\top Q\!+\!QA_x\!+\!C^\top C&QB \!+\!C^\top D\\
         \star  &D^\top D\!-\!\gamma^-I_m 
    \end{bmatrix}\succeq 0 
    \end{equation}
    for all  $x\in \RR^n$.
    Then,     $V^-(x_1,x_2) = -(x_1-x_2)^\top Q (x_1-x_2)$
   satisfies \eqref{eq.sandwich_Vminus}, \eqref{eq.ineq_Vminus}
   and system \eqref{eq.general_sys} 
    has an incremental 
    $\mathcal{L}^-_{\delta 2}$ gain $ [\mathcal{L}^-_{\delta 2}]^u_y\!\geq\!  \gamma^-$.
\end{lemma}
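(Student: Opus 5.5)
The plan is to check the two hypotheses of Proposition \ref{def.Lyapunov_L_minus} for the quadratic candidate $V^-(x_1,x_2) = -(x_1-x_2)^\top Q (x_1-x_2)$. That proposition then gives $[\mathcal{L}^-_{\delta 2}]^u_y\geq\gamma^-$. The argument mirrors the proof of Lemma \ref{lem.L_inf_ori}, with the inequalities reversed and $P$ replaced by $-Q$.

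\emph{Sandwich condition.} Since $Q\prec0$, the matrix $-Q$ is symmetric positive definite. Therefore
\[
\lambda_{\min}(-Q)\|x_1-x_2\|^2\leq V^-\leq\lambda_{\max}(-Q)\|x_1-x_2\|^2 ,
\]
and \eqref{eq.sandwich_Vminus} holds with $\underline\alpha^-(s)=\lambda_{\min}(-Q)s^2$ and $\overbar\alpha^-(s)=\lambda_{\max}(-Q)s^2$, both of class $\mathcal{K}_\infty$. The function $V^-$ is $\mathcal{C}^1$ and has no explicit time dependence, so $\partial V^-/\partial t=0$.

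\emph{Dissipation inequality.} Set $e=x_1-x_2$ and $\delta u=u_1-u_2$. Then $y_1-y_2=Ce+D\delta u$. Because $f$ is $\mathcal{C}^1$, the mean value theorem in integral form gives
\[
f(x_1)-f(x_2)=\bar A\,e,\qquad \bar A:=\int_0^1 A_{x_2+s e}\,\mathrm{d}s .
\]
The left side of \eqref{eq.ineq_Vminus} is then $-2e^\top Q(\bar A e+B\delta u)$. The required inequality becomes
\[
-2e^\top Q\bar A e-2e^\top QB\delta u+\gamma^-\|\delta u\|^2-\|Ce+D\delta u\|^2\leq 0 .
\]
Equivalently, the quadratic form $\mathrm{col}(e,\delta u)^\top M(\bar A)\,\mathrm{col}(e,\delta u)$ must be nonnegative, where $M(\cdot)$ is the block matrix in \eqref{eq.LMI_L_minus_ori}. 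Expanding confirms that the cross terms are $2e^\top(QB+C^\top D)\delta u$ and the diagonal blocks match exactly.

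\emph{The only nontrivial point: passing from $A_x$ to the averaged Jacobian.} Hypothesis \eqref{eq.LMI_L_minus_ori} is stated pointwise in $x$, but it is needed at $\bar A$. The key observation is that $M(A)$ is affine in $A$, and the set of positive semidefinite matrices is a closed convex cone. Hence
\[
M(\bar A)=\int_0^1 M(A_{x_2+se})\,\mathrm{d}s\succeq0 ,
\]
since each integrand is $\succeq0$ and the integral of positive semidefinite matrices remains positive semidefinite. Concretely, for any fixed vector $v$, $v^\top M(\bar A)v=\int_0^1 v^\top M(A_{x_2+se})v\,\mathrm{d}s\geq0$.

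\emph{Conclusion.} With $M(\bar A)\succeq0$, the dissipation inequality \eqref{eq.ineq_Vminus} holds for all $x_1,x_2\in\RR^n$, all $u_1,u_2\in\mathcal U$ and all $t\geq0$. Together with the sandwich condition, Proposition \ref{def.Lyapunov_L_minus} yields $[\mathcal{L}^-_{\delta 2}]^u_y\geq\gamma^-$.

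There is one edge case. If $\gamma^-=0$, the gain statement is trivial. Proposition \ref{def.Lyapunov_L_minus} formally asks for $\gamma^->0$, so the case $\gamma^-=0$ is noted separately rather than routed through it.
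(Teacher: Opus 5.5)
Your proposal is correct and follows essentially the same route as the paper's proof in Appendix~\ref{prof.Hi-}. Both arguments bound $V^-$ via the eigenvalues of $-Q$ to get \eqref{eq.sandwich_Vminus}, write $f(x_1)-f(x_2)=\int_0^1 A_{x_2+s(x_1-x_2)}\,\mathrm{d}s\,(x_1-x_2)$, integrate the pointwise inequality \eqref{eq.LMI_L_minus_ori} over $s\in[0,1]$ to obtain \eqref{eq.ineq_Vminus}, and then invoke Proposition~\ref{def.Lyapunov_L_minus}. Your explicit affine-in-$A$ justification and your remark on the case $\gamma^-=0$ (which Proposition~\ref{def.Lyapunov_L_minus} formally excludes) are small points of care that the paper leaves implicit.
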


The proof of Lemma~\ref{lem.L_minus_ori} is postponed to Appendix \ref{prof.Hi-}.

 \begin{remark}\itshape
    From \eqref{eq.LMI_L_minus_ori}, it is evident that if $D  =0$,   then $\gamma^-$ is always $0$. To obtain a positive $\gamma^-$, the relative degree of system \eqref{eq.general_sys} between $u$ and $y$ must be $0$. For systems whose relative degree is greater than $0$, one can reduce its relative degree to $0$ by 
    redefining a new output
    \color{black} to
    include an   auxiliary direct channel \cite{liu2005lmi}.
\end{remark} 

\section{Problem Formulation}\label{sec.Problem_formu} 
\subsection{System Description}

Consider  continuous-time nonlinear plants of the form
 \begin{equation}\label{eq.plant}
\Sigma_{np}:
\; \left\{
\begin{aligned}
  \dot{x}&= f(x) +Bu+\omega\\
  y  &= Cx+Du +\nu ,
\end{aligned}
\right.
 \end{equation}
where $x\in  \mathbb{R}^{n}$ is the plant state, $u \in   \mathbb{R}^{m}$ is the input signal,  $y\in  \mathbb{R}^{p}$ is the sensor measurement,  and $\omega\in \mathbb{R}^{n}$ and $\nu\in \mathbb{R}^{p}$ are the system and sensor measurement noise, respectively, which are assumed to be Lebesgue integrable and bounded as $\|\omega\|_\infty\leq  \overbar \omega $ and $\|\nu \|_\infty\leq  \overbar \nu $  with $\overbar \omega >0 $ and $\overbar \nu >0 $. Finally,   $f: \mathbb{R}^n  \rightarrow \mathbb{R}^n$ {is  $\mathcal{C}^1$}.

 Furthermore,  
we denote by 
$X(t,x_0,u,\omega)$ the unique solution of system 
\eqref{eq.plant} at time $t$, with initial states $x_0\in \RR^n$
and subject to the input $u$ and noise $\omega$.
Similarly, we denote by $Y(t,x_0,u, \omega,\nu) = CX(t,x_0,u,\omega)+Du(t)+\nu(t)$ its output trajectory. For convenience, we omit the noise signal $\omega$ and $\nu$ and use notations $X(t,x_0,u)$ and $Y(t,x_0,u)$ in the following.

For output feedback control purpose, we assume that a Luenberger-type observer \cite{mohan2017synthesizing, le2015privacy} and a state feedback controller are deployed, taking the form of 
\begin{equation}\label{eq.observer}
\Sigma_{o}:
\; \left\{\begin{aligned}
\dot{\hat x}&= f(\hat x)+Bu+ L(y-\hat y)\\
  \hat y  &=C\hat x+Du,
  \end{aligned}\right.
\end{equation} 
and 
  \begin{equation}\label{eq.controller}
\begin{split}
\Sigma_{c}: u=\kappa(\hat x) +v
 \end{split}
\end{equation} 
 respectively, where $\hat x \in \mathbb{R}^n$ and $\hat y\in    \mathbb{R}^p$ are the estimated state and output, respectively,  $L\in \mathbb{R}^{n\times p}$ is the gain of observer, $\kappa: \mathbb{R}^{n}\rightarrow \mathbb{R}^m$ is a Lipschitz feedback map, that is,  there exists a constant  $l_{\kappa}>0$   such that
\begin{equation}\label{eq.lip_k}
   \|\kappa(x_1) - \kappa(x_2)\|\leq l_{\kappa}\|x_1-x_2\| , 
\end{equation}
for all $  x_1,x_2 \in \RR^n$, and $v$   denotes a residual control signal to be addressed later.  We denote by 
$\widehat X(t,\hat x_0,v , y )$ and $\widehat Y(t,\hat x_0,v , y )=C\widehat X(t,\hat x_0,v,y)+Du $ the solution and the output trajectory of the observer \eqref{eq.observer}, respectively.

Define  $\tilde x:= x-\hat x$ as the    observation error,   whose dynamics  are given as follows
\begin{equation}
    \label{eq.error_observer}
    \begin{aligned}
           \dot {\tilde x} & =  f(x )- f(x-\tilde x  ) + L \tilde y+\omega,\\
           \tilde y & = C \tilde x   +\nu  
    \end{aligned}
\end{equation}
where  $ u  = \kappa(x-\tilde x)+v$. We denote by $\widetilde X (t,\bx_0 ,v,y )=X(t,x_0,u)-\widehat X(t,\hat x_0,v , y )$ the solution   and $\widetilde Y(t,    \bx_0,v,y)=Y(t,  x_0,u )-\widehat Y(t,\hat  x_0, v , y )$ the output trajectory of  \eqref{eq.error_observer}, respectively, where $\bx_0 := \text{col}(x_0,\hat  x_0)$.

{In addition, with the controller \eqref{eq.controller}, the physical plant \eqref{eq.plant} evolves as follows}
{\begin{equation}\label{eq.plant_controller}
\dot x = f(x) + B(\kappa(x-\tilde x) +v)+\omega.
\end{equation}}

Then, we assume that the error dynamic system \eqref{eq.error_observer} and the controlled  plant \eqref{eq.plant_controller} and satisfy  input-to-state stability (ISS) property \cite{sontag1997output,angeli2002lyapunov}. That is,
{\begin{assumption}\label{ass.delta_ISS}
    The error dynamic system \eqref{eq.error_observer} is  ISS w.r.t. inputs $\nu$ and $\omega$ (see Definition~\ref{def.iss}) { uniformly on $x$}, and the controlled plant \eqref{eq.plant_controller} is $\delta$ISS  w.r.t. inputs $\tilde x, v,  \omega$ (see Definition~\ref{def.delta_iss}).
\end{assumption}}

Designing controllers  and observers   to  achieve Assumption~\ref{ass.delta_ISS} has been well studied, see for instance \cite{giaccagli2023lmi}, and a methodology can be found in Lemma~\ref{lem.contraction_1}.



 {\begin{remark}
 We analyze the nonlinear system of the form \eqref{eq.plant} and assume a linear output, as both choices are necessary to develop computationally tractable conditions.
While parts of this theory might extend to more general (e.g., non-smooth) systems or those with nonlinear outputs, these generalizations would likely prevent the tractable LMI characterizations for $\mathcal{L}^-_{\delta 2}$ and $\mathcal{L}^+_{\delta 2}$ gains presented in Lemmas 1 and 2. For instance, a nonlinear output would likely require state-dependent matrices $P(x)$ and $Q(x)$, rendering the conditions intractable.
\end{remark}}

\subsection{Communication Topology and Replay Attack}

In practice, for remote control purpose, the controller and observer may be deployed in a remote control center. In this case, as shown in Fig. \ref{fig.replay_fram}, the sensor measurement $y$ and the control signal $u$ are transmitted through networks, making them susceptible to attacks.

\begin{figure}[h]
\centering
\includegraphics[width=1\columnwidth]{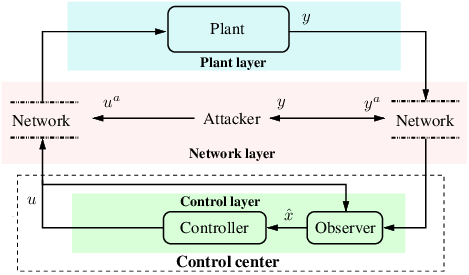}
\caption{Scheme of the communication topology.}
\label{fig.replay_fram}
\end{figure}

In this paper, we suppose that the network suffers from a \emph{replay attack}, and is free from delays and noise for simplicity.
The following resources are supposed to be available to the adversary:
\begin{itemize}
\item[(i)] The adversary can monitor and record the sensor measurement $y $ for all time $t$.
\item[(ii)] The adversary can arbitrarily modify the transmitted signals $u $ and $y $  to $u^a$ and $ y^a  $, respectively.
\end{itemize}

With the above resources, the replay attack strategy is given as follows. 
\begin{enumerate}
  \item \textbf{(Record)} From time $0$ to time $T$, the adversary records a sequence of sensor measurements $y$.
  \item \textbf{(Replay)} From time $T $ to time $2T$, the adversary replays the recorded sensor measurements to tamper the true sensor outputs, i.e.,
\begin{equation}\label{eq.replay_attack}
  y^a(t) = Y(t-T,x_0,u),\quad T\leq t< 2T.
\end{equation}
\item \textbf{(Contamination)} During replay, the adversary contaminates the nominal control signal with a series of malicious control sequences $u^a $ to damage the plant.
\end{enumerate}


{
\begin{remark}
  From the attacker's perspective, one of the main benefits of replay attacks is their simplicity and the possibility of executing them without any system knowledge. For this reason, 
we assume that the attacker has no system knowledge, in particular no knowledge of the watermarking signal. 
\end{remark}}

Before presenting the main detection mechanism of this work, we state the last assumption. We assume that the replay attack occurs when the system, and in particular its observer, has already reached a steady state. Otherwise  the replayed signal will not be consistent with the model and can be easily detected.

\begin{assumption}\label{ass.steady_state}
     The observer \eqref{eq.observer}  has already reached its steady state at time instant $0$. That is, we have   
\begin{equation*} 
\begin{split}
     &\|\widetilde X (t,\bx_0 ,v,y )\| \leq  \alpha_\omega(\|\omega(s)\|_{\infty}) +  \alpha_\nu(\|\nu(s)\|_{\infty}),
\end{split}
\end{equation*}
for all $t\geq 0$, where $\alpha_\omega,\alpha_\nu\in \mathcal{K}$.
\end{assumption}

In this paper, we focus on the replay attack detection problem for nonlinear systems described by   plant \eqref{eq.plant},   observer \eqref{eq.observer} and controller \eqref{eq.controller}. For linear systems, it has been shown that the innovation-based detector is not sufficient for replay attack detection under some stability conditions (see \cite{mo2009secure,mo2013detecting}), and thus the watermark-based detection is employed 
\cite{mo2009secure,mo2013detecting,FANG2020108698,zhao2023stochastic,chen2023replay,naha2023quickest,zhai2021switching,yang2022joint, yu2023reinforcement}.
 Motivated by these works, we consider the following two problems.
\begin{enumerate}
 \item[P1)] For nonlinear systems \eqref{eq.plant},  \eqref{eq.observer} and  \eqref{eq.controller}, is the innovation-based detector sufficient for replay attack detection?
    \item[P2)] If the innovation-based detector is not sufficient, how can we develop a  systematic watermark-based replay detection method for nonlinear systems? 
\end{enumerate}


\section{Innovation-based Replay Attack Detection}\label{sec.innovation_based_detection}
 
In this section, we aim to answer the  P1) by revealing conditions under which the innovation-based detector may fail to detect the replay attack. Specifically, the detector is assumed to have access to the  sensor measurement $y$ (or $y^a$ if under attack) and its estimation $\hat y$, taking the form of
  \begin{equation}\label{eq.detector}
\begin{split}
    &\Sigma_d:
\; \psi = \left\{\begin{aligned}
  &\textnormal{Attack}, \qquad  g(t)  > \vartheta\\
  &\textnormal{No Attack}, ~~    g(t)\leq \vartheta
\end{aligned}\right.\\
& g(t):=\frac{1}{\sigma}  \int_{t-\sigma}^t  \|\widetilde Y(s,   \bx_0, v, y )\|^2 d s
\end{split} 
\end{equation}
where { $g(t)$ is referred to as the monitoring signal of the detector}, $\widetilde Y(s,   \bx_0, v, y )$ is the innovation trajectory given by \eqref{eq.error_observer}, and  $\sigma\in\RR_{>0}$ is the window size.  In the following, we say that the detector is triggered if $\psi = \textnormal{Attack}$.

{ \begin{remark}\itshape
The detector \eqref{eq.detector} is a modification  of  the well-known $\chi^2$ detector \cite{MEHRA1971637,1997A}, which is widely used in CPSs and  takes the form
\begin{equation}\label{eq.detector_chi}
\begin{split}
  & \Sigma_{\chi}:
\; \psi_\chi = \left\{\begin{aligned}
  &\textnormal{Attack}, \qquad  g_{\chi}(t)  > \vartheta_{\chi}\\
  &\textnormal{No Attack}, ~~    g_{\chi}(t)\leq \vartheta_{\chi}
\end{aligned}\right. \\
& g_{\chi}(t)= \frac{1}{\sigma}\int_{t-\sigma}^t  \|\widetilde Y(s,   \bx_0, v, y )\|^2_{\mathcal{P}^{-1}} d s 
\end{split}
\end{equation} 
where $\mathcal{P} $ is the variance of the innovation for normalization and $\vartheta_{\chi  }>0$ is the threshold. For nonlinear systems, the innovation variance is generally nontrivial to compute, and thus we omit the normalization in \eqref{eq.detector_chi} and employ \eqref{eq.detector}.
\end{remark}}
 
 Under the replay attack, at time $t\in [T, 2T)$,  the observer evolves as follows,  
\begin{equation}\label{eq.observer_with_water_fir}\begin{split}
  \dot{\hat { x}} &= f(\hat x)+B (\kappa(\hat x )+v) -L[C\hat x+D(\kappa(\hat x )+v)]  +Ly^a \\
 \hat y  &= C\hat x +D(\kappa(\hat x )+v)
\end{split}
\end{equation}
where $  y^a$ is the replayed data given in \eqref{eq.replay_attack}. We denote by 
$\widehat X(t,\hat x_T,v, y^a)$ and $\widehat Y(t,\hat x_T,v, y^a):= C\widehat X(t,\hat x_T,v, y^a)+D(\kappa(\widehat X(t,\hat x_T,v, y^a))+v)$ the solution and output trajectory of observer 
\eqref{eq.observer_with_water_fir}  with initial state $x_T\in \RR^n$
and subject to the replay attack $y^a$.   In addition, the innovation trajectory under attack is denoted as $\widetilde Y(t,\overbar\bx_0 , v, y^a) := Y(t-T,x_0,u)- \widehat Y(t,\hat  x_T, v, y^a)$, with $\overbar\bx_0 := \text{col}(x_0,\hat  x_T)$.

Then, the following proposition summarizes a scenario where the detector may fail to detect the replay attack.

\begin{proposition}\label{pro.replay}\itshape
   Consider the system  \eqref{eq.plant} with observer   \eqref{eq.observer}, controller  \eqref{eq.controller} with $v(t)=0,\forall t\in \mathbb{R}_{\geq0}$, detector \eqref{eq.detector} and suppose that Assumption \ref{ass.delta_ISS} holds.  Then, if { $\sigma\leq T$ and}
   \eqref{eq.observer_with_water_fir} is $\delta$ISS with input $y^a$ and state $\hat x$, there exists a $\mathcal{K}\mathcal{L}$ function $\overbar \beta$ such that 
\begin{equation}\label{eq.g_t_dISS}
\begin{split}
        g(t)&\! \leq\!   \frac{1}{\sigma} \!  \int^{T}_{t-\sigma} \!  \! \!  \|\widetilde Y(s,   \bx_0, v, y )\|^2 d s\!+\!  \frac{1}{\sigma} \! \int_{T}^t  \! \!   \|\widetilde Y(s\! -\! T,   \bx_0, v, y )\|^2 d s \\
      &+\overbar \beta(\|\hat x(T)-\hat x(0) \|, t) , ~~\quad  t\in[T ,T+\sigma) ,\\
     g(t) &\leq  \! \frac{1}{\sigma} \int_{t-\sigma}^t    \|\widetilde Y(s-T,   \bx_0, v, y )\|^2 d s \! +\! \overbar \beta(\|\hat x(T)-\hat x(0) \|, t), \\
     &\qquad\qquad\qquad\qquad\qquad \qquad  t\in[  T+\sigma,2T).
\end{split}
\end{equation}
\end{proposition}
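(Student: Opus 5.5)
The plan is to compare the attacked observer trajectory on $[T,2T)$ with the healthy observer trajectory recorded on $[0,T)$, using the $\delta$ISS property of \eqref{eq.observer_with_water_fir}. The key observation is that, with $v\equiv 0$, the healthy observer on $[0,T)$ is exactly system \eqref{eq.observer_with_water_fir} driven by the input $y(s)=Y(s,x_0,u)$. Under attack, the observer on $[T,2T)$ is the same time-invariant system driven by $y^a(t)=Y(t-T,x_0,u)$, which is the same input shifted by $T$. Hence, writing $z(t):=\widehat X(t,\hat x_T,0,y^a)$ and $w(t):=\widehat X(t-T,\hat x_0,0,y)$ for $t\in[T,2T)$, both trajectories satisfy the same ODE with the same input. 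They differ only in their initial conditions, $z(T)=\hat x(T)$ and $w(T)=\hat x(0)$. The $\delta$ISS definition (Definition~\ref{def.delta_iss}) applied with $u_1=u_2$ then gives
\begin{equation*}
\|z(t)-w(t)\|\leq \beta(\|\hat x(T)-\hat x(0)\|,t-T).
\end{equation*}

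Next I will transfer this bound to the innovation. The attacked innovation is $\widetilde Y(t,\overbar\bx_0,0,y^a)=y^a(t)-Cz(t)-D\kappa(z(t))$. The shifted healthy innovation is $\widetilde Y(t-T,\bx_0,0,y)=y^a(t)-Cw(t)-D\kappa(w(t))$. Their difference is bounded by $(\|C\|+\|D\|l_\kappa)\|z-w\|$, using \eqref{eq.lip_k}. Applying $\|a+b\|^2\leq (1+\epsilon)\|a\|^2+(1+1/\epsilon)\|b\|^2$, or simply the inequality $\|a+b\|^2\le 2\|a\|^2+2\|b\|^2$, I obtain
\begin{equation*}
\|\widetilde Y(s,\overbar\bx_0,0,y^a)\|^2\leq c_1\|\widetilde Y(s-T,\bx_0,0,y)\|^2 + c_2\beta^2(\|\hat x(T)-\hat x(0)\|,s-T).
\end{equation*}
The constant $c_1$ must be exactly $1$ to match \eqref{eq.g_t_dISS}. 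This is the main obstacle, and I would handle it by the cross-term expansion
\begin{equation*}
\|a+b\|^2=\|a\|^2+2a^\top b+\|b\|^2\leq \|a\|^2+2\|a\|\|b\|+\|b\|^2.
\end{equation*}
The healthy innovation $a$ is uniformly bounded, by Assumption~\ref{ass.steady_state} together with the noise bounds $\overbar\omega,\overbar\nu$; call the bound $\bar a$. The term $2\bar a\|b\|+\|b\|^2$ is then a class-$\mathcal{KL}$ function of $(\|\hat x(T)-\hat x(0)\|,s-T)$.

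Finally, I split the window integral defining $g(t)$ in \eqref{eq.detector}. For $t\in[T,T+\sigma)$ the window $[t-\sigma,t]$ straddles $T$; the condition $\sigma\le T$ guarantees $t-\sigma\ge 0$. The part on $[t-\sigma,T]$ is the healthy innovation and is kept unchanged, while the part on $[T,t]$ is bounded as above. For $t\in[T+\sigma,2T)$ the whole window lies in the attack phase. In both cases the residual term is $\frac{1}{\sigma}\int_{\max(T,t-\sigma)}^t \rho(\|\hat x(T)-\hat x(0)\|,s-T)\,ds$, where $\rho\in\mathcal{KL}$ is the function obtained in the previous step. Since $\rho$ is decreasing in its second argument, this residual is at most $\rho(\|\hat x(T)-\hat x(0)\|,\max(0,t-T-\sigma))$. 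It remains to show this is dominated by a function $\overbar\beta(\cdot,t)$ of class $\mathcal{KL}$ in the absolute time $t$. On the bounded interval $t\in[T,2T)$ any such shift can be absorbed, for example by taking $\overbar\beta(r,t):=\rho(r,\max(0,t-T-\sigma))e^{\,2T-t}$ adjusted so that it is strictly decreasing. This reshaping of the time argument is routine, though slightly fiddly, and I would state it briefly.
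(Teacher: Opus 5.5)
Your proposal is correct and follows essentially the same route as the paper's proof (Appendix~\ref{prof.fail_detection}). Both arguments use a $\delta$ISS comparison of the attacked observer with the time-shifted healthy one (same input, different initial states), a Lipschitz transfer to the innovation, and a first-order bound on the difference of squared innovations (your cross-term expansion plays the role of the paper's $(\|a\|-\|b\|)(\|a\|+\|b\|)\le M\|a-b\|$), followed by the same split of the window at $T$; you additionally track the elapsed time $t-T$ and the reshaping into a $\mathcal{KL}$ function of $t$ more carefully than the paper does. One small correction: take the uniform bound on the healthy innovation from the ISS property of the error dynamics in Assumption~\ref{ass.delta_ISS}, since Assumption~\ref{ass.steady_state} is not among the hypotheses of this proposition.
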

The proof is postponed to Appendix \ref{prof.fail_detection}.

 Proposition \ref{pro.replay} shows that,    if  $v(t)=0,\forall t\in \mathbb{R}_{\geq0}$ and  
   \eqref{eq.observer_with_water_fir} is $\delta$ISS,
   the detector \eqref{eq.detector} is insufficient for reliably detecting replay attacks. Specifically, {  because all of the innovation terms $\widetilde Y(t,   \bx_0, v, y )$ in \eqref{eq.g_t_dISS} are integrated over  $t\in [0, T )$ and the replay occurs since time instant $T$, }the monitoring signal { $g(t)$} under attack will converge to { some values bounded as in the} {No Attack} scenario. As a result,  the detector may  be triggered only immediately after the replay occurs, and it fails to raise any alarm   as time tends to   infinity. Furthermore, a skilled adversary can potentially avoid even this initial trigger, as demonstrated in the following example.
 
\begin{example}\itshape\label{rem.attack's_intention}
  For systems whose steady state is periodic/quasi-periodic and whose estimator satisfies the $\delta$ISS property given in Proposition \ref{pro.replay}, the adversary could wait until the initial portion of the recorded signal closely matches the current measurements before starting replay. In this case,  $\hat x(T)$ is close to $\hat x(0) $, making the term $\overbar \beta(\|\hat x(T)-\hat x(0) \|, t)$ sufficiently small so that the resulting monitoring signal is indistinguishable (considering sensor noise) from nominal measurements.  
\end{example}

{ We highlight that Proposition~\ref{pro.replay} motivates a detection mechanism for replay attacks.  Indeed, the replay attack can be  detected if \eqref{eq.observer_with_water_fir} satisfies
\begin{equation}\label{eq.exp_expanding}
   \lim_{t\to\infty}\|\widehat X(t,\hat x_T,0,y^a)- \widehat X(t-T,\hat x_0,0,y )\| =\infty.
\end{equation}
Then, it is obvious that during a replay attack the signal $g(t)$ will grow until the detector is triggered. A similar strategy has been explored for instance in \cite{mo2013detecting} for linear systems. However, two issues should be addressed when imposing \eqref{eq.exp_expanding}. First, it opens the possibility of compromising the stability of }{ the whole system by  merely replaying the sensor measurements. Second, designing  feedback gains $K$ and $L$ that enable the system to satisfy  Assumption \ref{ass.delta_ISS} and \eqref{eq.exp_expanding} simultaneously is usually nontrivial in nonlinear cases.}

In view of the previous analysis,  we turn to a watermark-based approach, providing a positive answer to the P2).

 \section{Watermark-based Replay Attack Detection}\label{sec.main}
The watermarking signal is introduced through the term $v$ in controller \eqref{eq.controller}.  
As shown in Fig. \ref{fig.replay_fram22}, it is an additional signal added to the feedback control signal, without requiring a redesign of the controller, observer or detector. 
In this section, { we use the incremental gains introduced in Section \ref{sec.incremental_gain} to evaluate the detection performance and the control system performance loss.  Then, a systematic watermark design approach is proposed based on this evaluation framework.}

 For convenience, the watermark signal is separated into two parts, i.e.,
\begin{equation}\label{eq.watermark}
    v=G\xi(t),
\end{equation}
 where $G \in\mathbb{R}^{m\times m}$ is the weight (gain) matrix to be designed and $\xi(t)$ is a Lebesgue  integrable watermark signal    satisfying
\begin{equation}\label{eq.watermark_bound}
    \frac{1}{t } \int_{t_0}^{t_0+t}\|\xi(s )\|^2ds  \leq  1, \qquad\forall t> t_0.
 \end{equation}
 Denote $ \mathcal{U}_\xi$ the set of $\xi$ that satisfying \eqref{eq.watermark_bound}.

With the watermark given in \eqref{eq.watermark}, the attacked observer \eqref{eq.observer_with_water_fir}  becomes \begin{equation}\label{eq.observer_with_water}\begin{split}
  \dot{\hat { x}} &= f(\hat x )+B( \kappa(\hat x )+G\xi)  - L[C\hat x\!+\!D(\kappa(\hat x )  + G\xi)] \! +\!Ly^a \\
 \hat y  &= C\hat x +D(\kappa(\hat x )+G\xi).
\end{split}
\end{equation}

\begin{figure}[t]
\centering
\includegraphics[width=1\columnwidth]{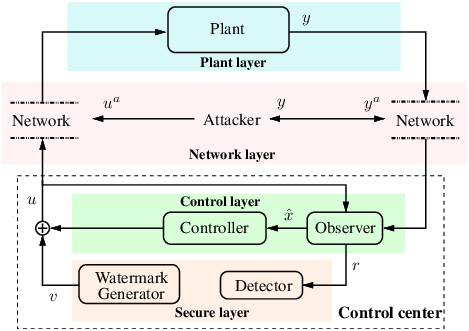}
\caption{Scheme of the communication topology and security layer.}
\label{fig.replay_fram22}
\end{figure}

\begin{remark}\itshape
Watermark design is well studied for linear systems, but it is not straightforward to extend these methods to the nonlinear case.
    For linear systems, one of the most widely used watermarks is  independent and identically distributed (i.i.d.) Gaussian noise \cite{mo2009secure,zhai2021switching,FANG2020108698,zhao2023stochastic,chen2023replay}, i.e.,
$ v \sim \mathcal{N}(0,\Sigma) $.
 With $v$, the covariance of the innovation under a replay attack differs from the healthy case, and this difference can be calculated precisely  to evaluate the detection performance. In addition, the performance loss caused by the watermark can also be calculated precisely  by extra watermark-induced LQG performance loss. However, when the plant is nonlinear, {the exact evaluation of these two factors is much more complex}. Therefore, { appropriate} indices and, correspondingly, a new framework are required, which motivates the present work.
\end{remark}

\subsection{Replay Attack Detection Performance}\label{subsec.detection_perform}

An ideal detection mechanism will never trigger an alarm in the absence of an attack (no false positives) and must trigger an alarm under the replay attack (no false negatives). 
{Under  Assumption 1, the ISS property of the error dynamics holds  independent of the inputs. Therefore, in the absence of an attack, even if the watermarking signal changes the statistics of the innovation, the upper bound on the estimation error, and hence the upper bound on $g(t)$, remains unchanged. In addition, under the replay attack,  the difference between the current and historical watermark employed by the estimator guarantees a lower bound on $g(t)$,  thus ensuring attack detection.}
Hence, to quantify the detection performance of the system, the following theorem presents the upper bound on $g(t)$ in the absence of an attack and the lower bound on $g(t)$ under the replay attack. 
{ Additionally, this result can be used to design the threshold $\vartheta$ in the detector \eqref{eq.detector}, to avoid false positives induced by the noise.  }


\begin{theorem}\label{theorem.output_difference}
    Consider  system   \eqref{eq.plant} with observer   \eqref{eq.observer}, controller \eqref{eq.controller}, detector  \eqref{eq.detector} and let Assumptions~\ref{ass.delta_ISS} and \ref{ass.steady_state} hold. In addition,  suppose that system \eqref{eq.observer_with_water} {has an $[\mathcal{L}^-_{\delta 2}]^\xi_{\hat y}$ gain as defined in Definition~\ref{def.L_minus}}. Then, in the absence of replay attack, for all $t\geq 0$,
   \begin{equation}\label{eq.dete_no_attack_pro}
   \begin{split}
       & g(t) \leq \Big[ \|C\|  (\alpha_\omega (\overbar \omega)+\alpha_\nu(\overbar \nu))+\overbar \nu \Big]^2=:g_n,     
   \end{split}
    \end{equation}
    where  $\overbar \omega$ and $\overbar \nu$ are the upper bounds of $\|\omega\|_\infty$ and $\|\nu\|_\infty$, respectively, $\alpha_\omega$ and $\alpha_\nu$  are   given in Assumption \ref{ass.steady_state}.
 
     In the presence of a replay attack launched at $T$, for all $t\in[ T,2T)$,
\begin{equation}\label{eq.dete_with_attack_pro}
\begin{split}
 g(t)&\geq    \frac{[\mathcal{L}^-_{\delta 2}]^\xi_{\hat y}-\varepsilon }{\sigma}  \int^{ t}_{\Theta}   \|\xi(s)-\xi(s-T)\|^2 d s\\
  &-\frac{\overbar \alpha^-_a(\|\hat x(\Theta)-\hat x(\Theta-T)\|)}{\sigma} - g_n=:g_{a}(t),
\end{split}
\end{equation}
 where $ \varepsilon >0 $ is an arbitrarily small constant, $\Theta=T$ if $ T\leq t<T+\sigma$ and $\Theta=t-\sigma$ if $t\in[  T+\sigma,2T)$.
\end{theorem}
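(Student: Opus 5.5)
For the no-attack bound \eqref{eq.dete_no_attack_pro}, we bound the innovation pointwise. In the absence of an attack, $\widetilde Y(s,\bx_0,v,y)=C\widetilde X(s,\bx_0,v,y)+\nu(s)$, since the $Du$ terms cancel between plant and observer. By Assumption~\ref{ass.steady_state}, $\|\widetilde X(s)\|\le \alpha_\omega(\overbar\omega)+\alpha_\nu(\overbar\nu)$ for every $s\ge0$, where we use that $\alpha_\omega,\alpha_\nu\in\mathcal K$ are increasing together with $\|\omega\|_\infty\le\overbar\omega$ and $\|\nu\|_\infty\le\overbar\nu$. The triangle inequality then gives $\|\widetilde Y(s)\|\le \|C\|(\alpha_\omega(\overbar\omega)+\alpha_\nu(\overbar\nu))+\overbar\nu$. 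Squaring and averaging over the window of length $\sigma$ yields $g(t)\le g_n$. This bound does not depend on $v$, as remarked before the theorem.

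For the attack case, fix $t\in[T,2T)$ and write $\hat y^a(s)=\widehat Y(s,\hat x_T,v,y^a)$ for the attacked observer output. Since $y^a(s)=Y(s-T,x_0,u)$, we decompose the innovation as
\[
\widetilde Y^a(s)=\big[\widehat Y(s-T,\hat x_0,v,y)-\hat y^a(s)\big]+\big[Y(s-T,x_0,u)-\widehat Y(s-T,\hat x_0,v,y)\big].
\]
The second bracket is exactly the healthy innovation at time $s-T\in[0,T)$, so by the first part its norm is at most $\sqrt{g_n}$ pointwise. The first bracket compares two trajectories of the same system \eqref{eq.observer_with_water}. On $[\Theta,t]$ the attacked trajectory is driven by $(\xi(s),y^a(s))$. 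The healthy trajectory, shifted by $T$, obeys \eqref{eq.observer_with_water} driven by $(\xi(s-T),y^a(s))$, because the healthy observer at time $s-T$ received exactly $y(s-T)=y^a(s)$. The two trajectories therefore share the input $y^a$ and differ only in $\xi$ and in their initial states $\hat x(\Theta)$ and $\hat x(\Theta-T)$. The exogenous input $y^a$ must be treated as common to both copies; this is consistent with the Lyapunov characterization, where it cancels in the incremental inequality.

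We then apply Definition~\ref{def.L_minus}, via the Lyapunov function of Proposition~\ref{def.Lyapunov_L_minus} integrated from $\Theta$ to $t$, with $\gamma^-=[\mathcal{L}^-_{\delta 2}]^\xi_{\hat y}-\varepsilon$. Such a $\gamma^-$ is admissible because the gain is a supremum. Time-shifting the interval to start at $\Theta$ is allowed since the system is time invariant in $(\hat x,\xi,y^a)$. Integrating \eqref{eq.ineq_Vminus} and dropping $V^-(t)\ge0$ gives
\[
\int_\Theta^t\|\hat y^a-\widehat Y(\cdot-T)\|^2\,ds\ \ge\ \gamma^-\int_\Theta^t\|\xi(s)-\xi(s-T)\|^2\,ds-\overbar\alpha^-_a(\|\hat x(\Theta)-\hat x(\Theta-T)\|).
\]
This is the squared form that matches the statement, with $\overbar\alpha^-_a$ taken as the upper sandwich bound in \eqref{eq.sandwich_Vminus}.

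The main obstacle is converting this squared bound on the first bracket into a lower bound on $\|\widetilde Y^a\|^2$, given that the second bracket is only bounded, not small. The inequality $\|a+b\|^2\ge\|a\|^2-\ldots$ needs care: the naive estimate $\|a+b\|^2\ge \tfrac12\|a\|^2-\|b\|^2$ introduces a factor $\tfrac12$. To match the stated constants, I would first try integrating $\|a+b\|^2\ge\|a\|^2+2a^\top b+\|b\|^2$ and absorbing the cross term, or else accept a slightly different constant if exact matching fails. In either case the healthy-innovation contribution is averaged over the window and contributes at most $g_n$, since $(t-\Theta)/\sigma\le1$. The integral and initial-state terms are divided by $\sigma$. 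Finally, since $g(t)$ integrates over $[t-\sigma,t]\supseteq[\Theta,t]$ and the integrand is nonnegative, restricting to $[\Theta,t]$ gives $g(t)\ge g_a(t)$. This covers both cases $\Theta=T$ (where the window $[t-\sigma,T)$ is pre-attack and is discarded) and $\Theta=t-\sigma$.
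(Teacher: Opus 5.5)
Your plan follows the paper's proof step for step until the last one. The paper does the same things you do: the pointwise bound on $C\widetilde X+\nu$ giving \eqref{eq.dete_no_attack_pro}; the comparison of the attacked observer with the $T$-shifted healthy one as two solutions of \eqref{eq.observer_with_water} that share $y^a$ and differ only in $\xi$ and in $\hat x(\Theta)$, $\hat x(\Theta-T)$; the gain estimate with $\gamma^-=[\mathcal{L}^-_{\delta 2}]^\xi_{\hat y}-\varepsilon$ on $[\Theta,t]$; and discarding the pre-attack part of the window. One small correction: you do not need, and the hypothesis does not give you, the Lyapunov function of Proposition~\ref{def.Lyapunov_L_minus}. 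In the paper $\|\cdot\|_{\mathcal{L}_2}$ denotes the integral of the squared norm, so Definition~\ref{def.L_minus} directly gives your squared inequality, with $\overbar\alpha^-_a$ its $\alpha^-$. You must read the definition as holding uniformly in the common input $y^a$, which the paper does implicitly.

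The gap is the step you flagged, and absorbing the cross term cannot close it. Put $\Gamma:=([\mathcal{L}^-_{\delta 2}]^\xi_{\hat y}-\varepsilon)\int_\Theta^t\|\xi(s)-\xi(s-T)\|^2ds-\overbar\alpha^-_a(\|\hat x(\Theta)-\hat x(\Theta-T)\|)$. You have $\int_\Theta^t\|a\|^2ds\ge\Gamma$ and $\|b\|^2\le g_n$, and you need $\int_\Theta^t\|a+b\|^2ds\ge\Gamma-\sigma g_n$.

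Young's inequality gives $\|a+b\|^2\ge(1-\eta)\|a\|^2-(\eta^{-1}-1)\|b\|^2$. Keeping the coefficient of $\|b\|^2$ at $-1$ forces $\eta\ge\frac12$, so the coefficient of $\|a\|^2$ is at most $\frac12$.

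No other manipulation helps, because the two facts do not imply the target. For $t\ge T+\sigma$, take $\|a\|^2\equiv 4g_n$ and $b=-\frac12 a$ on $[t-\sigma,t]$. Both facts hold with $\Gamma=4\sigma g_n$, yet $\int_{t-\sigma}^t\|a+b\|^2ds=\sigma g_n<3\sigma g_n$.

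The paper reaches the stated constants only through the inequality used right after \eqref{eq.dete_t_geq_T_sigma}. In your notation it reads $\int\|a+b\|^2+\int\|b\|^2\ge\int\|a\|^2$. This is false without a factor $2$ on the left; the same choice $b=-\frac12 a$ is a counterexample. So \eqref{eq.dete_with_attack_pro} with these constants is not established by the paper's argument either. Your suspicion is correct.

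What your argument does prove is one of two weaker bounds:
\begin{itemize}
\item $g(t)\ge\frac{\Gamma}{2\sigma}-g_n$, from $\|a\|^2\le 2\|a+b\|^2+2\|b\|^2$;
\item $\sqrt{\sigma g(t)}\ge\sqrt{\max(\Gamma,0)}-\sqrt{\sigma g_n}$, from the triangle inequality in $L^2(\Theta,t)$.
\end{itemize}
Either can replace $g_a(t)$. With either, the no-false-negative condition of Corollary~\ref{coro.dete} becomes $\Gamma>4\sigma g_n$ instead of $\Gamma>2\sigma g_n$. Rather than trying to match the stated constants, state and prove one of these corrected bounds.
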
 
 
The proof is postponed to Appendix \ref{Appen.output_difference}.

With Proposition \ref{theorem.output_difference}, we  obtain the following corollary.
\begin{corollary}\label{coro.dete}\itshape
Under the conditions in Theorem \ref{theorem.output_difference}, the 
detector will have no false positive alarms for $t\geq 0$ and no false negative alarms for $t\in[ T,2T)$ if
 \begin{equation}\label{eq.dete_condition}
     \begin{split}
 g_n < g_a(t),~~\forall t\in[ T,2T)
     \end{split}
\end{equation}
and the threshold $\vartheta$ is chosen such that\begin{equation}\label{eq.no_false_alarm}
     \begin{split}
 g_n\leq \vartheta<g_a(t),~~\forall t\in[ T,2T).
     \end{split}
\end{equation}

\end{corollary}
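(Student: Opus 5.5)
The plan is to read the corollary straight off the two bounds of Theorem~\ref{theorem.output_difference}, recalling the decision rule of detector \eqref{eq.detector}: an alarm is raised exactly when $g(t)>\vartheta$.

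For the no-false-positive claim, take any $t\geq 0$ with no replay attack in progress. By \eqref{eq.dete_no_attack_pro} we have $g(t)\leq g_n$. The left inequality in \eqref{eq.no_false_alarm} gives $g_n\leq\vartheta$, so $g(t)\leq\vartheta$ and the detector outputs $\psi=\textnormal{No Attack}$. This step uses only Assumptions~\ref{ass.delta_ISS} and \ref{ass.steady_state} through $g_n$; the watermark does not affect the bound.

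For the no-false-negative claim, take any $t\in[T,2T)$ with a replay attack launched at $T$. By \eqref{eq.dete_with_attack_pro} we have $g(t)\geq g_a(t)$. The right inequality in \eqref{eq.no_false_alarm} gives $\vartheta<g_a(t)$. Hence $g(t)>\vartheta$ and $\psi=\textnormal{Attack}$. Since $\Theta$ is defined piecewise in the theorem, I will state explicitly that $g_a(t)$ already covers both sub-intervals $[T,T+\sigma)$ and $[T+\sigma,2T)$, so no separate case analysis is needed.

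Finally, I will note the role of \eqref{eq.dete_condition}. The strict inequality $g_n<g_a(t)$ for all $t\in[T,2T)$ is exactly what makes the interval $[g_n,g_a(t))$ nonempty, so that a threshold satisfying \eqref{eq.no_false_alarm} can exist. The one subtle point is that $\vartheta$ must be a single constant valid for every $t$. This requires $g_n<\inf_{t\in[T,2T)}g_a(t)$, and the strict inequality alone does not guarantee that when the infimum is not attained. I will therefore read \eqref{eq.no_false_alarm} as the hypothesis that such a uniform $\vartheta$ has been chosen, and then the argument above goes through with no further obstacle.
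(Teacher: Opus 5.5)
Your argument is correct and matches the paper, which gives no separate proof: it reads the corollary directly off Theorem~\ref{theorem.output_difference} via $g(t)\le g_n\le\vartheta$ without attack and $g(t)\ge g_a(t)>\vartheta$ under attack. One remark goes further than your uniform-threshold caveat: at $t=T$ one has $\Theta=T$, so the integral in $g_a$ vanishes and $g_a(T)\le -g_n<g_n$ (since $g_n\ge\overbar{\nu}^2>0$); hence \eqref{eq.dete_condition} can never hold on all of $[T,2T)$, and the corollary is vacuous as literally stated (it is meaningful only on a subinterval bounded away from $T$), which is a defect of the statement rather than of your derivation.
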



{
\begin{remark}
    Under the replay attack, the sensor measurements are { substituted} by the previously recorded healthy data, so any modifications to the control input introduced by the attacker are not reflected in the sensor measurements received by the control center. Hence, the attack input signal does not appear in either the detection mechanism or   the detection performance analysis.
\end{remark}}

  {\begin{remark}\label{remark.initial}\itshape
  The term $-(1/{\sigma}) \overbar \alpha^-_a(\|\hat x(\Theta)-\hat x(\Theta-T)\|) $ in \eqref{eq.dete_with_attack_pro} is negative because, in the worst case, the initial state difference, i.e. $\hat x(\Theta)-\hat x(\Theta-T)$, may cancel out the effect of the watermark signal   on the detector's output. However, {in most cases, this term will be small} over  $t\in [T,T+\sigma)$, because in  this period   $\hat x(\Theta)-\hat x(\Theta-T) = \hat x(T)-\hat x(0)$, which is typically chosen to be small by the adversary  to avoid   triggering the detector immediately, as analyzed in Section \ref{sec.innovation_based_detection}. 
\end{remark}}

\subsection{ Control System Performance Loss}\label{subsec.limit_perform}
Since the watermark is essentially a disturbance acting on the physical plant, it {inevitably} degrades the performance of the closed-loop control system. In this subsection, this degradation is evaluated by comparing the output difference of the physical plant with and without the watermark.

With the watermark, by \eqref{eq.plant}-\eqref{eq.controller}, the observer and nonlinear plant   evolve as \begin{equation}\label{eq.perform_with_watermark_full}
\begin{split}
    \dot{\hat x}&= f(\hat x)+B(\kappa(\hat{ x})+G\xi)+ L(Cx-C\hat x + \nu) \\
    \dot{x}&= f(x)+B(\kappa(\hat{ x})+G\xi ) +\omega\\
  y  &= Cx+D(\kappa(\hat{ x})+G\xi  )+\nu.
\end{split}
 \end{equation}

 By  Assumption~\ref{ass.delta_ISS},   the effect of the watermark on the trajectories of the closed-loop system  is bounded. In this subsection, we provide a bound on the degradation via an $\mathcal{L}^+_{\delta 2}$ gain, to quantify how different  the system response becomes once the watermark is implemented.

\begin{theorem}\label{thm.per_loss}
Suppose that  system \eqref{eq.perform_with_watermark_full} has an $[\mathcal{L}^+_{\delta 2}]^\xi_{\hat y}$ gain as defined in Definition~\ref{def.L_ifty}. Then for any given time instant $\tau> 0$, 
the following bound holds

\begin{equation}\label{eq.L_infty_gain}
\begin{split}
    \|( Y(t,x_{\bar t_0}, \xi)- Y(t,x_{\bar t_0}, 0)_\tau\|_{\mathcal{L}_2}  & \leq ([\mathcal{L}^+_{\delta 2}]^\xi_{y} +\varepsilon')(\tau -\overline t_0),
\end{split}
\end{equation}
 where  
 $\varepsilon'>0$ is an arbitrary small constant.
\end{theorem}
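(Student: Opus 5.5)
The plan is to apply Definition~\ref{def.L_ifty} directly to system \eqref{eq.perform_with_watermark_full}, viewed as a system with input $\xi$ and output $y$. We compare two trajectories: one driven by the watermark $\xi$ and one driven by $0$. Both start from the same initial condition $x_{\bar t_0}$ (with the same observer state) at time $\overline t_0$, and both are subject to the same noise realizations $\omega,\nu$, so the noise cancels in the incremental comparison.

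First, since the $[\mathcal{L}^+_{\delta 2}]^\xi_{y}$ gain is defined as an infimum, for any $\varepsilon'>0$ there is an admissible $\gamma^+\le [\mathcal{L}^+_{\delta 2}]^\xi_{y}+\varepsilon'$ and some $\alpha^+\in\mathcal{K}$ for which \eqref{eq.def_L_infty_gain} holds. Applying it with $x_1=x_2=x_{\bar t_0}$, $u_1=\xi$ and $u_2=0$, the initial-state term $\alpha^+(\|x_1-x_2\|)=\alpha^+(0)$ vanishes. This leaves
\[
\|(Y(t,x_{\bar t_0},\xi)-Y(t,x_{\bar t_0},0))_\tau\|_{\mathcal{L}_2}\le \gamma^+\|(\xi)_\tau\|_{\mathcal{L}_2}.
\]

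Second, bound the truncated input norm using the energy constraint \eqref{eq.watermark_bound}. With $t_0=\overline t_0$ and horizon $\tau-\overline t_0$ it gives $\int_{\overline t_0}^{\tau}\|\xi(s)\|^2ds\le \tau-\overline t_0$. Recall that the paper's $\|\cdot\|_{\mathcal{L}_2}$ is the integral of the squared norm without a square root, taken from the initial time. Hence $\|(\xi)_\tau\|_{\mathcal{L}_2}\le \tau-\overline t_0$. Combining this with the previous step yields \eqref{eq.L_infty_gain}.

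The only delicate point is bookkeeping rather than analysis. The definition states the estimate with the lower limit of integration equal to the initial time, so it must be applied with the initial instant shifted to $\overline t_0$. This is legitimate because the system is time-invariant: one can relabel time, or equivalently note that the definition holds for an arbitrary initial time $t_0$. One must also check the units convention. Since $\|\cdot\|_{\mathcal{L}_2}$ here is the unsquared-root integral, the gain multiplies the energy and the bound is linear in $\tau-\overline t_0$, as stated. Beyond these checks the argument is a two-line consequence of the definition and the watermark normalization.
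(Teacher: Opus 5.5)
Your proposal is correct and is essentially the paper's argument, which the paper only sketches: compare \eqref{eq.perform_with_watermark_full} with and without the watermark from the same initial state so that the $\alpha^+$ term drops out, take an admissible gain within $\varepsilon'$ of the infimum, and bound the truncated watermark energy by $\tau-\overline t_0$ using \eqref{eq.watermark_bound}. Your remarks on the additive common noise cancelling in the increment, the squared-norm convention for $\|\cdot\|_{\mathcal{L}_2}$, and the implicit requirement $\tau>\overline t_0$ just make explicit details that the paper leaves unstated.
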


The proof of Theorem \ref{thm.per_loss} can be obtained by comparing \eqref{eq.perform_with_watermark_full} with and without the watermark, along with the fact that $\xi(t)$ is Lebesgue-integrable and satisfies 
\eqref{eq.watermark_bound}.

To simplify the analysis and the design procedure, for the rest of the paper, we will not consider the effect of the estimation error   $\tilde x$ when  computing the control system performance loss,  i.e., instead of using \eqref{eq.perform_with_watermark_full},  the following approximated  system is adopted 
\begin{equation}\label{eq.perform_with_watermark2}
\begin{split}
  \dot{x}'&= f(x')+B(\kappa(x' )+G\xi) +\omega\\
  y'  &= Cx' +D(\kappa(x' )+G\xi  )+\nu.
\end{split}
\end{equation}
Since the estimation error is ISS by means of Assumption~\ref{ass.delta_ISS} independently on the watermarking signal, this simplification is reasonable if  $l_{\kappa} $ is relatively small.

\color{black}

\subsection{Two Examples of Watermark Signal $\xi(t)$}
This subsection gives two examples for the design of the  watermark signal $\xi(t)$, and the watermark gain $G$ will be optimized in the next subsection.

{The watermarking signal must satisfy  the  Lebesgue
 integrability of  $\xi(t)$ and \eqref{eq.watermark_bound} in order to ensure that the system performance loss is bounded as \eqref{eq.L_infty_gain}.} In addition, to achieve \eqref{eq.dete_condition}, a relatively large $\int^{ t}_{\Theta} \|\xi(s)-\xi(s-T)\|^2 d s$ is required. Therefore, for some  $s\in[\Theta,t ]$, we need that
$\xi(s)\neq \xi(s- T )$,   
implying that $\xi(t)$ cannot be a periodic signal. We highlight that quasi-periodic signals may also be undesired as   $\int^{ t}_{\Theta} \|\xi(s)-\xi(s-T)\|^2 d s$ is  small {if $\xi(s)$ is close to $\xi(s-T)$ for all $s\in[\Theta,t]$.} 

Following the above analysis, we give two desirable examples of $\xi(t)$.

\subsubsection{Chaotic Watermark}

Chaotic systems are deterministic systems, but appear random and non-periodic \cite{baker1996chaotic}.
The chaotic signal $\xi(t)$ can be generated by a chaotic system of the following form
\begin{equation}\label{eq.chaotic_system}
  \left\{\begin{aligned}
  \dot{\theta}&= A\theta+\phi(\theta)\\
  \xi &= \Lambda {\theta}
\end{aligned}\right.
\end{equation}
where $\theta\in \overbar\Theta \subset \mathbb{R}^{n_\xi}$ with $\overbar\Theta$ a compact set, $\xi\in \mathcal{U}_\xi \subset\mathbb{R}^{m}$,   $\phi:\RR^{n_\xi}\to\RR^{n_\xi}$ is a nonlinear vector field. Since $\xi(t)$ is continuous, it is Lebesgue  integrable  and  \eqref{eq.watermark_bound} can be achieved by adjusting the output matrix $\Lambda$. 
In addition, since the chaotic signal is non-periodic, it is clear $\int^{ t}_{ \Theta   }  \|\xi(s) - \xi(s-T)\|^2  d s>0$ for all $t>\Theta$. Moreover, even if $\xi(T)$ is close to $\xi(t-T)$, $\xi(t)$ and $\xi(t-T)$ may deviate from each other exponentially \cite[Chapter 26]{greiner2003classical}.

\subsubsection{Bernoulli Watermark}
The watermark can also be generated by stochastic systems. Here we consider  a simple watermark generated by Bernoulli distribution.

For $i = 1,2,..,m$, let the watermark be, for $t\in [t_i,t_i+\delta_t)$,
\begin{equation}\label{equ.bern_water} 
 \xi_{i}(t)= \left\{\begin{aligned}
  \frac{1}{\sqrt{m} },\quad  \varrho( t_i) = 1,\\
  -\frac{1}{\sqrt{m}},\quad\varrho( t_i) =0
\end{aligned}\right.
\end{equation}
where   $\{\varrho(t_i)\}$ is a series of random variables and each of them  follows Bernoulli distribution with ${\mathbb{P}}(\varrho(t_i) = 1) ={\mathbb{P}}(\varrho(t_i) = 0) =0.5$, and $ \mathbb{E}(\varrho( t_i)\varrho( t_j))=0, \forall i\neq j$. With fixed $\delta_t>0$, $\xi(t)$ is Lebesgue  integrable  and it is easy to verify that this watermark satisfies \eqref{eq.watermark_bound}. In addition, with this watermark, $ \int^{t}_{\Theta}  ( \|\xi(s)-\xi(s-T)\|^2)d s >0$ can be guaranteed with high probability if the term $\sigma$ is selected sufficiently large.

 { The main difference between watermark signals lies in how they guarantee a sufficiently large $ \int^{ t}_{\Theta}   \|\xi(s)-\xi(s-T)\|^2 d s$. For example, the detection performance of the chaotic watermark  is guaranteed by the inherent properties of chaotic systems, e.g., non-periodic and positive Lyapunov exponents.
On the other hand, for the Bernoulli watermark, the detection performance is guaranteed   in a probabilistic sense.  }

\section{Design of the Watermark Gain}\label{sec.solving_G}
{ In this section, we aim to design the watermark gain $G$. According to Theorems \ref{theorem.output_difference} and \ref{thm.per_loss}, a relatively large value of $[\mathcal{L}^{-}_{\delta 2}]^{\xi}_{\hat y}$ for \eqref{eq.observer_with_water} implies a better   detection performance, while a relatively small value of $[\mathcal{L}^+_{\delta 2}]^\xi_{y'}$ for \eqref{eq.perform_with_watermark2} corresponds to a less degradation in control system performance.  From \eqref{eq.observer_with_water} and \eqref{eq.perform_with_watermark2}, it is clear that both incremental gains   depend on the watermark gain   $G$. To show this relationship, we denote these two incremental gains as  $[\mathcal{L}^{-}_{\delta 2}]^{\xi}_{\hat y}(G)$ and $[\mathcal{L}^+_{\delta 2}]^\xi_{y'}(G)$, respectively. In the following, we proceed to design the watermark  gain $G$  to achieve a balance  between detection performance and control system performance loss.} A feasible balance is to  maximize the $[\mathcal{L}^{-}_{\delta 2}]^{\xi}_{\hat y}(G) $ gain while constraining the $[\mathcal{L}^+_{\delta 2}]^\xi_{y'}(G) $ gain.  Consequently, the following optimization problem is constructed,
\begin{equation}\label{eq.opti_water2}
    \begin{split}
  \max_{{G},\beta}~~&\beta \\
 {\rm s.t.~~~}& [\mathcal{L}^-_{\delta 2}]^\xi_{\hat y}(G)  \geq \beta  \text{~and~} [\mathcal{L}^+_{\delta 2}]^\xi_{y'}(G)  \leq \alpha
 \end{split}
\end{equation}
for  a given $\alpha>0$ which bounds the allowable performance loss.

The detailed   process for solving \eqref{eq.opti_water2} is given in the following. {Deriving tractable matrix inequalities in order to solve \eqref{eq.opti_water2} and solving the resulting bilinear matrix inequalities are two   main difficulties in this section.} For simplicity,  we   assume that $\kappa(\hat x) = -K\hat x$.

\subsection{Sufficient Conditions for Incremental Gains}
 For preparation, a sufficient condition such that $ [\mathcal{L}^-_{\delta 2}]^\xi_{\hat y}(G)\geq \beta$ holds for a given $\beta$ is given below.  This condition will be used later to develop a design methodology that optimizes the detection performance.

\begin{proposition}\label{propo.L_minus}
   For system \eqref{eq.observer_with_water}, the bound $ [\mathcal{L}^-_{\delta 2}]^\xi_{\hat y}(G) \geq \beta$ holds for some positive constant $\beta>0$ if    
    there exists a symmetric negative definite matrix $Q\in\RR^{n\times n}$ and matrix $G\in\RR^{m\times m}$ such that, for all {$x\in \mathbb{R}^n$},
       \begin{equation}\label{eq.LMI_L_minus}
   \begin{bmatrix}
           \mathcal{M}_{11}&  Q(B   -LD )G+(C -D  K)^\top  D   G\\
       \star &G^\top  D  ^\top    D    G-\beta I_m \\
       \end{bmatrix} \succeq 0
       \end{equation}
where $ \mathcal{M}_{11}=   [A_x -B   K-L(C  -D   K)]^\top   Q+  Q[A_x -B   K-L(C  -D   K)]+(C -D  K)^\top  (C -D  K) $.
\end{proposition}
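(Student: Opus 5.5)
The plan is to cast the attacked observer \eqref{eq.observer_with_water} in the form \eqref{eq.general_sys} with input $\xi$, state $\hat x$ and output $\hat y$, and then invoke Lemma~\ref{lem.L_minus_ori}. Substituting $\kappa(\hat x)=-K\hat x$ into \eqref{eq.observer_with_water} gives
\begin{equation*}
\dot{\hat x} = f(\hat x) - BK\hat x - L(C-DK)\hat x + (B-LD)G\xi + Ly^a,\qquad \hat y = (C-DK)\hat x + DG\xi .
\end{equation*}
So the drift is $F(\hat x):=f(\hat x)-[BK+L(C-DK)]\hat x$, the input matrix is $\bar B:=(B-LD)G$, the output matrices are $\bar C:=C-DK$ and $\bar D:=DG$, and the Jacobian is $\partial F/\partial \hat x = A_{\hat x}-BK-L(C-DK)$.

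The only mismatch with \eqref{eq.general_sys} is the exogenous term $Ly^a$. It enters additively and is the same along any two trajectories that are compared. In the $\mathcal{L}^-_{\delta 2}$ gain, and in the detection argument of Theorem~\ref{theorem.output_difference}, both observer trajectories are driven by the same replayed $y^a$, so $Ly^a$ cancels in the incremental dynamics. I would handle it in one of two ways. One option is to write $F_t(\hat x)=F(\hat x)+Ly^a(t)$, a time-varying drift whose Jacobian still equals $\partial F/\partial\hat x$. The other is to redo the short computation from the proof of Lemma~\ref{lem.L_minus_ori} directly with $V^-(\hat x_1,\hat x_2)=-(\hat x_1-\hat x_2)^\top Q(\hat x_1-\hat x_2)$. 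Because Proposition~\ref{def.Lyapunov_L_minus} already permits time dependence, this causes no difficulty.

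The key computation is the following. Set $e=\hat x_1-\hat x_2$ and $\delta\xi=\xi_1-\xi_2$. By the mean value theorem in integral form, $F(\hat x_1)-F(\hat x_2)=\int_0^1 \frac{\partial F}{\partial \hat x}(\hat x_2+se)\,ds\; e$. Then
\begin{equation*}
\dot V^- + \|\bar C e+\bar D\delta\xi\|^2-\beta\|\delta\xi\|^2 \;\ge\; \int_0^1 \begin{bmatrix} e\\ \delta\xi\end{bmatrix}^\top \mathcal{M}(\hat x_2+se)\begin{bmatrix} e\\ \delta\xi\end{bmatrix} ds \cdot(-1)\cdot(-1),
\end{equation*}
where $\mathcal{M}(x)$ is the matrix in \eqref{eq.LMI_L_minus}. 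The signs need care. $\dot V^-$ carries $-Q$, and $Q\prec 0$, so the quadratic form $e^\top(\cdots)^\top Q e$ appears with the sign that turns \eqref{eq.LMI_L_minus}$\succeq0$ into $\dot V^-\le \|\hat y_1-\hat y_2\|^2-\beta\|\delta\xi\|^2$. This is \eqref{eq.ineq_Vminus} with $\gamma^-=\beta$.

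To finish, the sandwich bound \eqref{eq.sandwich_Vminus} follows from $-Q\succ0$, using $\lambda_{\min}(-Q)\|e\|^2\le V^-\le\lambda_{\max}(-Q)\|e\|^2$. Proposition~\ref{def.Lyapunov_L_minus} then gives $[\mathcal{L}^-_{\delta 2}]^\xi_{\hat y}(G)\ge\beta$.

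The main obstacle is the sign bookkeeping in the key step, specifically confirming that the off-diagonal block $Q(B-LD)G+(C-DK)^\top DG$ arises with the correct sign and symmetrization. The simplest check is to reproduce the proof of Lemma~\ref{lem.L_minus_ori} verbatim with the substitutions $A_x\to A_x-BK-L(C-DK)$, $B\to(B-LD)G$, $C\to C-DK$, $D\to DG$. Under these substitutions \eqref{eq.LMI_L_minus_ori} becomes exactly \eqref{eq.LMI_L_minus}, so the result is a direct corollary of that lemma once the $y^a$ term is shown to cancel.
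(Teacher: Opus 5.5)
Your proposal is correct and follows the paper's own route: with $\kappa(\hat x)=-K\hat x$, Lemma~\ref{lem.L_minus_ori} applied under the substitutions $A_x\to A_x-BK-L(C-DK)$, $B\to(B-LD)G$, $C\to C-DK$, $D\to DG$ gives \eqref{eq.LMI_L_minus} verbatim, and your remark that the common term $Ly^a$ cancels in the incremental dynamics makes explicit a point the paper leaves implicit. The one slip is in your displayed ``key computation'', where the sign of $\dot V^-$ is wrong; the correct identity is $-\dot V^-+\|\bar C e+\bar D\delta\xi\|^2-\beta\|\delta\xi\|^2=\int_0^1\text{col}(e,\delta\xi)^\top\mathcal{M}(\hat x_2+se)\,\text{col}(e,\delta\xi)\,ds\geq 0$, and this yields exactly the conclusion $\dot V^-\le\|\hat y_1-\hat y_2\|^2-\beta\|\delta\xi\|^2$ that you state.
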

 The proof is obtained by directly applying Lemma \ref{lem.L_minus_ori}.

Below, we provide a sufficient  condition such that $ [\mathcal{L}^+_{\delta 2}]^\xi_{  y}(G) \leq \alpha$ for some given $\alpha$. This condition will be fundamental for developing a methodology that { constraints} the system performance loss. 

\begin{proposition}\label{propo.L_infinity2}
   For system \eqref{eq.perform_with_watermark2}, the bound $[\mathcal{L}^+_{\delta 2}]^\xi_{y'}(G)  \leq \alpha$ holds for some positive constant $\alpha>0$ if there exists a symmetric positive definite matrix $P_s\in \RR^{n\times n}$ and matrix $G\in\RR^{m\times m}$ such that, for all {$x\in \mathbb{R}^n$},
\begin{equation}\label{eq.LMI_L_infinityG}
    \begin{bmatrix}
       \mathcal{N}_{11} &B   G&P_s C ^\top - PK^\top D  ^\top \\
        \star&-\alpha I_m&(D  G)^\top\\
       \star&\star&- I_p
    \end{bmatrix}\preceq 0
\end{equation}
where $\mathcal{N}_{11}: =  A_x P_s+P_s A_x^{\top} -  B  KP_s-P_s K^{\top} B^{\top} +\epsilon P_s$ with $\epsilon>0$.
\end{proposition}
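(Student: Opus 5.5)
The plan is to reduce the condition \eqref{eq.LMI_L_infinityG} to the primal form \eqref{eq.LMI_L_inf_ori} of Lemma~\ref{lem.L_inf_ori}, applied to the closed-loop system \eqref{eq.perform_with_watermark2}. With $\kappa(x')=-Kx'$, that system is of the form \eqref{eq.general_sys} with drift $\bar f(x')=f(x')-BKx'$, input matrix $BG$, output matrix $C-DK$ and feedthrough $DG$, where $\xi$ is the input. Its Jacobian is $\bar A_x=A_x-BK$. The noises $\omega,\nu$ enter additively and are the same for the two trajectories being compared ($\xi$ versus $0$, same noise realization). Hence they cancel in $\dot x_1'-\dot x_2'$ and in $y_1'-y_2'$, so the incremental analysis of Lemma~\ref{lem.L_inf_ori} applies verbatim. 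Since $f$ is $\mathcal{C}^1$, $\bar f$ is $\mathcal{C}^1$ as well, as Lemma~\ref{lem.L_inf_ori} requires. I read the entry $PK^\top D^\top$ in the $(1,3)$ block as $P_sK^\top D^\top$, so that this block equals $P_s(C-DK)^\top$.

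The key steps are as follows. First, set $P:=P_s^{-1}\succ0$ and apply the congruence $\mathrm{diag}(P,I_m,I_p)$ to \eqref{eq.LMI_L_infinityG}; this preserves $\preceq0$. The $(1,1)$ block becomes $P(A_x-BK)+(A_x-BK)^\top P+\epsilon P$, the $(1,2)$ block becomes $PBG$, and the $(1,3)$ block becomes $(C-DK)^\top$. Second, take the Schur complement with respect to the $-I_p$ block, which is negative definite. This gives
\begin{equation*}
\begin{bmatrix}
\bar A_x^\top P+P\bar A_x+\epsilon P+(C-DK)^\top(C-DK) & PBG+(C-DK)^\top DG\\
\star & G^\top D^\top DG-\alpha I_m
\end{bmatrix}\preceq 0
\end{equation*}
for all $x$. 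Third, since $\epsilon P\succeq0$, we have $M+\mathrm{diag}(\epsilon P,0)\preceq0$, and therefore $M\preceq-\mathrm{diag}(\epsilon P,0)\preceq0$. Thus the matrix without the $\epsilon P$ term is exactly \eqref{eq.LMI_L_inf_ori} for the closed-loop data $(\bar A_x,BG,C-DK,DG)$, with $\gamma^+=\alpha$. Fourth, invoke Lemma~\ref{lem.L_inf_ori}. It yields $V^+=(x_1'-x_2')^\top P(x_1'-x_2')$ satisfying \eqref{eq.sandwich_Vplus} and \eqref{eq.ineq_Vplus}, and hence $[\mathcal{L}^+_{\delta 2}]^\xi_{y'}(G)\le\alpha$.

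I expect the only delicate points to be bookkeeping rather than analysis. The first is checking that the dual form with $P_s$ on the left and right, after congruence, really produces the Jacobian of the closed loop and not of the open loop. The second is justifying that the common additive noise drops out of the incremental dissipation inequality. The mean-value argument that turns the pointwise Jacobian LMI into the incremental inequality \eqref{eq.ineq_Vplus} is already contained in Lemma~\ref{lem.L_inf_ori}, so it need not be redone. The margin $\epsilon P$ plays no role for the gain bound itself; it only adds strict contraction, which is useful later for Assumption~\ref{ass.delta_ISS}.
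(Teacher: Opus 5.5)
Your proposal is correct and is essentially the paper's proof. It applies Lemma~\ref{lem.L_inf_ori} to the closed loop with Jacobian $A_x-BK$, input matrix $BG$, output matrix $C-DK$ and feedthrough $DG$, followed by a Schur complement on the $-I_p$ block and the congruence with $\mathrm{diag}(P,I_m,I_p)$, $P=P_s^{-1}$; the paper just runs these steps in the opposite direction, multiplying its $P$-form inequality by $\mathrm{diag}(P^{-1},I_m,I_n)$. Your handling of the $\epsilon P$ margin, the cancellation of the common noise in the increments, and the typo $PK^\top D^\top\to P_sK^\top D^\top$ makes explicit details the paper leaves implicit.
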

The proof is postponed to Appendix \ref{proof.L_infinity}.

\begin{remark}\itshape
 {  Due to the existence of state   dependent differential  terms, i.e., $A_x$, { an infinite set} of  matrix inequalities should be {considered} when solving    \eqref{eq.LMI_L_minus} and \eqref{eq.LMI_L_infinityG}.  
  One way to deal with this problem is to convexify the state  variation,  referred to as the differential parameter-varying inclusion   in \cite[Sec. 4.5]{verhoek2023convex}. With this convexification,   we are able to solve \eqref{eq.LMI_L_minus} and \eqref{eq.LMI_L_infinityG} with a finite number of matrix inequalities via polytopic or multiplier-based methods \cite{hoffmann2014survey}.} 
Alternatively, a finite set of LMIs can be obtained for particular partially linear systems. { For example, one can refer to \cite{zoboli2024dynamic} for the case where $A_x= A + \phi(x)$ with $\phi(x)$ satisfying some differential quadratic constraint.}
\end{remark}

\subsection{An Algorithm for   Watermark Gain Design}
For solving the   optimization problem \eqref{eq.opti_water2}, the sufficient conditions for $[\mathcal{L}^-_{\delta 2}]^\xi_{\hat y}(G) \geq \beta$ in Proposition \ref{propo.L_minus} and for $[\mathcal{L}^+_{\delta 2}]^\xi_{y'}(G) \leq \alpha$ in Proposition \ref{propo.L_infinity2} should be used. It is noted that due to the presence of the bilinear terms $Q(B   G-LD  G)$ and $G^\top  D  ^\top   D   G$, \eqref{eq.LMI_L_minus} is not an LMI. To solve this problem, we resort to the iterative LMI technique \cite{cao1999simultaneous}, where an equivalent condition to \eqref{eq.LMI_L_minus} is utilized, as presented in the following proposition.
 \begin{proposition}\label{pro.G_minus}
   Condition \eqref{eq.LMI_L_minus} holds if and only if  there exist  symmetric negative definite matrices $Q_0, Q\in\RR^{n\times n}$ and matrices $G_0, G \in\RR^{m\times m}$ such that, for all {$x\in \mathbb{R}^n$}
       \begin{equation}\label{eq.LMI_L_minusG}
   \begin{bmatrix}
           \mathcal{M}'_{11}&  ( C - D   K)^\top D   G&-Q \\
      \star& \mathcal{M}'_{22} &(B   G-LD  G)^\top\\
      \star&\star&I
       \end{bmatrix} \succeq 0 
       \end{equation}
   where    $ \mathcal{M}'_{11}= QQ_0+Q_0Q-Q_0Q_0+  Q A_x+  A_x^\top   Q- Q    B  K   - (B   K)^\top  Q- QLC  - (LC )^\top Q +QLD  K+(LD  K)^\top Q + ( C - D  K)^\top  (C - D   K) $, $\mathcal{M}'_{22}= G^\top[(B   -LD   )^\top(B  -LD   ) +   D  ^\top   D  ] G_0+G_0^\top[(B   -LD   )^\top(B   -LD   ) +   D  ^\top   D  ] G-G_0^\top[(B -LD   )^\top(B  -LD   ) +   D  ^\top   D  ] G_0-\beta I_m$.
       \end{proposition}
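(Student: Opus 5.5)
The plan is to reduce \eqref{eq.LMI_L_minusG} by a Schur complement on its identity block. The result will be exactly the matrix in \eqref{eq.LMI_L_minus}, minus a positive semidefinite correction that depends on $Q-Q_0$ and $G-G_0$. The two implications then follow separately.

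\emph{Schur complement.} Since the $(3,3)$ block is $I\succ0$, \eqref{eq.LMI_L_minusG} holds if and only if
\begin{equation*}
\begin{bmatrix}
\mathcal{M}'_{11}-QQ & (C-DK)^\top DG+Q(B-LD)G\\
\star & \mathcal{M}'_{22}-G^\top(B-LD)^\top(B-LD)G
\end{bmatrix}\succeq 0 .
\end{equation*}
This uses that the third block column is $\text{col}(-Q,\,((B-LD)G)^\top)$. Subtracting its outer product gives the entries above; note the two minus signs combine to give $+Q(B-LD)G$ off the diagonal, matching \eqref{eq.LMI_L_minus}.

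\emph{Completing the squares.} Set $S:=(B-LD)^\top(B-LD)+D^\top D\succeq0$.

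For the $(1,1)$ block, use $QQ_0+Q_0Q-Q_0Q_0=QQ-(Q-Q_0)^2$. Expanding $A_x-BK-L(C-DK)$ shows that the remaining terms of $\mathcal{M}'_{11}$ are exactly $\mathcal{M}_{11}$. Hence $\mathcal{M}'_{11}-QQ=\mathcal{M}_{11}-(Q-Q_0)^2$.

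For the $(2,2)$ block, use $G^\top SG_0+G_0^\top SG-G_0^\top SG_0=G^\top SG-(G-G_0)^\top S(G-G_0)$. Removing $G^\top(B-LD)^\top(B-LD)G$ then leaves
\[
G^\top D^\top DG-\beta I_m-(G-G_0)^\top S(G-G_0).
\]
So the Schur form equals the matrix of \eqref{eq.LMI_L_minus} minus
\[
\Delta:=\text{diag}\big((Q-Q_0)^2,\,(G-G_0)^\top S(G-G_0)\big)\succeq0 .
\]

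\emph{The two directions.} If \eqref{eq.LMI_L_minusG} holds for all $x$, then the matrix of \eqref{eq.LMI_L_minus} equals (Schur form)$+\Delta$, which is $\succeq 0$; this gives \eqref{eq.LMI_L_minus} for all $x$. Conversely, if \eqref{eq.LMI_L_minus} holds with some $Q\prec0$ and $G$, take $Q_0=Q$ and $G_0=G$. Both are constant, independent of $x$, and $Q_0$ is negative definite. Then $\Delta=0$, and the Schur equivalence returns \eqref{eq.LMI_L_minusG} for all $x$.

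The only delicate step is the sign and index bookkeeping in the Schur complement and in the expansion of $\mathcal{M}_{11}$. I would verify these entrywise. There is no analytic obstacle, since $x$ enters only through $A_x$, which sits identically in both conditions.
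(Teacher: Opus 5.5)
Your proof is correct and is essentially the paper's argument: Schur complement on the identity block, the completing-the-square identities giving the correction $\text{diag}\big((Q-Q_0)^2,\,(G-G_0)^\top S(G-G_0)\big)$, and the choice $Q_0=Q$, $G_0=G$ for the forward direction. The only cosmetic difference is the direction of the Schur step: the paper lifts \eqref{eq.LMI_L_minus} to a $3\times 3$ matrix $\Pi_2$ and writes \eqref{eq.LMI_L_minusG} as $\Pi_2$ minus that correction padded with a zero block, whereas you reduce \eqref{eq.LMI_L_minusG} to $2\times 2$ form.
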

       The proof is postponed to Appendix \ref{appendix.prof_G_minus}.

It is clear that \eqref{eq.LMI_L_minusG} is an LMI for any given $Q_0$ and $G_0$.    Then \eqref{eq.LMI_L_infinityG} and \eqref{eq.LMI_L_minusG} can be used to replace  $[\mathcal{L}^-_{\delta 2}]^\xi_{\hat y}(G) \geq \beta$ and $[\mathcal{L}^+_{\delta 2}]^\xi_{y'}(G)  \leq \alpha$ in \eqref{eq.opti_water2}, respectively. It is noted that with this replacement, we may only obtain a sub-optimal solution, as \eqref{eq.LMI_L_infinityG} and \eqref{eq.LMI_L_minusG} are not guaranteed to be the necessary conditions of  $[\mathcal{L}^-_{\delta 2}]^\xi_{\hat y}(G)  \geq \beta$ and $[\mathcal{L}^+_{\delta 2}]^\xi_{y'}(G) \leq \alpha$, respectively.  Specifically, an iterating process for solving $G$ is given in Algorithm 1. 
\begin{algorithm}  
{\textbf{Algorithm 1:} Solving the watermark gain $G$}
\begin{algorithmic}[1]
\Require
Parameters $A_x,B  ,C ,D  $, $K, L$, $G_\text{init}$, {$Q_\text{init}$}, $\alpha$, $\beta_0$, number of iterations $N$
\Ensure
 $G_\text{opt}$;
 \State Set $G_0  =G_\text{init}$, { $Q_0  =Q_\text{init}$}, $i =1$ 
\While{$i\leq N$}  
                \State With the obtained $G_0 $, {$Q_0$}, maximize $\beta_i$ subject to \eqref{eq.LMI_L_infinityG} and \eqref{eq.LMI_L_minusG}    to get $G$, {$Q$}.   Let $G_0= G$, {$Q_0 = Q$}.  
                \State $i=i+1$.
            \EndWhile\label{euclidendwhile}
\end{algorithmic}
\end{algorithm}

   \begin{remark}\itshape\label{remark_converge_beta}
       In   Algorithm 1,  $G_\text{init}$ and {$Q_\text{init}\prec 0$}  can be  any matrices    such that the maximization at Step 3 is solvable. Then, at each iteration of Step 3, the LMIs \eqref{eq.LMI_L_infinityG} and \eqref{eq.LMI_L_minusG} are always solvable since the optimization variables can always take their initial values (in this case $G=G_0$, {$Q=Q_0$} and $\beta_{i+1} = \beta_i$), implying that $\beta_i$ increases monotonically.
{Meanwhile, according to   \eqref{eq.LMI_L_infinityG} and applying the Schur complement lemma to    \eqref{eq.LMI_L_minusG}, we have
\begin{equation*}
   \beta_i I_m\preceq  (D  G)^\top D  G\preceq \alpha I_m,
\end{equation*}
ensuring that  $\beta_i$ is bounded.  Then combining the fact that  $\beta_i$ is monotonically increasing and bounded, it is convergent. However, $\beta_i$ is guaranteed to converge to a local maximum,  denoted as  $\beta_{opt}$, rather than to the global one.}
{Finally,   with $\beta_i$ converging  to $\beta_{opt}$, $G$ will also converge to the set  
\begin{equation*}
   \mathcal{G} = \{G \ | \ \eqref{eq.LMI_L_infinityG}   \text{ and } \eqref{eq.LMI_L_minusG} \text{ with } \beta = \beta_{opt}\}.
\end{equation*}   }
   \end{remark}

\subsection{Co-design of Watermark, Controller and Observer}\label{sec.co-design}
{ From dynamics \eqref{eq.observer_with_water} and \eqref{eq.perform_with_watermark2} and matrix inequalities  \eqref{eq.LMI_L_minus} and \eqref{eq.LMI_L_infinityG}, it is clear that the parameters of the observer $L$, and controller $K$, also affect the incremental gains of the system, which are therefore denoted   by $[\mathcal{L}^-_{\delta 2}]^\xi_{\hat y}(G,L)$ and $[\mathcal{L}^+_{\delta 2}]^\xi_{y'}(G,K,L) $.
Hence, to gain a better balance between detection performance and control system performance loss}, we  extend our main results to co-design $G, K$ and $L$.

When designing $K$ and $L$, we must ensure that Assumption \ref{ass.delta_ISS} is satisfied, as it guarantees that the plant and the error dynamics have $\delta$ISS and ISS properties, respectively. Consequently, the optimization problem \eqref{eq.opti_water2} is extended to 
\begin{equation}\label{eq.opti_main}
    \begin{split}
  \max_{G,K,L,\beta}~~&\beta  \\
 {\rm s.t.~~~}&  [\mathcal{L}^-_{\delta 2}]^\xi_{\hat y}(G,L)\geq  \beta, ~~[\mathcal{L}^+_{\delta 2}]^\xi_{y'}(G,K,L)  \leq \alpha\\
 &\qquad\qquad\qquad \text{and Assumption }   \ref{ass.delta_ISS} \text{ holds}
 \end{split}
\end{equation}
for some given $\alpha $.

To solve \eqref{eq.opti_main}, a sufficient condition to achieve  Assumption   \ref{ass.delta_ISS} is first provided below.
\begin{figure*}[hb]
	\centering
	\vspace*{8pt}
	\hrulefill
	\vspace*{8pt} 
	\begin{eqnarray*}\setlength{\arraycolsep}{10.5pt}
  \Xi:= \begin{bmatrix}
   \mathcal{M}''_{11}& \star & \star &\star&\star&\star&\star&\star&\star&\star&\star  \\
      ( D   G)^\top C &\mathcal{M}''_{22}&\star&\star&\star&\star&\star&\star&\star&\star&\star\\
       Q +     L  C  &0&I_n&\star&\star&\star&\star&\star&\star&\star&\star\\
      Q +   B   K &0&0&I_n&\star&\star&\star&\star&\star& \star&\star\\
     - Q& B  G&0&0&I_n&\star&\star&\star&\star&\star&\star\\
        D   K&D   G&0&0&0&I_m&\star&\star&\star&\star&\star\\
         -D  K&0&0&0&0&0& \mathcal{M}''_{77}  &\star&\star&\star&\star\\
           Q&0&0&0&0&0&-L&I_n&\star&\star &\star\\
             0&D  G&0&0&0&0&0&0&\mathcal{M}''_{99}&\star&\star\\
              Q&0&0&0&0&0&0&0&-L&I_n&\star\\
              2Q- L(Q_0 L_0)^\top \!\!\!\! \!\!\!\!\!\!\!\! &  0&0&0&0&0&0&0&0&0&I_n
    \end{bmatrix} 
	\end{eqnarray*}
    where $ \mathcal{M}''_{11}=  Q A_x +A_x^\top Q+(B  K)^\top B  K_0+(B  K_0)^\top B  K-(B  K_0)^\top B  K_0
  + 9Q Q_0+9Q_0 Q-9Q_0 Q_0+(L C )^\top   L_0 C  +(L_0 C )^\top   L C -(L_0 C )^\top   L_0 C +2(   D   K)^\top    D   K_0+2(   D   K_0)^\top    D   K-2(   D   K_0)^\top    D   K_0-2Q_0L_0(Q_0L_0)^\top+(Q_0 L_0)L^\top L_0 (Q_0 L_0)^\top+(Q_0 L_0)L_0^\top L (Q_0 L_0)^\top-(Q_0 L_0)L_0^\top L_0 (Q_0 L_0)^\top$, $\mathcal{M}''_{22}=3G^\top     D  ^\top     D   G_0+3G_0^\top     D  ^\top     D   G-3G_0^\top     D  ^\top     D   G_0+G^\top     B^\top     B  G_0+G_0^\top     B ^\top     B  G-G_0^\top     B^\top     B  G_0-\beta I_m $, $\mathcal{M}''_{77} =  I_m+L^\top L_0+L_0^\top L-L_0^\top L_0$, and $\mathcal{M}''_{99}=  I_m+L^\top L_0+L_0^\top L-L_0^\top L_0$.
\end{figure*} 
\begin{lemma}\label{lem.contraction_1}
    Consider the system  \eqref{eq.plant} with observer   \eqref{eq.observer}, controller  \eqref{eq.controller}. Assumption \ref{ass.delta_ISS} holds with  $\kappa(\hat x)= -K\hat x$ if there exist  positive definite matrices 
   {$R$, $S$, matrices  $L$ and $K$, and real constants $\epsilon_1,\epsilon_2$}  such that, for all {$x\in \mathbb{R}^n$},
   \begin{subequations}
   \begin{align}\label{eq.obsv2}
   A_x^\top  R+R    A_x {- C ^\top L^\top R-RL C} &\preceq -\epsilon_1 R\\
   \label{eq.ctrl2}
        A_x S+S A_x^{\top} -  B  KS-S K^{\top} B ^{\top}&\preceq -\epsilon_2 S  
   \end{align}
   \end{subequations}
\end{lemma}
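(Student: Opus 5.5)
Our plan is to build two quadratic incremental Lyapunov functions from \eqref{eq.obsv2} and \eqref{eq.ctrl2}: one for the error dynamics \eqref{eq.error_observer} and one for the controlled plant \eqref{eq.plant_controller} with $\kappa(\hat x)=-K\hat x$. In both cases the state enters through $A_x$, and we use the mean value (integral) form of the Jacobian,
\begin{equation*}
f(x_1)-f(x_2)=\int_0^1 A_{x_2+s(x_1-x_2)}\,\mathrm{d}s\,(x_1-x_2).
\end{equation*}
This lets the pointwise inequalities, which hold for all $x\in\RR^n$, be applied to differences of trajectories.

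\textbf{Observer error.} Take $W(\tilde x)=\tilde x^\top R\tilde x$. The error dynamics can be written as
\begin{equation*}
\dot{\tilde x}=\Big(\int_0^1 A_{x-\tilde x+s\tilde x}\,\mathrm{d}s-LC\Big)\tilde x-L\nu+\omega.
\end{equation*}
Here I use $y-\hat y=C\tilde x+\nu$ and the injection $L(y-\hat y)$ in \eqref{eq.observer}, so the sign in front of $L$ is fixed by that convention. Differentiating $W$, the integral can be pulled out of the quadratic form, and \eqref{eq.obsv2} gives
\begin{equation*}
\dot W\leq -\epsilon_1 W+2\tilde x^\top R(\omega-L\nu).
\end{equation*}
Young's inequality then yields $\dot W\leq -\tfrac{\epsilon_1}{2}W+c(\|\omega\|^2+\|\nu\|^2)$ for some constant $c>0$. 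This estimate is uniform in $x$ because \eqref{eq.obsv2} holds for every $x$. A comparison argument together with the sandwich bounds $\lambda_{\min}(R)\|\tilde x\|^2\leq W\leq\lambda_{\max}(R)\|\tilde x\|^2$ gives the ISS estimate of Definition~\ref{def.iss}, with exponential $\beta$ and a linear gain $\alpha$.

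\textbf{Controlled plant.} Set $P=S^{-1}$ and multiply \eqref{eq.ctrl2} by $P$ on both sides. This gives
\begin{equation*}
(A_x-BK)^\top P+P(A_x-BK)\preceq-\epsilon_2 P\quad\text{for all } x.
\end{equation*}
The plant \eqref{eq.plant_controller} reads $\dot x=f(x)-BKx+BK\tilde x+Bv+\omega$. For two trajectories $x_1,x_2$ driven by inputs $(\tilde x_i,v_i,\omega_i)$, let $e=x_1-x_2$ and take $V=e^\top Pe$. Again using the integral form of the Jacobian, we get
\begin{equation*}
\dot V\leq-\epsilon_2V+2e^\top P\big(BK(\tilde x_1-\tilde x_2)+B(v_1-v_2)+\omega_1-\omega_2\big).
\end{equation*}
Young's inequality and the comparison lemma give an exponential $\delta$ISS bound with respect to $(\tilde x,v,\omega)$, as required in Definition~\ref{def.delta_iss}.

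\textbf{Main obstacle.} The delicate step is justifying that the averaged Jacobian $\int_0^1A_{(\cdot)}\,\mathrm{d}s$ inherits the LMI. This works because \eqref{eq.obsv2} and \eqref{eq.ctrl2} are affine in $A_x$ with constant $R$, $S$, so integrating over $s\in[0,1]$ preserves them. This is exactly why constant (non-state-dependent) matrices are needed here. A secondary check is that $f\in\mathcal{C}^1$ makes the integral representation valid, and that existence of solutions is global, which follows from the bounds just derived. The constants $\epsilon_1,\epsilon_2$ are implicitly positive; this must be assumed for the conclusion to hold.
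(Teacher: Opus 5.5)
Your proposal is correct and follows essentially the paper's route. Both build quadratic incremental Lyapunov functions from \eqref{eq.obsv2} and \eqref{eq.ctrl2}, use the integral form of the Jacobian to pass from the pointwise inequalities to trajectory differences, and then upgrade to ISS/$\delta$ISS; the paper gets this last step by citing a contraction result (\cite[Corollary 3.17]{FB-c}), whereas you derive it directly via Young's inequality and the comparison lemma. Several of your details are actually the consistent ones and should be kept: the weight $P=S^{-1}$ (the paper uses $(x_1-x_2)^\top S(x_1-x_2)$, which does not follow from the dual-form inequality \eqref{eq.ctrl2}), the sign $-LC\tilde x$ in the error dynamics, and the explicit requirement $\epsilon_1,\epsilon_2>0$.
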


The proof is postponed to Appendix~\ref{proof.contra}.

Now we can use  \eqref{eq.LMI_L_minus}, \eqref{eq.LMI_L_infinityG}, \eqref{eq.obsv2} and \eqref{eq.ctrl2} for solving \eqref{eq.opti_main}. It is noted that the condition \eqref{eq.ctrl2} has already been involved in $\mathcal{N}_{11}\preceq 0$ in  \eqref{eq.LMI_L_infinityG}, and thus it is omitted  in the following design process. Then similar to the iterating process to \eqref{eq.opti_water2}, the iterative LMI technique should be employed to solve \eqref{eq.opti_main}.
To do so,   we first give equivalent conditions to  \eqref{eq.LMI_L_minus}, \eqref{eq.LMI_L_infinityG}  and \eqref{eq.obsv2}, respectively. Their proofs are similar to the proof of Proposition \ref{pro.G_minus}, and thus omitted here.

\subsubsection{An Equivalent Condition to \eqref{eq.LMI_L_minus} }\label{prof.LMI_H_inf_equva}

\begin{proposition} \label{pro.equva_1}
   Condition \eqref{eq.LMI_L_minus} holds if and only if there exist symmetric negative definite matrices $Q_0, Q $ and matrices $K_0, L_0, G_0, K, L$ and $G$ such that
 \begin{equation}\label{eq.ite_LMI_L_minus}
     \Xi\succeq 0
 \end{equation}
  for all {$x\in \mathbb{R}^n$}, where $\Xi$ is given at the bottom of this page.
\end{proposition}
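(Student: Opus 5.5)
The plan follows the same two-stage template as Proposition~\ref{pro.G_minus}. First rewrite \eqref{eq.LMI_L_minus} so that all bilinear terms in $(Q,K,L,G)$ appear as quadratic forms $X^\top Y$. Then replace each such product by the standard iterative-LMI lower bound
\[
X^\top Y+Y^\top X \succeq X^\top Y_0+Y_0^\top X + X_0^\top Y + Y^\top X_0 - X_0^\top Y_0 - Y_0^\top X_0 ,
\]
or, for squares, $Z^\top Z \succeq Z^\top Z_0+Z_0^\top Z-Z_0^\top Z_0$. These bounds come from $(Z-Z_0)^\top(Z-Z_0)\succeq 0$, and they are tight at $Z=Z_0$. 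Finally, collect the remaining genuinely quadratic terms into Schur complements with identity blocks.

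Concretely, I expand $\mathcal{M}_{11}$ with $A_{cl}=A_x-BK-L(C-DK)$. It contains the linear terms $QA_x+A_x^\top Q$ and the bilinear terms $-QBK$, $-QLC$, $+QLDK$ and their transposes. It also contains the quadratic output term $(C-DK)^\top(C-DK)$. The off-diagonal block contains $Q(B-LD)G$ and $(C-DK)^\top DG$, and the $(2,2)$ block contains $G^\top D^\top DG$.

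Each cross product $QM$ is handled by the polarization identity $QM+M^\top Q = (Q+M)^\top(Q+M) - Q^\top Q - M^\top M$. This explains the rows $Q+LC$ and $Q+BK$ in $\Xi$. The terms $-(Q+M)^\top(Q+M)$ are moved into Schur blocks with $I_n$. The remaining $+Q^\top Q$ and $+M^\top M$ terms are positive squares, and these are linearized around $(Q_0,K_0,L_0)$. This produces the $9QQ_0+9Q_0Q-9Q_0Q_0$ term (one copy per occurrence of $Q^\top Q$) and the terms $(BK)^\top BK_0+\dots$, $(LC)^\top L_0C+\dots$, $(DK)^\top DK_0+\dots$ in $\mathcal{M}''_{11}$.

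The triple product $QLDK$ is the awkward one. Here I first treat $QL$ as a bilinear term: it is factored through $Q_0L_0$ and reintroduced via the row $2Q-L(Q_0L_0)^\top$ together with the correction terms $(Q_0L_0)L^\top L_0(Q_0L_0)^\top$ etc. The remaining product with $DK$ is then split using the rows $-DK$ and $Q$, $-L$ with the $\mathcal{M}''_{77}$ blocks, where again $L^\top L$ is linearized as $L^\top L_0+L_0^\top L-L_0^\top L_0$.

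The off-diagonal block is treated the same way, with $G^\top D^\top DG$ and $G^\top B^\top BG$ linearized in $\mathcal{M}''_{22}$. The rows $-Q$, $BG$ and $DK$, $DG$ generate the needed cross terms via the Schur complement with $I_n$ and $I_m$.

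\emph{Sufficiency} ($\Xi\succeq0\Rightarrow$\eqref{eq.LMI_L_minus}): take the Schur complement of $\Xi$ with respect to the lower-right blocks. Each linearized term is $\preceq$ the true quadratic, so the resulting matrix is dominated by the matrix in \eqref{eq.LMI_L_minus}.

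\emph{Necessity}: given a feasible $(Q,K,L,G)$ for \eqref{eq.LMI_L_minus}, choose $Q_0=Q$, $K_0=K$, $L_0=L$, $G_0=G$. Then every linearization is exact, and the Schur complement of $\Xi$ reproduces \eqref{eq.LMI_L_minus} exactly, so $\Xi\succeq0$.

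The main obstacle is the bookkeeping. I must check that after the Schur complement every cross term reproduces exactly the corresponding term of \eqref{eq.LMI_L_minus}: in particular the coefficient $9$ on $QQ_0$ (counting the number of $Q^\top Q$ contributions from each polarization) and the handling of the cubic term $QLDK$ via $Q_0L_0$. I would verify this by writing $\Xi=\Xi_0-\Delta$, where $\Delta$ is a sum of terms $(Z-Z_0)^\top(Z-Z_0)\succeq0$, and $\Xi_0$ is congruent to \eqref{eq.LMI_L_minus} augmented by identity blocks.
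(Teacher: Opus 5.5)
Your template is exactly what the paper intends; the paper only says the proof mirrors that of Proposition~\ref{pro.G_minus}. The template is: Schur complement over block rows $3$--$11$, polarization of the bilinear cross terms, linearization of the leftover squares via $(Z-Z_0)^\top(Z-Z_0)\succeq0$, and necessity by taking $(Q_0,K_0,L_0,G_0)=(Q,K,L,G)$. The gap is the bookkeeping you deferred, because that is where necessity fails for $\Xi$ as printed.

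Carrying it out at $L_0=L$:
\begin{itemize}
\item The coupled block pairs $(7,8)$ and $(9,10)$ have inverse quadratic form $\|b_7+L^\top b_8\|^2+\|b_8\|^2$. They therefore subtract $\|(L^\top Q-DK)z_1\|^2+\|Qz_1\|^2$ and $\|DGz_2+L^\top Qz_1\|^2+\|Qz_1\|^2$. These pairs, not row $11$, produce the triple products $QLDK$ and $QLDG$.
\item Row $11$ is a second polarization that linearizes the by-product $2(QL)(QL)^\top$ through $Q_0L_0$. Away from the linearization point it leaves the residual $-2(QL-Q_0L_0)(QL-Q_0L_0)^\top$.
\item The nine copies of $QQ_0$ are consumed as $1+1+1+1+1+4$ by block rows $3,4,5$, the pairs $(7,8),(9,10)$, and row $11$.
\item The two copies of the linearized $(DK)^\top DK$ are consumed by row $6$ and the pair $(7,8)$.
\item No entry of $\Xi$ produces $C^\top C$ or $C^\top DK$.
\end{itemize}
So at $(Q_0,K_0,L_0,G_0)=(Q,K,L,G)$ the Schur complement of $\Xi$ is
\[
\begin{bmatrix} \mathcal{M}_{11}-(C-DK)^\top(C-DK) & Q(B-LD)G+(C-DK)^\top DG\\ \star & G^\top D^\top DG-\beta I_m \end{bmatrix},
\]
not the matrix in \eqref{eq.LMI_L_minus}. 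The term $(C-DK)^\top(C-DK)$, which you listed in your expansion of $\mathcal{M}_{11}$ but never placed in $\Xi$, is absent from $\mathcal{M}''_{11}$.

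Hence only the ``if'' direction holds as printed, and your sufficiency argument goes through with room to spare. $\Xi$ is bounded above, in the $\preceq$ order, by its unlinearized version. That version's Schur complement is the matrix above minus $\text{diag}(2(QL-Q_0L_0)(QL-Q_0L_0)^\top,0)$. So $\Xi\succeq0$ at any linearization point forces the strengthened inequality above for the same $(Q,K,L,G)$. A tuple satisfying \eqref{eq.LMI_L_minus} but not the strengthened inequality therefore admits no $(Q_0,K_0,L_0,G_0)$ at all.

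For a concrete counterexample, take the scalar linear case with $b=ld$, $e:=c-dk\neq0$, $a_{cl}:=a-bk-le<0$ and $0<\beta<d^2g^2$. The two determinants are $2Qa_{cl}(d^2g^2-\beta)-e^2\beta$ and $2Qa_{cl}(d^2g^2-\beta)-e^2d^2g^2$. Any $Q<0$ with $e^2\beta\le 2Qa_{cl}(d^2g^2-\beta)<e^2d^2g^2$ breaks your necessity step.

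The equivalence is restored by adding $C^\top C-C^\top DK-(DK)^\top C$ to $\mathcal{M}''_{11}$ and raising the coefficient of the linearized $(DK)^\top DK$ from $2$ to $3$. With that change your necessity step is correct.

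Even then, your certificate $\Xi=\Xi_0-\Delta$ with $\Xi_0$ congruent to \eqref{eq.LMI_L_minus} is not literally available, because row $11$ depends on $(Q_0,L_0)$ off the diagonal. The Schur complement of $\Xi_0$ is the target minus $2(QL-Q_0L_0)(QL-Q_0L_0)^\top$. Sufficiency must therefore be concluded after taking the Schur complement, where this negative semidefinite residual is harmless.
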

 
\begin{algorithm}[tb]
{\textbf{Algorithm 2:} Solving $K$, $L$, and $G$}
\begin{algorithmic}[1]
\Require
Parameters $A_x,B ,C ,D  $, $\alpha$, $\beta_0$, $G_\text{init}$, number of iterations $N$
\Ensure
 $K_\text{opt}$, $L_\text{opt}$, and $G_\text{opt}$;
 \State Set $G  =G_\text{init}$ and let $G_0=G$.  Solve \eqref{eq.LMI_L_infinityG} to get $K$ and $P_s$, and let $K_0=K$, $P_{s0}=P_s$. Solve \eqref{eq.obsv2} to get $L$ and $R$, and let $L_0=L$, $R_0=Q$.

 \State With $G , K $ and $L$ obtained at Step 1, solve \eqref{eq.LMI_L_minus} to get $Q$ and let $Q_0=Q$.
 \State Set $i =1$.
\While{$i\leq N$}  
                \State With the obtained $G_0, K_0$, $L_0$, $P_{s0}$, $Q_0$ and $R_0$, maximize $\beta_i$ subject to  \eqref{eq.ite_LMI_L_minus}, \eqref{eq.LMI_L_infinity2}, and  \eqref{eq.ite_LMI_L_infty} to get $G_\text{opt}, K_\text{opt}$, $L_\text{opt}$, $P_{\text{opt}}$, $Q_\text{opt}$ and $R_\text{opt}$. Let $G_0= G_\text{opt}, K_0=K_\text{opt}$, $L_0=L_\text{opt}$, $P_{s0} =P_{s\text{opt}}$, $Q_0 = Q_\text{opt}$ and $R_0 =R_\text{opt}$.
                \State $i=i+1$.
            \EndWhile\label{euclidendwhile2}
\end{algorithmic}
\end{algorithm}

\subsubsection{An Equivalent Condition to \eqref{eq.LMI_L_infinityG}}\label{Appen.LMI_equva_H_inf}

\begin{proposition}\label{pro.equva_2}
   Condition \eqref{eq.LMI_L_infinityG} holds if and only if there exist symmetric positive definite matrices  $P_{s0}$, $P_s $  and matrices $ K_0$,  $ K$  such that, for all {$x\in \mathbb{R}^n$},
\begin{equation}\label{eq.LMI_L_infinity2}
    \begin{bmatrix}
    \mathcal{N}_{11}' &B  G&P_s C ^\top  &P_s -B  K&P_s \\
       \star&-\alpha I_m&(D  G)^\top&0&0\\
        \star&\star&\mathcal{N}_{33}'  &0&-D  K\\
        \star&\star&\star&-I_n&0\\
        \star&\star&\star&\star&-I_n
        \end{bmatrix} \preceq 0
\end{equation}
where $\mathcal{N}_{11}' =    A_x P_s+ P_s  A_x^\top - 2P_{s0} P_s- 2P_s P_{s0}+ 2P_{s0}P_{s0}-   B  K_0(   B   K)^\top -  B   K(   B  K_0)^\top +  B   K_0(   B   K_0)^\top $ and $\mathcal{N}_{33}' = -I_p -D  K_0(D  K)^\top -D  K(D  K_0)^\top+ D  K_0(D  K_0)^\top$.

\end{proposition}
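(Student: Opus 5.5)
The plan is to follow the iterative-LMI device of Proposition~\ref{pro.G_minus}. I will show that, after a Schur complement, the left-hand side of \eqref{eq.LMI_L_infinity2} equals the left-hand side of \eqref{eq.LMI_L_infinityG} plus a positive semidefinite ``linearization gap'' that vanishes when $P_{s0}=P_s$ and $K_0=K$. Here I read the term $PK^\top D^\top$ in \eqref{eq.LMI_L_infinityG} as $P_sK^\top D^\top$. I also keep the $\epsilon P_s$ term of $\mathcal N_{11}$ in $\mathcal N'_{11}$ throughout; it is linear in $P_s$ and plays no role in the argument.

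\textbf{Step 1 (Schur complement).} The last two block rows and columns of \eqref{eq.LMI_L_infinity2} have diagonal $-I_n$. Taking the Schur complement with respect to them, \eqref{eq.LMI_L_infinity2} is equivalent to the $3\times3$ block inequality obtained by adding $\mathrm{col}(P_s-BK,0,0)\,\mathrm{col}(P_s-BK,0,0)^\top + \mathrm{col}(P_s,0,-DK)\,\mathrm{col}(P_s,0,-DK)^\top$ to the upper-left $3\times3$ part. This produces the following terms:
\begin{itemize}
\item in block $(1,1)$: $(P_s-BK)(P_s-BK)^\top+P_sP_s = 2P_s^2-BKP_s-P_sK^\top B^\top+BK(BK)^\top$;
\item in block $(1,3)$: $-P_sK^\top D^\top$, which together with $P_sC^\top$ recovers the off-diagonal block of \eqref{eq.LMI_L_infinityG};
\item in block $(3,3)$: $+DK(DK)^\top$.
\end{itemize}

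\textbf{Step 2 (completing squares).} I will use the identities
\begin{equation*}
-2P_{s0}P_s-2P_sP_{s0}+2P_{s0}P_{s0} = -2P_s^2+2(P_s-P_{s0})^2
\end{equation*}
and
\begin{equation*}
-BK_0(BK)^\top-BK(BK_0)^\top+BK_0(BK_0)^\top = -BK(BK)^\top+B(K-K_0)(B(K-K_0))^\top ,
\end{equation*}
together with the analogous identity for $DK$. Substituting these into the reduced matrix, the $\pm2P_s^2$, $\pm BK(BK)^\top$ and $\pm DK(DK)^\top$ terms cancel, and the reduced matrix equals
\begin{equation*}
\mathcal{M}_{\infty}+\mathrm{diag}\big(2(P_s-P_{s0})^2+B(K-K_0)(K-K_0)^\top B^\top,\;0,\;D(K-K_0)(K-K_0)^\top D^\top\big),
\end{equation*}
where $\mathcal{M}_{\infty}$ is exactly the left-hand side of \eqref{eq.LMI_L_infinityG}, including the bilinear terms $-BKP_s-P_sK^\top B^\top$ and $-P_sK^\top D^\top$.

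\textbf{Step 3 (both directions).} Since the diagonal gap is positive semidefinite, feasibility of \eqref{eq.LMI_L_infinity2} gives $\mathcal{M}_{\infty}\preceq(\text{reduced matrix})\preceq0$, so \eqref{eq.LMI_L_infinityG} holds for every $x$. Conversely, if \eqref{eq.LMI_L_infinityG} holds with some $P_s,K$, choosing $P_{s0}=P_s$ and $K_0=K$ makes the gap vanish. The reduced matrix then coincides with $\mathcal{M}_\infty\preceq0$, and the Schur complement gives \eqref{eq.LMI_L_infinity2}.

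The main obstacle is the bookkeeping in Steps 1--2: checking that the coefficient $2$ in the $P_{s0}$ terms of $\mathcal N'_{11}$ exactly cancels the two copies of $P_s^2$ generated by the fourth and fifth columns, and that the cross term of the fifth column lands in block $(1,3)$ with the correct sign. Everything else is routine symmetric-matrix algebra.
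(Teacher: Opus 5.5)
Your proposal is correct and follows essentially the argument the paper intends when it defers to the proof of Proposition~\ref{pro.G_minus}. You take a Schur complement on the two $-I_n$ blocks and complete squares, which shows the reduced matrix equals the left-hand side of \eqref{eq.LMI_L_infinityG} plus the positive semidefinite remainder $\mathrm{diag}\big(2(P_s-P_{s0})^2+B(K-K_0)(K-K_0)^\top B^\top,\,0,\,D(K-K_0)(K-K_0)^\top D^\top\big)$; you then set $P_{s0}=P_s$, $K_0=K$ for one direction and discard the remainder for the other. Your two readings ($P_sK^\top D^\top$ for $PK^\top D^\top$, and keeping $\epsilon P_s$ in $\mathcal N_{11}'$) are the intended ones, and the second is not merely cosmetic: without it, \eqref{eq.LMI_L_infinity2} would only give back \eqref{eq.LMI_L_infinityG} with $\epsilon=0$.
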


\subsubsection{An Equivalent Condition to  \eqref{eq.obsv2}} 
\begin{proposition}\label{pro.equva_3}

Condition \eqref{eq.obsv2} holds if and only if there exist symmetric positive definite matrices $R_{0} $, $R $   and matrices $L_0$, $L$ such that, for all {$x\in \mathbb{R}^n$},
\begin{equation}\label{eq.ite_LMI_L_infty}
  \begin{bmatrix}
      \mathcal{U}_{11}&R-(LC )^\top \\
     \star&-I_n
  \end{bmatrix} \preceq 0
\end{equation}
where $ \mathcal{U}_{11} =  A_x^\top R+R A_x-R_{0}R-RR_{0}+R_{0}R_{0}-( L_0 C )^\top  L C -( L C )^\top  L_0C +(L_0C )^\top L_0 C $.

\end{proposition}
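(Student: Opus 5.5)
The plan is to remove the $-I_n$ block of \eqref{eq.ite_LMI_L_infty} by a Schur complement and then complete two squares, one in $R-R_0$ and one in $LC-L_0C$, following the template of Proposition~\ref{pro.G_minus}. The one thing that needs care is the margin $\epsilon_1 R$ in \eqref{eq.obsv2}, which does not appear explicitly in \eqref{eq.ite_LMI_L_infty}.

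\emph{Step 1 (Schur complement and squares).} The lower-right block $-I_n$ is negative definite, so \eqref{eq.ite_LMI_L_infty} holds, for a given $x$, if and only if
\begin{equation*}
\mathcal{U}_{11}+\big(R-(LC)^\top\big)\big(R-LC\big)\preceq 0 .
\end{equation*}
Expanding the product gives $RR-RLC-(LC)^\top R+(LC)^\top LC$. Grouping terms with those of $\mathcal{U}_{11}$ yields two identities:
\begin{equation*}
RR-R_0R-RR_0+R_0R_0=(R-R_0)^2,
\end{equation*}
\begin{equation*}
(LC)^\top LC-(L_0C)^\top LC-(LC)^\top L_0C+(L_0C)^\top L_0C=(LC-L_0C)^\top(LC-L_0C).
\end{equation*}
Hence \eqref{eq.ite_LMI_L_infty} is equivalent to
\begin{equation*}
A_x^\top R+RA_x-RLC-C^\top L^\top R+(R-R_0)^2+(LC-L_0C)^\top(LC-L_0C)\preceq 0
\end{equation*}
for all $x$.

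\emph{Step 2 (sufficiency).} Both squared terms are positive semidefinite. Therefore the last inequality implies $A_x^\top R+RA_x-C^\top L^\top R-RLC\preceq 0$, which is the observer contraction condition \eqref{eq.obsv2} without its margin.

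\emph{Step 3 (necessity).} If \eqref{eq.obsv2} holds with some $R\succ0$, $L$ and $\epsilon_1>0$, choose $R_0=R$ and $L_0=L$. Both squares then vanish, and the left-hand side equals $A_x^\top R+RA_x-C^\top L^\top R-RLC\preceq-\epsilon_1R\preceq0$, so \eqref{eq.ite_LMI_L_infty} holds. This choice is also what makes the iteration in Algorithm~2 well posed: the previous iterate is always feasible.

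\emph{Main obstacle: the margin.} Step 2 only delivers a non-strict inequality, whereas \eqref{eq.obsv2} requires the margin $-\epsilon_1 R$. I will handle this by reading the $(1,1)$ block with $A_x$ replaced by $A_x+\tfrac{\epsilon_1}{2}I_n$, that is, by adding $\epsilon_1R$ to $\mathcal{U}_{11}$. With this modification Steps 1--3 go through verbatim, and the proposition becomes an exact equivalence with \eqref{eq.obsv2} for the same $\epsilon_1$. Without it, the stated equivalence holds only up to $\epsilon_1\ge 0$.
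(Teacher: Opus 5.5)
Your proof follows the paper's intended argument, which the paper omits as ``similar to Proposition~\ref{pro.G_minus}'': a Schur complement on the $-I_n$ block, completing the squares $(R-R_0)^2$ and $(LC-L_0C)^\top(LC-L_0C)$, taking $R_0=R$, $L_0=L$ for necessity, and discarding the positive semidefinite squares for sufficiency. Your caveat about the margin is also correct and is glossed over in the paper: \eqref{eq.ite_LMI_L_infty} only yields \eqref{eq.obsv2} with $\epsilon_1=0$ (e.g.\ $n=p=1$, $A_x=0$, $C=0$, $R=R_0=1$ makes \eqref{eq.ite_LMI_L_infty} feasible, yet no $\epsilon_1>0$ works for any $R\succ0$, $L$), so adding $\epsilon_1 R$ to $\mathcal{U}_{11}$ is the appropriate repair.
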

 
Then, with  $[\mathcal{L}^-_{\delta 2}]^\xi_{\hat y}(G,L)\geq  \beta$, $[\mathcal{L}^+_{\delta 2}]^\xi_{y'} (G,K,L) \leq \alpha$  and \eqref{eq.obsv2} in the optimization problem \eqref{eq.opti_main} replaced by \eqref{eq.ite_LMI_L_minus}, \eqref{eq.LMI_L_infinity2} and \eqref{eq.ite_LMI_L_infty}, respectively, the iterating process to $K$, $L$ and $G$ is given in Algorithm 2.

\begin{remark}\itshape
    With a similar analysis to Algorithm 1 (in Remark \ref{remark_converge_beta}),   $\beta_i$ will converge to a local maximum. 
\end{remark}
 
 \begin{remark}\itshape
     According to Proposition \ref{theorem.output_difference}, the detection performance can also be improved by reducing $g_n$
     if $K$ and $L$ can be designed. In \eqref{eq.opti_main} we still focus on improving $g_a(t)$ for simplicity.
 \end{remark}


\section{Special Case: Linear Systems}\label{sec.case}

 In this section, we consider the case where the plant is linear. In particular, we consider the following plant
\begin{equation*} 
\Sigma_{lp}:
\; \left\{
\begin{split}
  \dot{x}&= Ax+Bu+\omega,\\
  y  &= C{x}+Du+\nu,
\end{split}
\right.
 \end{equation*}
where $\omega\sim \mathcal{N}(0,\mathcal{Q})$ and $\nu\sim \mathcal{N}(0,\mathcal{R})$ are the plant noise and measurement noise, respectively, with covariance $\mathcal{Q}\geq 0, \mathcal{R}\geq 0$. {$A,B,C,D$ are constant matrices of the appropriate dimension.}

We first analyze the control performance loss. With the above linear plant, \eqref{eq.L_infty_gain} becomes
\begin{equation*}
 \|(\Delta Y(t,x_{\bar t_0},   \xi )  )_\tau\|_{\mathcal{L}_2} \leq [\mathcal{L}^+_{\delta 2}]^\xi_{y'}\|(  \xi(t))_\tau\|_{\mathcal{L}_2},
\end{equation*} 
where $\Delta Y(t,x_{\bar t_0}, \xi )$ is the output trajectory of 
\begin{equation*} 
 \begin{split}
  \Delta\dot{   x'}&=(A-BK)\Delta  x'+BG \xi \\
  \Delta  y' &= (C-DK)\Delta x'+DG \xi.
\end{split} 
 \end{equation*}
This means that the incremental gain $[\mathcal{L}^+_{\delta 2}]^\xi_{y'}$ reduces to its non-incremental version, i.e., the $[\mathcal{L}^+_{2}]^\xi_{y'}$ gain.
In addition, we can obtain the similar results if we analyze the detection performance in linear case, that is,  the incremental gain $[\mathcal{L}^{-}_{\delta 2}]^{\xi}_{\hat y}$ reduces to  $[\mathcal{L}_{2}^{-}]^{  \xi}_{  \hat y}$.

Then for designing the watermark gain, we can use optimization problems \eqref{eq.opti_water2} and \eqref{eq.opti_main} with non-incremental gains $[\mathcal{L}^+_{2}]^\xi_{y}$ and $[\mathcal{L}_{2}^{-}]^{  \xi}_{  \hat y}$, and employ Algorithm 1 and Algorithm 2 to solve them, respectively, with $A_x=A$.

\begin{remark}
{ Most existing watermark-based detection methods, e.g., \cite{mo2009secure,mo2013detecting,zhai2021switching,FANG2020108698,zhao2023stochastic,chen2023replay},  follow a traditional framework that requires  explicit calculation of the  detection performance and control system performance loss. Typically, the   detection performance is calculated by the exact difference of innovation variances, e.g., \cite[Theorem 6]{mo2013detecting}, and the control system performance loss is calculated by the exact additional LQG cost, e.g., \cite[Theorem 5]{mo2013detecting}. Both calculations crucially rely on the linear    property of the Gaussian watermark. In contrast, our framework avoids this explicit calculation by   employing the lower bound of the detection performance and the upper bound of the control system performance loss. This difference allows for the generalization to the nonlinear scenarios.}

{
In the linear system scenario, our framework has  two main advantages compared with the       traditional  framework mentioned above. First, we do not limit the watermark to be i.i.d. Gaussian noise, allowing for a broader class of watermark signals to be employed.
 In addition, with the LMI technique, we are able to extend our framework to co-design the watermark with controller and observer, whereas this kind of extension is not easy for the literature mentioned above. 
However, we acknowledge that for linear systems with Gaussian noise, using the exact difference of innovation variances and the exact additional LQG cost can yield better performance than our framework. }

\end{remark}

  \begin{figure}[b]
\centering
 \includegraphics[width=1\columnwidth]{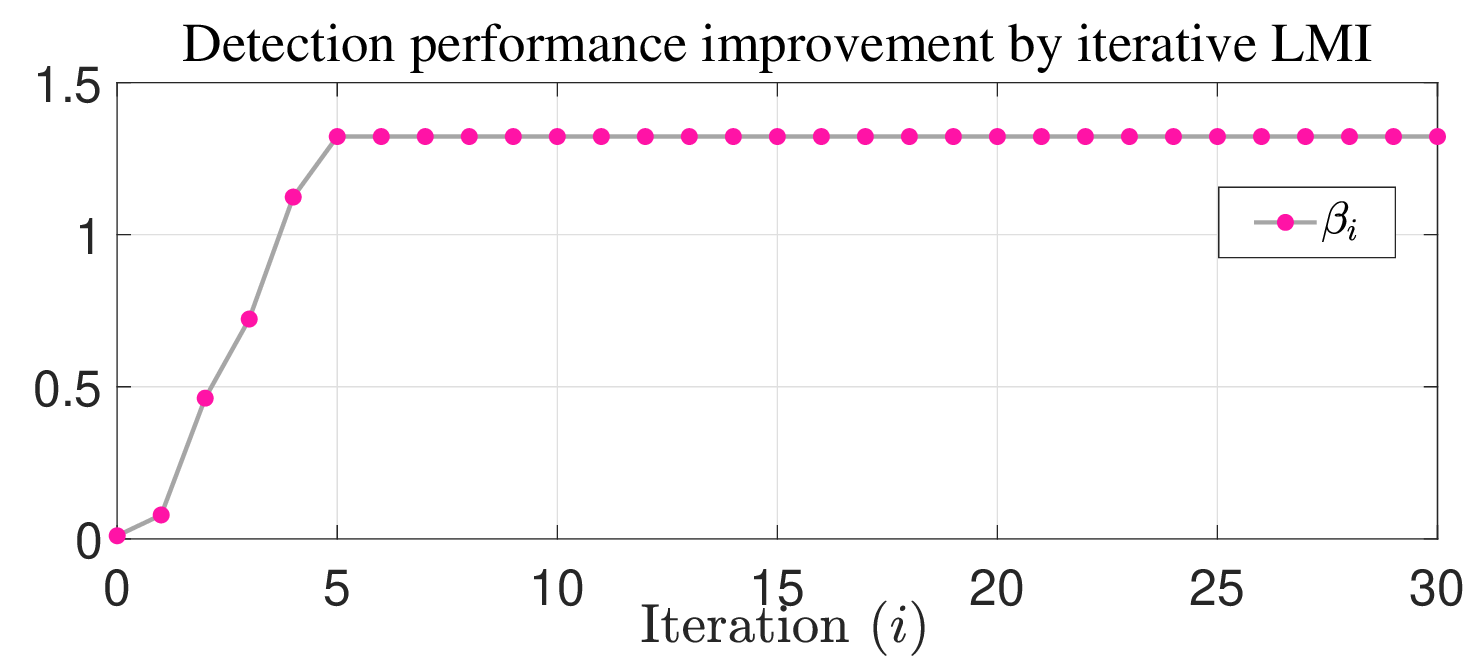}
\caption{The $\beta_i$ solved by iterative LMI.}
\label{fig.beta}
\end{figure}
\section{Numerical Simulation}\label{sec.simulation}
In this section, we validate the proposed framework by considering a single-link robot system \cite[Section 4.10]{isidori1985nonlinear} of the form \eqref{eq.plant}  with
\begin{equation*}
\begin{split}
      & f(x)=\begin{bmatrix}
        x_2\\
        -\frac{k}{J_2}x_1-\frac{mgd}{J_2}\text{cos}(x_1)-\frac{F_2}{J_2}x_2+\frac{k}{J_2 b}x_3\\
        x_4\\
        \frac{k}{J_1 b}x_1-\frac{k}{J_2 b^2}x_3-\frac{F_1}{J_1}x_4 
    \end{bmatrix}\\
     & B = \begin{bmatrix}
          0\! &0 \! &0\!  &\! 1
      \end{bmatrix}^\top, C =\begin{bmatrix}
          1 \! &0\!  &\! 0 &\! 0
      \end{bmatrix}, D = \begin{bmatrix}
         1\!  &0\!  &0 \! &0
      \end{bmatrix}^\top
  \end{split}  
\end{equation*}
where $k =0.4 kg\cdot m^2/s^2 $, $m=0.1kg$, $g=9.81m/s^2$, $d=0.1m$, $b = 2$,  $F_1 = 0.1 kg\cdot m^2/s $, $F_2=0.7kg\cdot m^2/s$, $J_1 = 0.15kg\cdot m^2$, $J_2 = 0.2kg\cdot m^2$ are parameters of the plant. In addition, the noise bounds $\overbar \nu=\overbar \omega=0.05$. { The observer and controller take forms of \eqref{eq.observer} and \eqref{eq.controller}, respectively, with $\kappa(\hat x) = -K\hat x$.}

In this simulation, we focus on the co-designing of the watermark, controller and observer and employing      Algorithm 2.
Letting $\alpha=4$, $G_\text{init}=1$ and according to Step 1 of Algorithm 2, the original $K$, $L$, and $G$ are calculated as
\begin{equation*}
\begin{split}
    &   K_0 = \begin{bmatrix}
        1.1525  &  0.1535 &   0.0755 &   0.1651
    \end{bmatrix}\\
   & L_0=\begin{bmatrix}
        0.3974  &  3.8453 &   0.2410&    0.9848
    \end{bmatrix} ^\top,~~~G_0 = 1.
\end{split} 
\end{equation*}
  
Then letting $N=30$ and choosing $\beta_0=0.01$,   $K_{\text{opt}}, L_{\text{opt}}$ and $G_{\text{opt}}$ solved by Algorithm 2 are
\begin{equation*}
\begin{split}
    &   K_{\text{opt}} = \begin{bmatrix}
        1.1426   &  0.4401  &   0.0673  &   0.1895
    \end{bmatrix}\\
   & L_{\text{opt}}=\begin{bmatrix}
        0.1623& 3.7539& 0.1596& 0.7642 
    \end{bmatrix} ^\top,~G_{\text{opt}} = 1.33.
\end{split} 
\end{equation*}

The $\beta_i$ obtained at each iteration of the algorithm is shown in Fig. \ref{fig.beta}, which represents the detection performance improvement by the iterative LMI.

In the following, both chaotic watermark and  Bernoulli watermark are tested. 
The chaotic watermark is  generated by a \textit{R$\ddot{o}$ssler prototype-4 system} \cite{sprott2010elegant}  of form \eqref{eq.chaotic_system} with
\begin{equation*} 
\begin{split}
&A=\begin{bmatrix}
     0&-1  & -1\\
1 &0 & 0 \\
0&0.5&-0.5
\end{bmatrix}, \phi(\theta)=\begin{bmatrix}
    0\\
0\\
-0.5\theta_2^2
\end{bmatrix}, \Lambda =\begin{bmatrix}
    0.5\\
    0\\
    0
\end{bmatrix}^\top.
 \end{split}
\end{equation*}

 \begin{figure}[!t]
\centering
 \includegraphics[width=1\columnwidth]{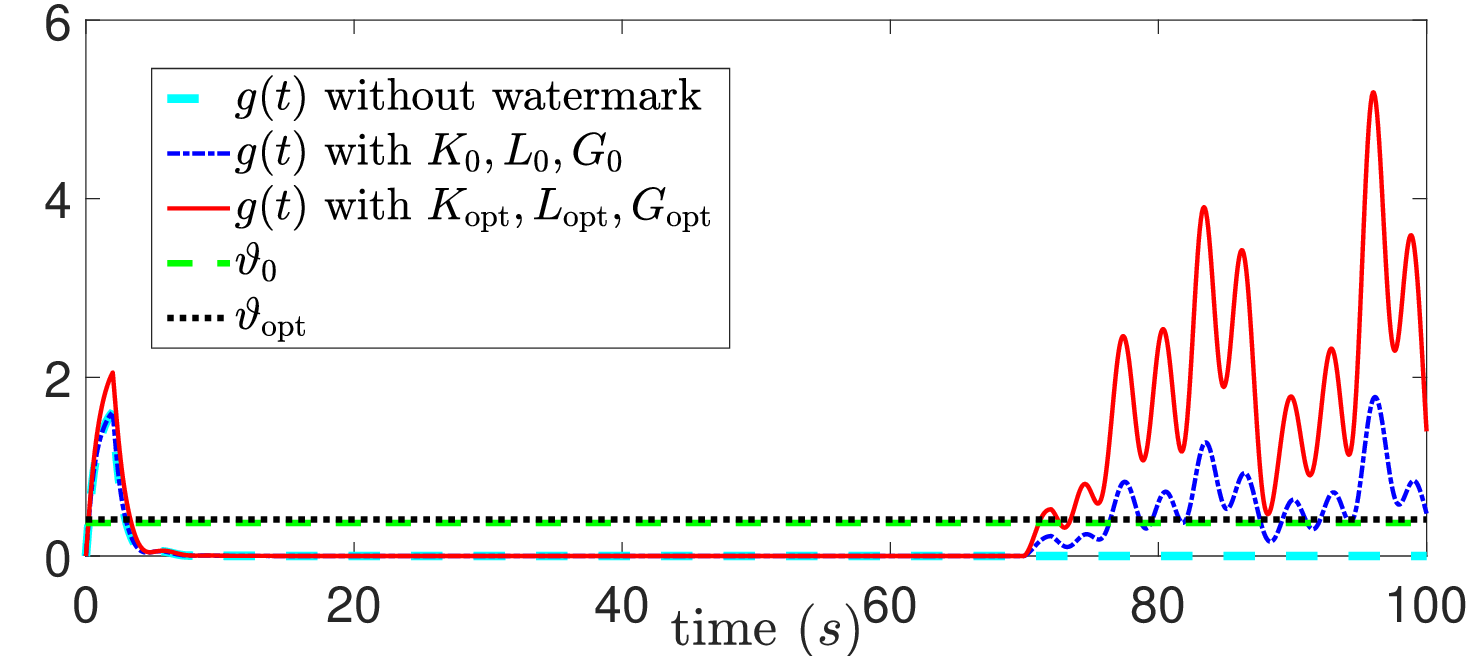}
 \caption{The detection performance with chaotic watermark.}
\label{fig.dete}
\end{figure}

\begin{figure}[!t]
\centering
\subfigure[]{\includegraphics[width=0.49\columnwidth]{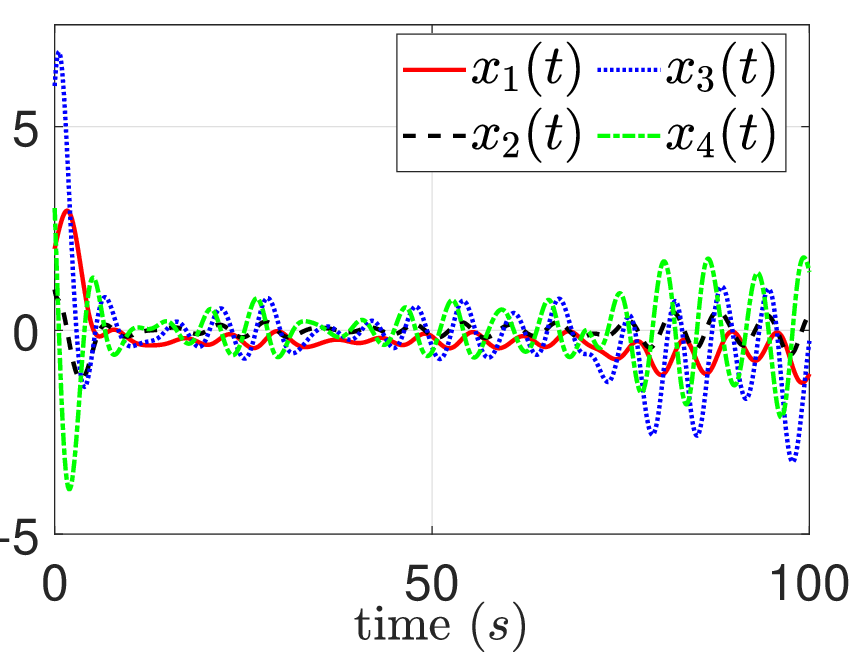}}
\subfigure[]{\includegraphics[width=0.49\columnwidth]{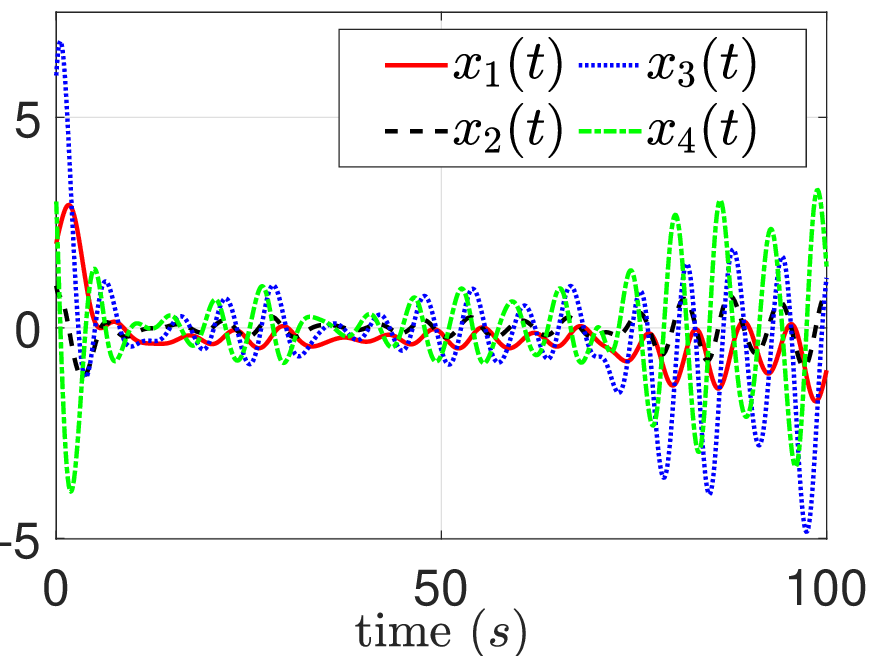}}
\caption{The states of the plant with chaotic watermark. (a) With   $K_0, L_0$ and $G_0$. (b) With  $K_\text{opt}, L_\text{opt}$ and $G_\text{opt}$.}
\label{fig.state}
\end{figure}

First of all, the detection performance of the innovation-based detector is illustrated in  Fig. \ref{fig.dete}, where $\sigma = 2$ and $\vartheta$ takes $ \|C\tilde x\|_\infty^2$ for simplicity, which is  calculated according to \cite[Theorem 4.19]{khalil2002nonlinear}. In Fig. \ref{fig.dete}, $\vartheta_{\text{opt}}$ and $\vartheta_{0}$ are the thresholds calculated by the optimized parameters $K_\text{opt}, L_\text{opt}$ and $G_\text{opt}$ and the original parameters $K_0$, $L_0$ and $G_0$, respectively.
It is shown that, without watermark, even under attack (from 70$s$), $g(t)$  is always smaller than the thresholds and thus the attack cannot be detected.

As illustrated in Fig. \ref{fig.dete}, with the chaotic watermark, $g(t)$ with both the optimized parameters and the original parameters exceed their respective thresholds, demonstrating an improvement in detection performance. Furthermore, $g(t)$ with the optimized parameters is significantly greater than that with the original parameters, while the performance of the control system using both sets of parameters is similar (see Fig. \ref{fig.state}). This verifies the effectiveness of Algorithm 2.

 \begin{figure}[!t]
\centering
 \includegraphics[width=1\columnwidth]{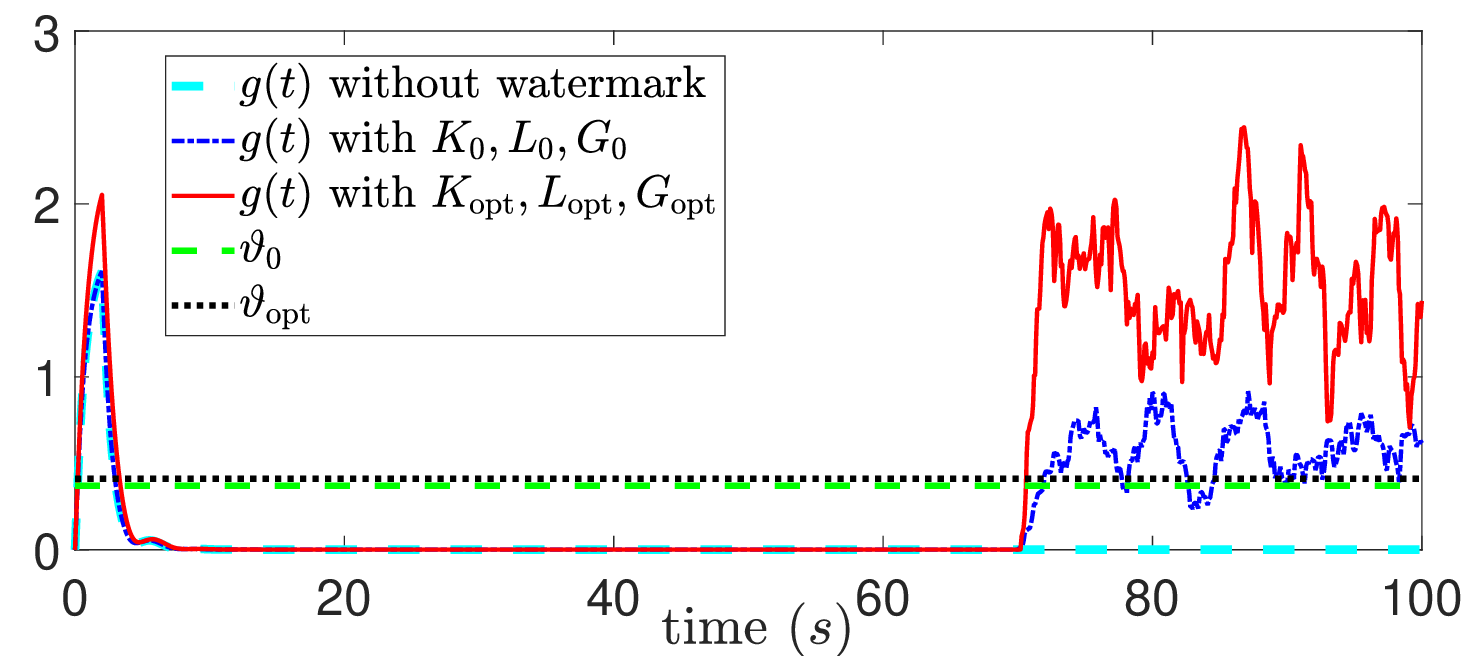}
 \caption{The detection performance with Bernoulli watermark.}
\label{fig.dete2}
\end{figure}

\begin{figure}[!t]
\centering
\subfigure[]{\includegraphics[width=0.49\columnwidth]{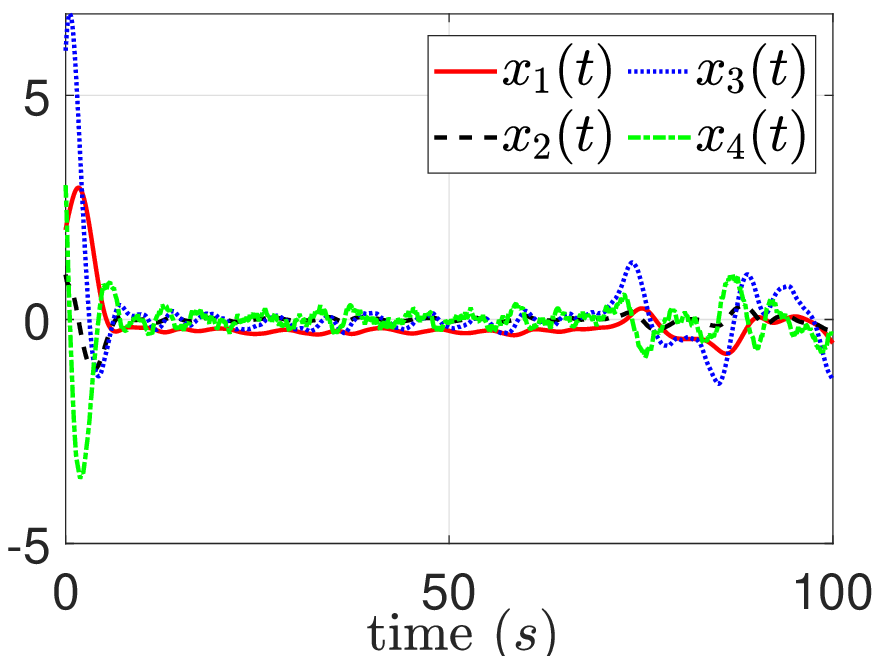}}
\subfigure[]{\includegraphics[width=0.49\columnwidth]{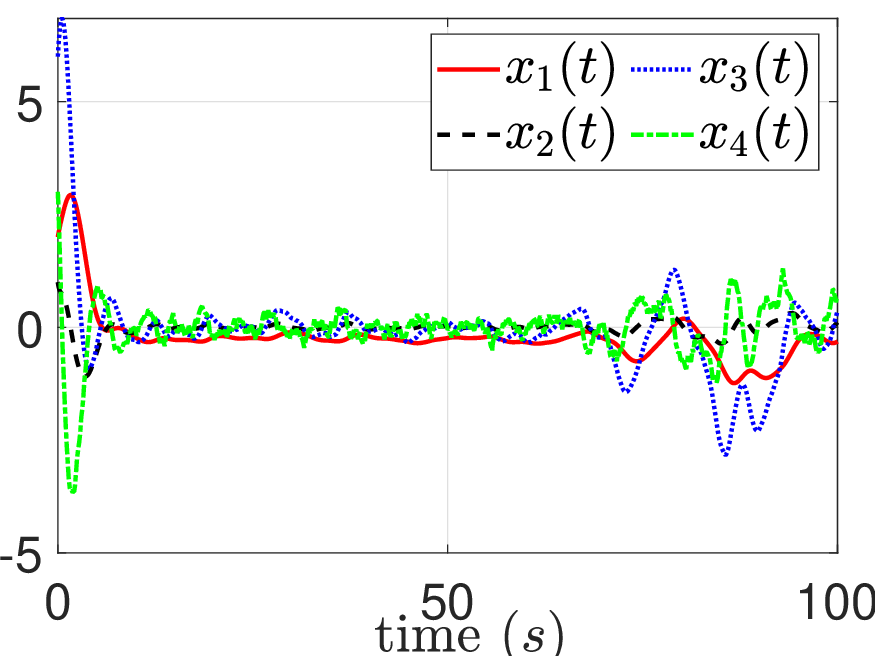}}
\caption{The states of the plant with Bernoulli watermark. (a) With   $K_0, L_0$ and $G_0$. (b) With  $K_\text{opt}, L_\text{opt}$ and $G_\text{opt}$.}
\label{fig.state2}
\end{figure}
Similar results can be obtained by analyzing the detection results with a Bernoulli watermark, as shown  in Figs. \ref{fig.dete2}-\ref{fig.state2}, where $\delta_t=0.1 s$. { Besides, we ran both the chaotic watermark and Bernoulli watermark  simulations 1000 times to get their statistical detection results, i.e., the detection rate, average detection delay (average d.d) and maximal detection delay (maximal d.d.), which are summarized in Tables I and II. In these simulations we can see that the proposed co-design methodology improves the detection rate and reduces the detection delay of the replay attack, without compromising the control performance bound established by $\alpha$. Furthermore, we can see that if we allow some additional performance loss (by increasing $\alpha$) we can additionally improve the detection indicators of the system.}

 \begin{table}[htbp]
	\centering  
{	\caption{Statistic Results for Chaotic Watermark}  
 \label{table1}  
	\scalebox{0.95}{\begin{tabular}{cc|ccc}  
		\toprule   
		& &\!Detection rate  \!& \!Average d.d.   & \!Maximal d.d \! \\  
		\hline
		 \multirow{2}{*}{$\alpha = 4$}& \!  $K_0$, $L_0$, $G_0$& 100\% &   3.79 s  \!&  15.50 s \! \\
	 &  $K_\text{opt}, L_\text{opt}$, $G_\text{opt}$ \!&100\% \!&   1.92s  &  10.61s    \\
     \hline
		 \multirow{2}{*}{$\alpha = 1$}&   $K_0$, $L_0$, $G_0$&87.4\%  &    10.28 s  &  29.78 s    \\
	 & $K_\text{opt}, L_\text{opt}$, $G_\text{opt}$ \!& 97.3\% \!&   8.34 s  & 29.74 s   \\
		\bottomrule 
	\end{tabular}}}
\end{table}

\begin{table}[htbp]
	\centering  
{		\caption{Statistic Results for Bernoulli Watermark}  
 \label{table2}  
	\scalebox{0.95 }{\begin{tabular}{cc|cccc}  
		\toprule   
		 &\!& \!Detection rate  \!& \!Average d.d. & \!Maximal d.d.  \\  
		\hline
	\multirow{2}{*}{$\alpha = 4$}&\!  $K_0$, $L_0$, $G_0$  & 100\%  &   1.22 s   &  4.37 s \! \\
    &   \!  $K_\text{opt}, L_\text{opt}$, $G_\text{opt}$ &100\%  &    0.66 s  &  1.64 s    \\
		\hline
		\multirow{2}{*}{$\alpha = 1$}&  $K_0$, $L_0$, $G_0$ \!&69.5 \%  &   13.15 s  &    29.87 s    \\ 
	&  $K_\text{opt}, L_\text{opt}$, $G_\text{opt}$&99.60\%  &   6.01 s   &   29.66 s  \\
		\bottomrule 
	\end{tabular}}}
\end{table}

In addition, to further show the improvement of detection performance in different noise levels, the following index is used, i.e.,
\begin{equation*}
\mathcal{D}:=\frac{\max_{t\in[70,100]}    g(t) }{\max_{t\in[20,69]}   g(t) } 
\end{equation*}
which is the ratio of detector's maximal output after and before the replay attack. A larger $\mathcal{D}$ implies a smaller $g(t)$ in the absence of the attack, and a larger $g(t)$ when the attack occurs, thereby indicating a better detection performance.
 \begin{figure}[!h]
\centering
\includegraphics[width=1\columnwidth]{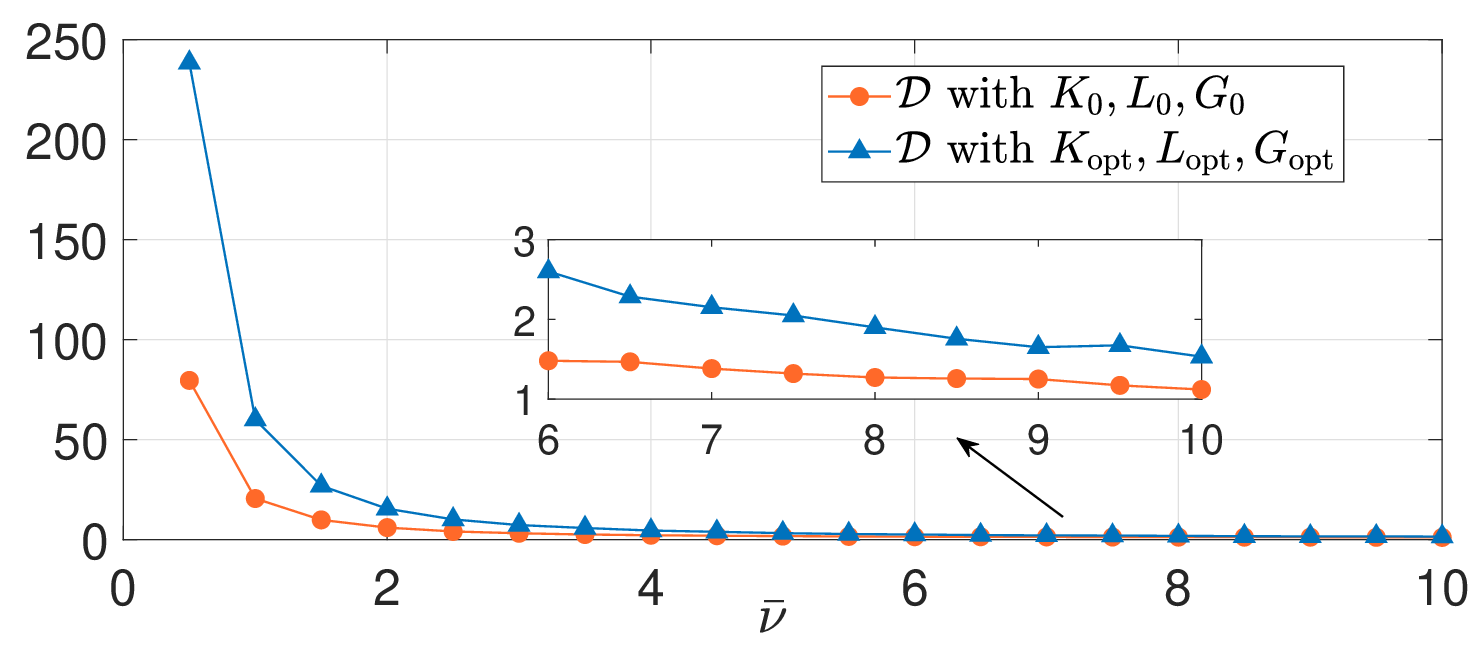}
\caption{The $\mathcal{D}$  with initial and optimized $K, L$ and $G$ with different noise level.}
\label{fig.D_noise}
\end{figure}

 The $\mathcal{D}$ obtained by the simulation in different noise levels is given in Fig. \ref{fig.D_noise} (with chaotic watermark). It is clear that $\mathcal{D}$ with optimized parameters is always greater than that with initial parameters, indicating that the detection performance is improved in all noise levels.

\section{Conclusion}\label{sec.conclution}
 This paper presented a novel watermark design framework for detecting replay attacks in nonlinear plants. First, we showed that the innovation-based detector may fail to detect replay attacks   and thus a watermark-based detection was motivated.  Then, to evaluate the detection performance and the control system performance loss induced by the watermark,  an incremental gain framework  were introduced. Using the incremental gains, we effectively evaluated both detection performance and control system performance loss. To balance these two factors, an optimization problem was formulated, which was then solved by establishing sufficient conditions for specific detection and performance loss objectives. Additionally, we extended the framework to co-design the watermark, controller, and observer. Finally, the proposed framework was validated by simulations.

\section*{References}
\vspace{-0.65cm}
\bibliographystyle{ieeetr}
\bibliography{bibio.bib}

\appendix 




\setcounter{equation}{0}

\renewcommand\theequation{A.\arabic{equation}} 
\subsection{Proof of Proposition \ref{def.Lyapunov_L_minus}}\label{prof.def.Lyapunov_L_minus}

  The  inequality  \eqref{eq.ineq_Vminus} is equivalent to
\begin{equation}\label{eq.dot_V-}
     \dot V^-( x_1,x_2,t) \leq -\gamma^-\|u_1-u_2\|^2 + \|y_1-y_2\|^2
  \end{equation}
  Taking integral for both sides of \eqref{eq.dot_V-} from any $t_0\geq0$ to $t>t_0$, we have 
\begin{equation}\label{eq.V-_int}\begin{split}
      & V^-( x_1,x_2,t) -V^-( x_1,x_2,t_0) \\
       &\leq \int_{t_0}^t  -\gamma^-\|u_1(s)-u_2(s)\|^2 + \|y_1(s)-y_2(s)\|^2 ds.
  \end{split}
  \end{equation}
Then with \eqref{eq.sandwich_Vminus},  $V^-( x_1,x_2,t)$ and $V^-( x_1,x_2,t_0)$ satisfy $V^-( x_1,x_2,t)\geq    \underline\alpha^-(\|x_1-x_2\|)\geq 0$ and $V^-( x_1,x_2,t_0)\leq  \overbar\alpha^-(\|x_1-x_2\|) $, respectively, and thus \eqref{eq.V-_int} is further deduced as
  \begin{equation}\label{eq.int_y}
    \begin{split}
     \int_{t_0}^t \|y_1(s)-y_2(s)\|^2 ds  &   \geq \int_{t_0}^t  \gamma^-\|u_1(s)-u_2(s)\|^2 ds\\     & - \overbar\alpha^-(\|x_1-x_2\|), \quad \forall t\geq t_0.
  \end{split}  
  \end{equation} 
The proof is complete by recalling the definition of $\|\cdot\|_{\mathcal{L}2}$  and incremental $\mathcal{L}^-_{\delta 2}$ gain.
 
 \setcounter{equation}{0}

\renewcommand\theequation{B.\arabic{equation}} 
\subsection{Proof of Lemma \ref{lem.L_minus_ori}}\label{prof.Hi-}

Define $\tilde x := x_1-x_2 $, $\tilde u := u_1-u_2$, $\tilde y :=y_1-y_2$.
Let $ V^-(x_1,x_2)=-\tilde x^{\rm T}Q\tilde x$. Since $Q$ is   symmetric negative, 
\begin{equation}
  -\lambda_{\max}(Q)\|\tilde x\|^2 \leq  V^-(x_1,x_2)\leq -\lambda_{\min}(Q)\|\tilde x\|^2
\end{equation}
which verifies condition \eqref{eq.sandwich_Vminus} for all $x_1,x_2\in \RR^n$.

We now proceed to verify   condition \eqref{eq.ineq_Vminus}.
  Taking the time derivative of $V^-(x_1,x_2) $, we have 
\begin{equation}\label{eq.dot_V_non}
\begin{split}
     \dot V^-(x_1,x_2)  &= -2 \tilde x^{\rm T}Q[f(x_1 )+Bu_1 -f(x_2)-Bu_2
      ].
\end{split}
\end{equation}

{According to the  Fundamental Theorem of Calculus   \cite[Chapter 6]{rudin1976principles}, we have 
\begin{equation}
    g(1)-g(0) = \int_0^1 \frac{\partial g}{\partial s}(s)ds
\end{equation}
for any $g \in\mathcal{C}^1$. Hence, letting $g(s): = f((s-1)\tilde x+x_1)$ and }recalling that   { $f$ $\in\mathcal{C}^1$}, we have that 
\begin{equation}\label{eq.f_x}
\begin{split}
    &f(x_1 ) - f(x_2 ) = \left( \int_0^1 \frac{\partial f}{\partial x}(\eta)ds \right)\tilde x  
\end{split}
\end{equation}
 where $\eta:=(s-1)\tilde x+x_1$. 
 Substituting \eqref{eq.f_x}  into  \eqref{eq.dot_V_non}, we have
\begin{equation}\label{eq.dot_V_non2}
\begin{split}
     &\dot V^-(x_1,x_2) \!\!=\!\!-2\tilde x^{\rm T}Q\left[ \left( \int_0^1 \frac{\partial f}{\partial x}(\eta)ds \right)\tilde x \!+\! B\tilde u \right]\\
     &=-\begin{bmatrix}
        \tilde x\\
        \tilde u 
    \end{bmatrix}^{\rm T} \int_0^1 \begin{bmatrix}
 Q\frac{\partial f}{\partial x}(\eta)+\frac{\partial f^{\rm T}}{\partial x}(\eta)Q &   QB \\
 *&0
    \end{bmatrix} ds \begin{bmatrix}
        \tilde x  \\
        \tilde u 
    \end{bmatrix}.
  \end{split}
\end{equation}
   
Notice that the inequality \eqref{eq.LMI_L_minus_ori} is equivalent to  
\begin{equation}\label{eq.LMI_non2}
 \begin{bmatrix}
        A^{\rm T}_x Q+QA_x&QB \\
        \star&0 \\
    \end{bmatrix}\succeq   \begin{bmatrix}
         -C^{\rm T}  C & -C^{\rm T} D \\
       \star&    -D^{\rm T}     D +\gamma^- I\\
    \end{bmatrix}  .
    \end{equation}
 Combining this bound with \eqref{eq.dot_V_non2}, we obtain
\begin{equation}\label{eq.dot_V_non3}
\begin{split}
     &\dot V^-   \!\! \leq\!\!-\begin{bmatrix}
        \tilde x\\
        \tilde u 
    \end{bmatrix}^{\rm T}\!\!\int_0^1\begin{bmatrix}
-C^{\rm T} C &-C^{\rm T}D \\
\star&-D^{\rm T} D  + \gamma^- I
    \end{bmatrix} \!\!ds\!\!\begin{bmatrix}
        \tilde x  \\
        \tilde u 
    \end{bmatrix}  \\
  &  = -\begin{bmatrix}
        \tilde x\\
        \tilde u 
    \end{bmatrix}^{\rm T}\begin{bmatrix}
        *\end{bmatrix}^{\rm T}\begin{bmatrix}
\gamma^- I&0\\
0&-I
    \end{bmatrix} \begin{bmatrix}
        0&I\\
         C&D
    \end{bmatrix}\begin{bmatrix}
        \tilde x  \\
        \tilde u 
    \end{bmatrix}\int_0^1 ds. \\
\end{split}
\end{equation}
   With the fact that
    $y_1-y_2= C\tilde x+D\tilde u$, \eqref{eq.dot_V_non3} is equivalent to
\begin{equation}\label{eq.dot_V_non4}
\begin{split}
     &\dot V^-(x_1,x_2)   \leq -\begin{bmatrix}
        \tilde u ^{\rm T}& \tilde y^{\rm T}
    \end{bmatrix}  \begin{bmatrix}
\gamma^- I&0\\
0&-I
    \end{bmatrix}  \begin{bmatrix}
        \tilde u  \\
        \tilde y 
    \end{bmatrix}   \\
    &=-\gamma^- \|u_1-u_2\|^2+\|y_1-y_2\|^2
\end{split}
\end{equation}
which verifies the   condition \eqref{eq.ineq_Vminus}. 
Now both \eqref{eq.sandwich_Vminus} and \eqref{eq.ineq_Vminus}   have been verified, which, according to Proposition \ref{def.Lyapunov_L_minus}, verifies that
 system \eqref{eq.general_sys} 
    has an incremental 
    $\mathcal{L}^-_{\delta 2}$ gain $ [\mathcal{L}^-_{\delta 2}]^u_y\geq  \gamma^-$.

 \setcounter{equation}{0}

\renewcommand\theequation{C.\arabic{equation}} 
\subsection{Proof of Proposition \ref{pro.replay}}\label{prof.fail_detection}
With controller \eqref{eq.controller}, the healthy observer \eqref{eq.observer} evolves
\begin{equation}\label{eq.observer_with_water2_v_0}
\begin{split}
      \dot{\hat { x}} &= f(\hat x)+B\kappa(\hat x    )   -L( C\hat x +D\kappa(\hat x ))  +Ly  \\
 \hat y  &= C\hat x +D\kappa(\hat x ),
 \end{split}
\end{equation}
whose solution in $t \in [0, T)$ can be denoted as   $\widehat X(t-T,\hat x_0,v , y)$ with  $t \in [T, 2T)$.

By comparing \eqref{eq.observer_with_water_fir} and \eqref{eq.observer_with_water2_v_0} and with the $\delta$ISS property and $v=0$ in mind, there exists a $\mathcal{K}\mathcal{L}$ function $ \check \beta$ such that, for $t\in[T,2T)$,
\begin{equation}\label{eq.hat_bX_minus}
\begin{split}
    \|\widehat X(t,\hat x_T,0,y^a)-\widehat X(t-T,\hat x_0,0,y )\|&\leq   \check \beta( \|  \hat x(T )-\hat{x}(0 )\|,t).
\end{split}
     \end{equation} 
    In addition, the innovation difference between \eqref{eq.observer_with_water_fir} and \eqref{eq.observer_with_water2_v_0} satisfies
\begin{equation} \label{eq.ra_r'}
\begin{split}
      & \|\widetilde Y(t,\overbar \bx_0,0, y^a) -\widetilde Y(t-T,\bx_0,0, y)  \|\\
      &=   \| - \widehat Y(t,\hat  x_T,0, y^a )+\widehat Y(t-T,\hat  x_0, 0,y )\|\\
     &=  \|- C\widehat X(t,\hat x_T,0, y^a)-D\kappa(\widehat X(t,\hat x_T,0, y^a)) \\
     &+C\widehat X(t-T,\hat x_0,0, y)+D\kappa(\widehat X(t-T,\hat x_0,0, y))\|.
\end{split}
     \end{equation} 

 Recalling the Lipschitz  property     \eqref{eq.lip_k}
 and combining   \eqref{eq.hat_bX_minus} and \eqref{eq.ra_r'},  we have
\begin{equation} \label{eq.ra_r''}
\begin{split}
      \|\widetilde Y(t,\overbar \bx_0,0, y^a) & -\widetilde Y(t-T,\bx_0,0, y)  \|\\
      &\leq (\|C\|+\|D\|l_\kappa)\check{\beta} ( \|  \hat x(T )-\hat{x}(0 )\|,t).
      \end{split}
     \end{equation} 
We first analyze the output of the detector when $t\in[ T +\sigma,2T)$. In this case, the output difference between the attacked detector  and the history healthy one is  
\begin{equation}\label{eq.g_g'1}
\begin{split}
      & \frac{1}{\sigma} \!\!\int_{t-\sigma}^t   \!\! \|\widetilde Y(s,  \overbar\bx_0, v, y^a )\|^2 d s   \!-  \!\frac{1}{\sigma} \int_{t-\sigma}^t  \!\!   \|\widetilde Y(s-T,   \bx_0, v, y )\|^2 d s \\
        &= \frac{1}{\sigma} \int_{t-\sigma}^t  (\|\widetilde Y(s,   \overbar \bx_0, 0, y^a )\|-    \|\widetilde Y(s-T,   \bx_0, 0, y )\| ) \\
        &\qquad(\|\widetilde Y(s,   \overbar \bx_0, 0, y^a )\|+       \|\widetilde Y(s-T,   \bx_0, 0, y )\| ) d s.
\end{split}
\end{equation}
 From \eqref{eq.ra_r''}, we obtain  that $\widetilde Y(s,   \overbar \bx_0, 0, y^a )$ is   bounded, as $\widetilde Y(t-T,\bx_0,0, y)$ is the  healthy  output, which is bounded according to Assumption \ref{ass.delta_ISS}. Consequently, there exists a constant $ M>0$ such that $(\|\widetilde Y(s,   \overbar \bx_0, 0, y^a )\|+       \|\widetilde Y(s-T,   \bx_0, 0, y )\| ) \leq M,\forall s\in[T,2T)$. In addition, with the triangle inequality $\|\widetilde Y(s,   \overbar \bx_0, 0, y^a )\|-    \|\widetilde Y(s-T,   \bx_0, 0, y )\| \leq  \|\widetilde Y(s,\overbar \bx_0,0, y^a) -\widetilde Y(s-T,\bx_0,0, y)  \|$ and \eqref{eq.ra_r''} in mind, \eqref{eq.g_g'1} can be further deduced as 
\begin{equation}\label{eq.output_dete_0}
\begin{split}
      & \frac{1}{\sigma}  \!\!\int_{t-\sigma}^t  \!\! \|\widetilde Y(s,   \overbar \bx_0, v, y^a )\|^2 d s -\frac{1}{\sigma} \!\! \int_{t-\sigma}^t    \|\widetilde Y(s-T,   \bx_0, v, y )\|^2 d s\\
       &\leq M  \frac{1}{\sigma} \int_{s-\sigma}^t   \|\widetilde Y(s,\overbar \bx_0,0, y^a) -\widetilde Y(s-T,\bx_0,0, y^a)  \|   d s \\
       &\leq M     \frac{1}{\sigma} \int_{t-\sigma}^t  (\|C\|+\|D\|l_\kappa)\check{\beta}( \|\hat x(T)-\hat{x}(0)\|,s)d s  \\
       &\leq M       (\|C\|+\|D\|l_\kappa)\check{\beta}( \|\hat x(T)-\hat{x}(0)\|,t-\sigma) 
\end{split}
\end{equation}
which implies
\begin{equation}\label{eq.output_dete_2}
\begin{split}
      &    g(t)\leq  \frac{1}{\sigma} \int_{t-\sigma}^t    \|\widetilde Y(s-T,   \bx_0, v, y )\|^2 d s +\Delta_1 
\end{split}
\end{equation}
where $\Delta_1 : =M      (\|C\|+\|D\|l_\kappa)\check{\beta}( \|\hat x(T)-\hat{x}(0)\|,t-\sigma)] $.

Then, for $T\leq t<T+\sigma $, the monitoring signal is
\begin{equation}\label{eq.g_t_T}
\begin{split}
      & g(t) =\frac{1}{\sigma}  \!\! \int^{T}_{t-\sigma}  \!\!  \|\widetilde Y(s,   \bx_0, v, y )\|^2 d s +\frac{1}{\sigma}  \!\! \int^{t}_T     \!\!\|\widetilde Y(s,   \overbar \bx_0, v, y^a )\|^2 d s  
\end{split}
\end{equation}
where the term ${1}/{\sigma}  \int_{t}^T    \|\widetilde Y(s,   \overbar \bx_0, v, y^a )\|^2 d s$ satisfies the following inequality by the similar deduction from \eqref{eq.g_g'1} to \eqref{eq.output_dete_0}, i.e.,
\begin{equation}\label{eq.ra_r'_T}
\begin{split}
      &   \frac{1}{\sigma} \int_{T}^t   \|\widetilde Y(s,   \overbar \bx_0, v, y^a )\|^2 d s -\frac{1}{\sigma} \int_{T}^t    \|\widetilde Y(s-T,   \bx_0, v, y )\|^2 d s \\
       &\qquad\leq M (\|C\|+\|D\|l_\kappa)\check{\beta}( \|\hat x(T)-\hat{x}'(T)\|,T).
\end{split}
\end{equation}

Combining \eqref{eq.g_t_T} and  \eqref{eq.ra_r'_T}, the monitoring signal satisfies 
\begin{equation}\label{eq.g_t_T_2}
\begin{split}
       g(t) \leq &   \frac{1}{\sigma}  \int^{T}_{t-\sigma}   \|\widetilde Y(s,   \bx_0, v, y )\|^2 d s\\
      &+  \frac{1}{\sigma} \int_{T}^t    \|\widetilde Y(s-T,   \bx_0, v, y )\|^2 d s +\Delta_2
\end{split}
\end{equation}
where $\Delta_2 := M   (\|C\|+\|D\| l_\kappa)\check{\beta}( \|\hat x(T)-\hat{x}'(T)\|,T)$. 

Finally, combining \eqref{eq.output_dete_2} for $t\in[ T+\sigma,2T)$ and \eqref{eq.g_t_T_2} for $t\in[T,T+\sigma)$,  \eqref{eq.g_t_dISS} is obtained with $\overbar\beta(\cdot,t):= M    (\|C\|+\|D\| l_\kappa)\check{\beta}(\cdot,t-\sigma)$.

\setcounter{equation}{0}

\renewcommand\theequation{D.\arabic{equation}} 
\subsection{Proof of Theorem \ref{theorem.output_difference}}\label{Appen.output_difference}

The proof is divided into two parts that prove \eqref{eq.dete_no_attack_pro} and \eqref{eq.dete_with_attack_pro}, respectively.

\noindent
\emph{Proof of \eqref{eq.dete_no_attack_pro}:}  

\noindent

According to Assumption \ref{ass.steady_state}   and recalling that $\|\omega\|_\infty\leq  \overbar \omega $ and $\|\nu \|_\infty\leq  \overbar \nu $, we have 
\begin{equation}
    \|\widetilde X(t,\bx_0,v, y)\|\leq\alpha_\omega (\overbar \omega)+\alpha_\nu(\overbar \nu),\quad \forall  t\in[0,\infty).
\end{equation}

Then   we have
\begin{equation}
\begin{split}
      \|\widetilde Y(s,   \bx_0, v, y )\|& =\|CX(t,x_0,u)+Du+\nu\\
      &-C\widehat X(t,\hat x_0,v , y )+Du\|\\
      &\leq \|C\|\|\widetilde X(t,\bx_0,v, y)\|+\overbar \nu\\
      &\leq  \|C\| (\alpha_\omega (\overbar \omega)+\alpha_\nu(\overbar \nu))+\overbar \nu.
\end{split}    
\end{equation}
As a result, \begin{equation*}
\begin{split}
   g(t) & = \frac{1}{\sigma}   \int_{t-\sigma}^t   \|\widetilde Y(s,   \bx_0, v, y )\|^2 d s\\
    &\leq [\|C\| (\alpha_\omega (\overbar \omega)+\alpha_\nu(\overbar \nu))+\overbar \nu]^2  
\end{split}
\end{equation*}
which proves \eqref{eq.dete_no_attack_pro}.

\noindent
\emph{Proof of \eqref{eq.dete_with_attack_pro}:}   

\noindent

With $v=G\xi$, the healthy observer \eqref{eq.observer}   becomes  
\begin{equation}\label{eq.observer_with_water2}
\begin{split}
      \dot{\hat { x}} &= f(\hat x)+B(\kappa(\hat x )+G\xi)  -L (C\hat x+D(\kappa(\hat x )+G\xi) +Ly \\
 \hat y  &= C\hat x +D(\kappa(\hat x )+G\xi), 
\end{split} 
\end{equation}
whose solution in $t \in [0, T)$ can be denoted as   $\widehat X(t-T,\hat x_0,v , y)$ with  $t \in [T, 2T)$.

 Comparing \eqref{eq.observer_with_water} and \eqref{eq.observer_with_water2}, we have
\begin{equation*} 
\begin{split}
    & \widetilde Y(t,\overbar \bx_{t_0},v, y^a) -\widetilde Y(t-T,\bx_{t_0},v, y)  \\ 
      &=     - \widehat Y(t,\hat  x_{t_0+T}, v,y^a )+\widehat Y(t-T,\hat  x_{t_0},v, y ).
\end{split}
\end{equation*}

Then, since the system \eqref{eq.observer_with_water} { has an $[\mathcal{L}^-_{\delta 2}]^\xi_{\hat y}$ gain}, for $0\leq t_0\leq  T$,
\begin{equation}\label{eq.tilde_Y_L_minu}
\begin{split}
   &   {\|(\widetilde Y(t,\overbar \bx_{t_0},v, y^a) -\widetilde Y(t-T,\bx_{t_0},v, y ) )_\tau\|_{\mathcal{L}_2}} \\
   &= {\|( \widehat Y(t,\hat  x_{t_0+T},v, y^a )-\widehat Y(t-T,\hat  x_{t_0},v, y ))_\tau\|_{\mathcal{L}_2}}\\
      &\geq ([\mathcal{L}^-_{\delta 2}]^\xi_{\hat y}-\varepsilon)  {\|(\xi(t)-\xi(t-T))_\tau\|_{\mathcal{L}_2}}\\
      &\qquad\qquad-\overbar \alpha^-_a(\|\hat x(t_0+T)-\hat x(t_0 )\|) ,
\end{split}
\end{equation}
for all initial states $\overbar \bx_{t_0}, \bx_{t_0}$ and inputs $\xi(t), \xi(t-T)$. 

We first consider the case $ t \in[  T+\sigma,2T) $. In this case, let $t_0$ and $\tau$ in \eqref{eq.tilde_Y_L_minu} be $t_0= t-T-\sigma$ and  $\tau  = \sigma$. Then we have 
\begin{equation}\label{eq.dete_t_geq_T_sigma}
\begin{split}
\int^{ t}_{t-\sigma} & \!\!\!\!  \|\widetilde Y(t,\overbar \bx_{t_0},\!v, y^a)\|^2 ds  \!\! \geq \!\!-\!\!   \int^{t }_{t-\sigma} \!\!\!\!  \|\widetilde Y(t\!-\!T,\bx_{t_0},\!v, y)\|^2 d s\\
&+([\mathcal{L}^-_{\delta 2}]^\xi_{\hat y}-\varepsilon) \int^{t}_{t-\sigma}      \|\xi(s)-\xi(s\! -\! T)\|^2 d s  \!\\
&- \overbar \alpha^-_a(\|\hat x(t-\sigma)-\hat x(t-\sigma-T)\|),
\end{split}
\end{equation}
where the inequality $\int^{ t}_{t-\sigma}    \|\widetilde Y(t,\overbar \bx_{t_0},v, y^a)\|^2 d s+ \int^{ t}_{t-\sigma}  \|\widetilde Y(t-T,\bx_{t_0},v, y )\|^2 d s\geq  \int^{ t}_{t-\sigma}  \| \widetilde Y(t,\overbar \bx_{t_0},v, y^a)-\widetilde Y(t-T,\bx_{t_0},v, y )\|^2 d s$ is used. Then combining \eqref{eq.dete_t_geq_T_sigma} and \eqref{eq.dete_no_attack_pro}, i.e., $ 1/\sigma \int^{T+\sigma}_{T}    \|\widetilde Y(t-T,\bx_0,v, y)\|^2 d s\leq g_n$, and with the fact $\int^{ t}_{t-\sigma}     \|\widetilde Y(t,\overbar \bx_{t_0},v, y^a)\|^2 ds=\int^{ t}_{t-\sigma}     \|\widetilde Y(t,\overbar \bx_{0},v, y^a)\|^2 ds, \forall t\in[ T+\sigma,2T)$ in mind, we have, for   $t\in[ T+\sigma,2T)$,
\begin{equation}\label{eq.dete_t_geq_T_sigma2}
\begin{split}
 g(t)&\geq    \frac{[\mathcal{L}^-_{\delta 2}]^\xi_{\hat y}-\varepsilon }{\sigma}  \int^{ t}_{t-\sigma}   \|\xi(s)-\xi(s-T)\|^2 d s\\
  &-\frac{\overbar \alpha^-_a(\|\hat x(t-\sigma)-\hat x(t-\sigma-T)\|)}{\sigma} - g_n.
\end{split}
\end{equation}

Now we consider the case $t\in[T, T+\sigma)$. In this case, 
\begin{equation}\label{eq.g_t_leq}
\begin{split}
    g(t)&\!\!=\!\!\frac{1}{\sigma}\!\!\int^{ T}_{t-\sigma}\!\!\|\widetilde Y(t, \bx_{0},v, y)\|^2 ds  \!\!+\frac{1}{\sigma}\int^{ t}_{T}   \!\!\|\widetilde Y(t,\overbar \bx_{0},v, y^a)\|^2 ds. 
\end{split}
\end{equation}
Let $t_0$ and $\tau$ in \eqref{eq.tilde_Y_L_minu} be $t_0= 0$ and  $\tau  = \sigma$. Similar to \eqref{eq.dete_t_geq_T_sigma}, we have 
\begin{equation}\label{eq.tildeY_T_t}
\begin{split}
\int^{ t}_{T} & \!\! \! \|\widetilde Y(t,\overbar \bx_{0},v, y^a)\|^2 ds \!\!  \geq -\!\!   \int^{t }_{T} \!\! \!   \|\widetilde Y(t-T,\bx_{0},v, y)\|^2 d s\\
&+([\mathcal{L}^-_{\delta 2}]^\xi_{\hat y}-\varepsilon) \int^{t}_{T}      \|\xi(s)-\xi(s\! -\! T)\|^2 d s  \!\\
&- \overbar \alpha^-_a(\|\hat x(T)-\hat x(0)\|).
\end{split}
\end{equation}

Substituting \eqref{eq.tildeY_T_t} into \eqref{eq.g_t_leq}, we have
\begin{equation}\label{eq.g_t_leq2}
\begin{split}
 &   g(t) \geq  \frac{1}{\sigma}\!\!\int^{ T}_{t-\sigma}\!\!\|\widetilde Y(t, \bx_{0},v, y)\|^2 ds  \!-\!\frac{1}{\sigma} \!\!\int^{t }_{T} \!\!  \|\widetilde Y(t-T,\bx_{0},v, y)\|^2 d s\\
&+\frac{([\mathcal{L}^-_{\delta 2}]^\xi_{\hat y}\!-\!\varepsilon)}{\sigma} \int^{t}_{T}   \! \!  \|\xi(s)-\xi(s\! -\! T)\|^2 d s \! -\! \frac{\overbar \alpha^-_a(\|\hat x(T)-\hat x(0)\|)}{\sigma}.\\
\end{split}
\end{equation}

By 
$\frac{1}{\sigma}\int^{ T}_{t-\sigma}\|\widetilde Y(t, \bx_{0},v, y)\|^2 ds \geq 0$ and $  \frac{1}{\sigma} \int^{t }_{T}    \|\widetilde Y(t-T,\bx_{0},v, y)\|^2 d s \leq g_n$, signal \eqref{eq.g_t_leq2},    for $t\in[T, T+\sigma)$, satisfies
    \begin{equation}\label{eq.g_t_leq3}
\begin{split}
    g(t)& \geq \frac{[\mathcal{L}^-_{\delta 2}]^\xi_{\hat y}-\varepsilon}{\sigma} \int^{t}_{T}      \|\xi(s)-\xi(s\! -\! T)\|^2 d s  \!\\
&- \frac{\overbar \alpha^-_a(\|\hat x(T)-\hat x(0)\|)}{\sigma}-g_n.\\
\end{split}
\end{equation} 
 
Finally, \eqref{eq.dete_with_attack_pro} is obtained by combining \eqref{eq.dete_t_geq_T_sigma2} and \eqref{eq.g_t_leq3}.

\renewcommand\theequation{E.\arabic{equation}} 
\subsection{ Proof of Proposition \ref{propo.L_infinity2} } \label{proof.L_infinity}

By Lemma \ref{lem.L_inf_ori} and the Schur complement, $[\mathcal{L}^+_{\delta 2}]^\xi_{y'}(G)  \leq  \alpha$ if there exists a   symmetric positive definite matrix $P_s$ and matrix $G$ such that, for all $x\in \mathbb{R}^n $,
    \begin{equation}\label{eq.LMI_L_infty_2}
        \begin{bmatrix}
      \mathcal{N}_{11}'  &P  B  G&(C -D  K)^\top\\
        \star&-\alpha I_m &(D  G)^\top  \\
        \star&\star&- I_n
    \end{bmatrix}\preceq 0
    \end{equation}
    where 
$\mathcal{N}_{11}': = A_x^\top P +P  A_x  -  (B  K)^\top P- PB  K+\epsilon P$ with $\epsilon>0$.
Then  by pre and post multiplying $\text{diag}(P^{-1},I_m,I_n)$ on \eqref{eq.LMI_L_infty_2} and letting $P_s= P^{-1}$, we obtain \eqref{eq.LMI_L_infinityG}.

\renewcommand\theequation{F.\arabic{equation}} 
\subsection{Proof of Proposition  \ref{pro.G_minus}} \label{appendix.prof_G_minus}
(\eqref{eq.LMI_L_minus}$\Rightarrow $\eqref{eq.LMI_L_minusG}): 
By the Schur complement lemma, LMI \eqref{eq.LMI_L_minus} is equivalent to
 \begin{equation*} 
\Pi_2:=  \begin{bmatrix}
            \bm\bar\mathcal{M}_{11}&  ( C - D   K)^\top D   G&-Q \\
      \star& \bm\bar\mathcal{M}_{22} &(B  G-LD  G)^\top\\
      \star&\star&I
       \end{bmatrix} \succeq 0 
       \end{equation*}
      where $ \bm\bar\mathcal{M}_{11}= QQ+  Q  A_x+  A_x^\top   Q- Q    B   K   - (B  K)^\top  Q- QLC  - (LC )^\top Q +QLD  K+(LD  K)^\top Q + ( C - D  K)^\top  (C - D   K) $ and $\bm\bar\mathcal{M}_{22}= (B  G-LD  G)^\top(B  G-LD  G) +G^\top  D  ^\top   D   G-\beta I_m$.
      
    Then, \eqref{eq.LMI_L_minus} implies \eqref{eq.LMI_L_minusG} if choosing $Q_{0}=Q$ and $G_0=G$.

(\eqref{eq.LMI_L_minus}$\Leftarrow $\eqref{eq.LMI_L_minusG}): Inequality \eqref{eq.LMI_L_minusG} can be rewritten as 
\begin{equation*}
    \Pi_2-\text{diag}[\Lambda_1, \Lambda_2,0]\succeq 0
\end{equation*} 
where $\Lambda_1:=(Q-Q_0)(Q-Q_0)$, $\Lambda_2:=(G-G_0)^\top[(B  -LD   )^\top(B  -LD   )+ D  ^\top   D  ](G-G_0)$,
which implies $\Pi_2\succeq 0$.

\renewcommand\theequation{G.\arabic{equation}} 
\subsection{ Proof of Lemma \ref{lem.contraction_1} } \label{proof.contra}

Let $\tilde x: = x -\hat x $. We first verify that  $ \tilde {x}=0$ is exponentially stable  when $\omega=\nu =0$.
Define  $V :=  \tilde {x}^\top R \tilde {x} $, then
\begin{equation*}
 \begin{split}
      \dot V  &= 2 \tilde x^\top R[f(x)+Bu-f(\hat x)-Bu  +L(Cx-C\hat x)].
 \end{split} 
 \end{equation*} 
In addition, \eqref{eq.obsv2} is equivalent to
\begin{equation*}
   (  A_x - LC   )^\top  R+R      ( A_x - LC) \preceq -\epsilon_1 R, 
\end{equation*}
 which implies \cite[Lemma 3.1]{FB-c}
\begin{equation*}
\begin{split}
    &\tilde x^\top R[f(x)-f(\hat x)  +L(Cx-C\hat x)]\leq -\frac{\epsilon_1}{2} \tilde x^\top R\tilde x\\
\end{split}
\end{equation*}
 Exponential stability follows from the fact that
 \begin{equation}\label{eq.observer_contract}
     \dot V\leq - \epsilon_1  V.
 \end{equation}
For the case $\omega\neq 0$ and $\nu\neq 0$, the ISS  property of the error system \eqref{eq.error_observer} follows from \cite[Corollary 3.17]{FB-c}.

Now we verify that the nonlinear plant \eqref{eq.plant} is $\delta$ISS w.r.t. inputs $\tilde x, v,\omega $.  
Let  $\overbar V :=   (x_1-x_2)^\top S  (x_1-x_2)$.  
  Since  { $f\in   \mathcal{C}^1$},  similar to \eqref{eq.f_x}, we have
\begin{equation*}
\begin{aligned}
    &f(x_1)+BK  x_1   - (f(x_2)+BK  x_2)\\
    &= \left( \int_0^1 \frac{\partial f}{\partial x}(\eta) ds \right)(x_1-x_2) - BK(  x_1-  x_2).
\end{aligned}
\end{equation*}

\color{black}

With \eqref{eq.ctrl2} in mind,  the time derivative to $\overbar V$ is
\begin{equation*}
 \begin{split}
      \dot {\overbar V} &\leq  2(x_1-x_2)^\top S  (f(x_1)+BK  x_1   - (f(x_2)+BK  x_2))    \\
&\leq -\epsilon_2 {\overbar V}, 
 \end{split}
\end{equation*}
which proves that $x_1$ exponentially converges to $x_2$  \cite[Theorem 4.10]{khalil2002nonlinear}. Finally, we can obtain that  the nonlinear plant \eqref{eq.plant} is $\delta$ISS w.r.t. inputs $\tilde x,v,\omega $ from \cite[Corollary 3.17]{FB-c}.

\color{black}

\begin{IEEEbiography}[{\includegraphics[width=1in,height=1.25in,clip,keepaspectratio]{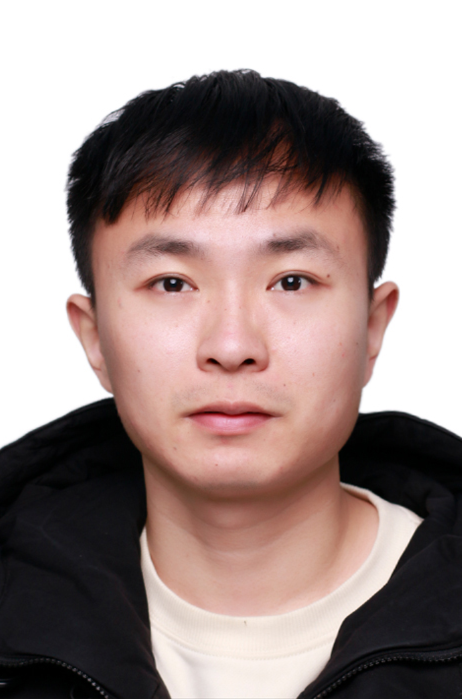}}]{Tao Chen}  received the B.Eng. degree in automation from Shandong Agricultural University, China, in 2018, and M.S. degree in Control Science and Engineering from South China University of Technology, China, in 2021. He is currently pursuing the Ph.D. degree in Control Science and Engineering in Zhejiang University, China. 
   His current research interests include cyber-physical systems, secure estimation, and attack detection.
\end{IEEEbiography}

\begin{IEEEbiography}[{\includegraphics[width=1in,height=1.25in,clip,keepaspectratio]{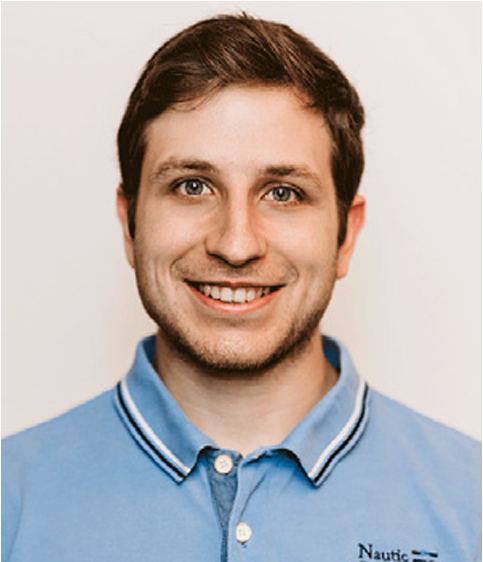}}]{Andreu Cecilia}   received the B.Eng. degree in industrial engineering, the double M.Sc. degree in automatic control/industrial engineering and the Ph.D. in automatic control from the Universitat Polit\`{e}cnica de Catalunya, Barcelona, Spain, in 2017, 2020 and 2022, respectively. In 2022-2023, he worked as a post-doctoral researcher at LAGEPP, Lyon, France. He is currently working as a lecturer at Universitat Polit\`{e}cnica de Catalunya, Barcelona. His research interests include observers, nonlinear system theory and its application to energy systems and cyber-security.
\end{IEEEbiography}

\begin{IEEEbiography}[{\includegraphics[width=1in,height=1.25in,clip,keepaspectratio]{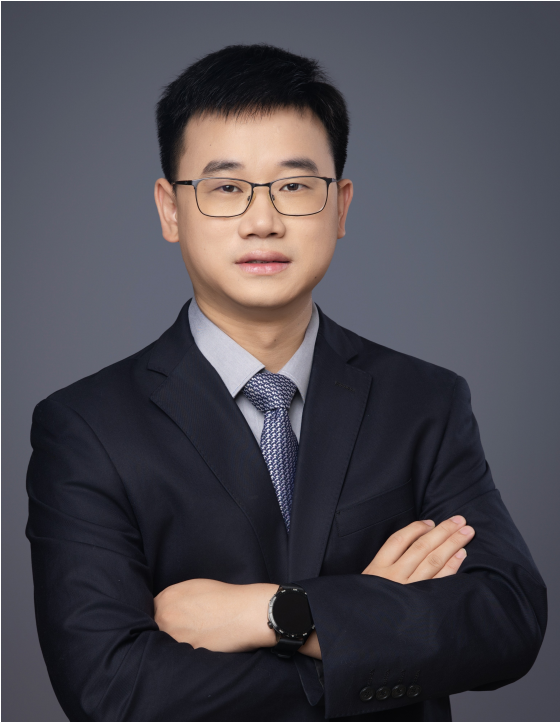}}]{Lei Wang } received the B.Eng. degree in automation from Wuhan University, China, in 2011, and Ph.D. degree in Control Science and Engineering from Zhejiang University, China in 2016. From December 2014 to December 2015, he visited C.A.SY.-DEIS, University of Bologna as a visiting Ph.D. student. 
    Lei held research positions with School of Electrical and Electronic Engineering at Nanyang Technological University, Singapore, School of Electrical Engineering and Computing at University of Newcastle, Australia, and Australian Center for Field Robotics, The University of Sydney, Australia. Since November 2021 he has been a Hundred-Talent Researcher at College of Control Science and Engineering, Zhejiang University, China. Lei serves as an AE of Journal of Control and Decision, and a member of IFAC Technical Committee 2.3 Nonlinear Control Systems, and has served as an IPC member of several conferences. His current research interest lies in the development of nonlinear control theory from nonlinear systems to networked systems, with applications to fuel-cell systems and power systems.
\end{IEEEbiography}
 \begin{IEEEbiography}[{\includegraphics[width=1in,height=1.25in,clip,keepaspectratio]{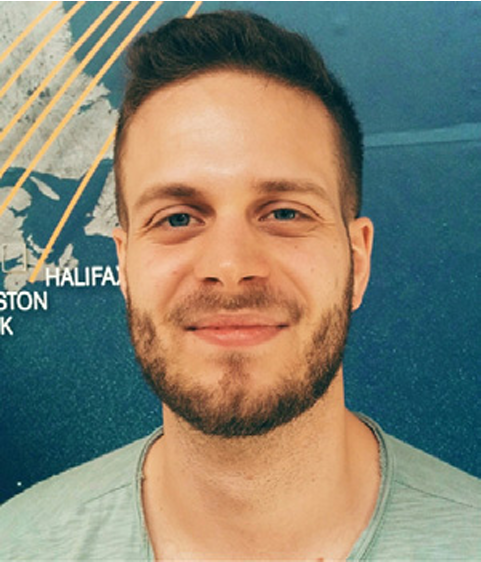}}] {Daniele Astolfi}  received the B.S. and M.S. degrees in automation engineering from the University of Bologna,
 Italy, in 2009 and 2012, respectively. He obtained a joint Ph.D. degree in Control Theory from the University of Bologna, Italy, and from Mines ParisTech, France, in 2016. In 2016 and 2017, he has been a Research Assistant at the University of Lorraine (CRAN), Nancy, France. Since 2018, he is a CNRS Researcher at LAGEPP, Lyon, France. His research interests include observer design, feedback stabilization and output regulation for nonlinear systems, networked control systems, hybrid systems, and multi-agent systems. He serves as an associate editor of the IFAC journal Automatica. He was a recipient of the 2016 Best Italian Ph.D. Thesis Award in Automatica given by Societ\`{a} Italiana Docenti e Ricercatori in Automatica (SIDRA, Italian Society of Professors and Researchers in Automation Engineering).
\end{IEEEbiography}

 \begin{IEEEbiography}[{\includegraphics[width=1in,height=1.25in,clip,keepaspectratio]{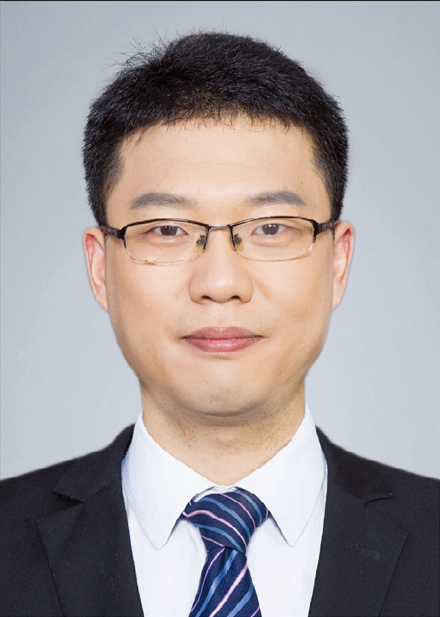}}]{Zhitao Liu }    received the B.S. degree from Shandong University at Weihai, China, in 2005, and the Ph.D. degree in control science and engineering from Zhejiang University, Hangzhou, China, in 2010.	
    From 2011 to 2014, he was a Research Fellow with TUM CREATE, Singapore. He was an Assistant Professor from 2015 to 2016 and an Associate Professor from 2017 to 2021 in Zhejiang University, where he is currently a Professor with the Institute of Cyber-Systems and Control, Zhejiang University. His current research interests include robust adaptive control, wireless transfer systems, and energy management systems.

\end{IEEEbiography}

\end{document}